\keywords{automatic differentiation, software correctness, denotational semantics}
\theoremstyle{plain}
\theoremstyle{definition}
\newtheorem{exampleEnv}{Example}
\newtheorem*{exampleEnv*}{Example}
\newtheorem*{exampleEnv+}{Example \oktheorem@parameter}
\newenvironment{example}{\begin{exampleEnv}}{\qed\end{exampleEnv}}
\newenvironment{example*}{\begin{exampleEnv*}}{\qed\end{exampleEnv*}}
\newenvironment{example+}[1]{\def\oktheorem@parameter{#1}\begin{exampleEnv+}}{\qed\end{exampleEnv+}}
\newcommand\@TyAlph[1]{\ifcase #1\or \tau\or \sigma\or \rho\else \@ctrerr \fi }
\newcommand\ty[1][1]{{\@TyAlph{#1}}}
\newcommand\tvar[1][1]{{\@TyVarAlph{#1}}}
\newcommand\@TyVarAlph[1]{\ifcase #1\or \alpha\or \beta\or \gamma\else \@ctrerr \fi }
\newcommand\var[1][1]{{\@VarAlph{#1}}}
\newcommand\@VarAlph[1]{\ifcase #1\or x\or y\or z\or u\or v\or w\else \@ctrerr \fi }
\newcommand\trm[1][1]{{\@TermAlph{#1}}}
\newcommand\@TermAlph[1]{\ifcase #1\or t\or s\or r\else \@ctrerr \fi }
\newcommand\val[1][1]{\ifcase #1\or v\or w\or u\else \@ctrerr \fi }
\newcommand\op[1][1]{\ifcase #1\or \mathsf{op}\or \mathsf{op}'\or \mathsf{op}''\else \@ctrerr \fi }
\newcommand\Op{\mathsf{Op}}
\newcommand\cnst{\underline{c}}
\newcommand\sigmoid{\varsigma}
\newcommand\bp[1]{\boldsymbol{(}#1\boldsymbol{)}}
\newcommand\tPair[2]{\langle #1, #2\rangle}
\newcommand\tTriple[3]{\langle #1, #2, #3\rangle}
\newcommand\tTuple[1]{\langle #1\rangle}
\newcommand\tInj[3][\,]{#2.#3#1}
\newcommand\Cns{\ell}
\newcommand\Inj[2][\,]{\mathsf{#2}#1}
\newcommand\tNothingSym{\mathsf{Nothing}}
\newcommand\tJustSym{\mathsf{Just}}
\newcommand\tNil{[\,]}
\newcommand\tCons[2]{#1::#2}
\newcommand\fun[1]{\lambda #1.}
\newcommand\letin[3]{\mathbf{let}\,#1=\,#2\,\mathbf{in}\,#3}
\newcommand\pMatch[5][\,]{\mathbf{case}\,#2\,\mathbf{of}#1\tPair{#3}{#4}\To#5}
\newcommand\tMatch[4][\,]{\mathbf{case}\,#2\,\mathbf{of}#1\tTuple{#3}\To#4}
\newcommand\vMatch[3][\,]{\mathbf{case}\,#2\,\mathbf{of}#1\{#3\}}
\newcommand\lFold[5]{\mathbf{fold}\, (#1,#2).#3\,\mathbf{over}\,#4\,\mathbf{from}\,#5}
\newcommand\ctx{\Gamma}
\newcommand\tinf{\vdash}
\newcommand\Ginf[3][]{\ctx #1\tinf #2 : #3}
\newcommand\subst[2]{#1{}[#2]}
\newcommand\sfor[2]{^{#2}\!/\!_{#1}}
\newcommand\reals{\mathbf{real}}
\newcommand\Unit{\bp{\,}}
\newcommand\Variant[1]{\{ #1 \}}
\newcommand\t*{\boldsymbol{\mathop{*}}}
\newcommand\MaybeSym{\mathbf{maybe}}
\newcommand\ListSym{\mathbf{list}}
\newcommand\Maybe[1]{\MaybeSym(#1)}
\newcommand\List[1]{\ListSym(#1)}
\newcommand\To{\to}
\newcommand\bProd[2]{\bp{#1 \t* #2}}
\newcommand\tProd[3]{\bp{#1 \t* #2 \t* #3}}
\newcommand\Dsynsymbol[1][]{\scalebox{0.8}{$\overrightarrow{\mathcal{D}}$}_{#1}}\newcommand\Dsyn[2][]{\Dsynsymbol[#1](#2)}
\newcommand\Syn{\mathbf{Syn}}
\newcommand\oframebox[1]{\framebox{#1}}
\newcommand\tFromMaybe[1]{\mathrm{fromMaybe}}\newcommand\tFromMayben[2]{\mathrm{fromMaybe}^{#2}}
\newcommand\tMap[2]{\mathrm{map}}
\newcommand{\plots}[1]{\mathcal{P}_{#1}}
\newcommand\freeeq[1]{\stackrel{\# #1}{=}}
\definecolor{shade}{RGB}{223,223,223}
\definecolor{unshade}{RGB}{255,255,255}
\newtcbox{\shadebox}{on line,arc=1pt, outer arc=2pt,colback=shade,colframe=shade,boxsep=0pt,left=1pt,right=1pt,top=2pt,bottom=2pt,boxrule=0pt,bottomrule=1pt,toprule=1pt}
\newtcbox{\unshadebox}{on line,arc=1pt, outer arc=2pt,colback=unshade,colframe=shade,boxsep=0pt,left=1pt,right=1pt,top=2pt,bottom=2pt,boxrule=0pt,bottomrule=1pt,toprule=1pt}
\newcommand\syncat[1]{\mspace{-25mu}\synname{#1}}
\newcommand\synname[1]{\qquad\text{#1}}
\newenvironment{syntax}[1][]{\(
  \rowcolors{100}{white}{white}\begin{array}[t]{#1l@{\quad\!\!}*3{l@{}}@{\,}l}
}{
\end{array}
\)}
\newcommand\gdefinedby{::=}
\newcommand\gor{\mathrel{\lvert}}
\newcommand\vor{\mathrel{\big\lvert}}
\newcommand{\dif}{\mathop{}\!\mathrm{d}}
\newcommand{\Diff}{\mathbf{Diff}}
\newcommand{\CartSp}{\mathbf{CartSp}}
\newcommand{\Man}{\mathbf{Man}}
\newcommand{\Set}{\mathbf{Set}}
\newcommand{\sem}[1]{\llbracket #1\rrbracket}
\newcommand{\semgl}[1]{\llparenthesis #1\rrparenthesis}
\newcommand{\RR}{\mathbb{R}}
\newcommand{\NN}{\mathbb{N}}
\newcommand\catC{\mathcal{C}}
\newcommand\freeF{F}
\newcommand\Dsemsymbol[1][]{\mathcal{J}^{#1}}\newcommand\Dsem[2][]{\Dsemsymbol[#1](#2)}
\newcommand\evRsymbol[1][]{\mathrm{evR}^{#1}}
\newcommandtwoopt\evR[3][][]{\evRsymbol[#2]_{#1}(#3)}
\newcommand\lamRsymbol[1][]{\mathrm{lamR}^{#1}}
\newcommandtwoopt\lamR[3][][]{\lamRsymbol[#2]_{#1}(#3)}
\newcommand{\Gl}[1][]{\mathbf{Gl}_{#1}}
\newcommand{\sPair}[2]{( #1, #2 )}
\newcommand{\sTuple}[1]{(#1)}
\newcommand\id[1][{}]{{\rm id}_{#1}}
\newcommand\projf{\mathrm{proj}}
\newcommand\xto\xrightarrow
\newcommand\DtoT[2][]{\phi_{#2#1}^{\Dsynsymbol\Dsemsymbol}}
\newcommand\ev{\mathrm{ev}}
\newcommand\sort{\mathrm{sort}}
\newcommand\swap{\mathrm{swap}}
\newcommand\innerprod[1]{\cdot_{#1}}
\newcommand\y{\mathbf{y}}
\newcommand{\set}[1]{\left\{#1\right\}}
\newcommand{\cchoose}[2]{{#1\choose #2 }}
\newcommand\seq[2][]{\left(#2\right)_{#1}}
\newcommand\coseq[2][]{\left[#2\right]_{#1}}
\newcommand\cover{\mathcal{U}}
\renewcommand\lim{\mathrm{lim}}
\newcommand{\defeq}{\stackrel {\mathrm{def}}=}
\begin{document}

\title[Higher-Order Automatic Differentiation of Higher-Order Functions]{Higher-Order Automatic Differentiation of Higher-Order Functions}

\author[M.~Huot]{Mathieu Huot}	\address{University of Oxford}	\thanks{The authors contributed equally to this work.}	

\author[S.~Staton]{Sam Staton}	\address{University of Oxford}	

\author[M.~V\'ak\'ar]{Matthijs V\'ak\'ar}	\address{Utrecht University}	

\begin{abstract}
  \noindent We present semantic correctness proofs of automatic differentiation (AD).
  We consider a forward-mode AD method on a higher-order language with algebraic data types, and we characterise it as the
  unique structure-preserving macro given a choice of derivatives for basic
  operations. We describe a rich semantics for differentiable programming,
  based on diffeological spaces. We show that it interprets our language,
  and we phrase what it means for the AD method to be correct with respect to this semantics.
  We show that our characterisation of AD gives
  rise to an elegant semantic proof of its correctness based on a gluing
  construction on diffeological spaces. We explain how this is, in essence,
  a logical relations argument. Throughout, we show how the analysis extends
  to AD methods for 
  computing higher-order derivatives using a Taylor approximation.
\end{abstract} 
\maketitle

\section{Introduction}
Automatic differentiation (AD), loosely speaking, is the process of taking a program describing a function, and constructing the derivative of that function by applying the chain rule across the program code.
As gradients play a central role in many aspects of machine learning, so too do automatic differentiation systems such as TensorFlow~\cite{abadi2016tensorflow}, PyTorch~\cite{paszke2017automatic}, or Stan~\cite{carpenter2015stan}.

\begin{figure}[b]
    \[
	\xymatrix@C+2mm{
          *+[F]{\txt{Programs}}
          \ar[d]_-{\txt{\footnotesize denotational\\\footnotesize semantics}} \ar[rr]^{\txt{\footnotesize automatic\\\footnotesize differentiation}}
		 && *+[F]{\txt{Programs}} \ar[d]^-{\txt{\footnotesize denotational\\\footnotesize semantics}}\\
		 *+[F]{\txt{Differential\\geometry}}\ar[rr]^{\txt{\footnotesize math\\\footnotesize differentiation}}
		 && *+[F]{\txt{Differential\\geometry}}
	}
  \]
  \caption{Overview of semantics/correctness of AD.\label{fig:intro}}
\end{figure}
Differentiation has a well-developed mathematical theory in terms of differential geometry.
The aim of this paper is
to formalize this connection between differential geometry
and the syntactic operations of AD, particularly for AD methods that calculate higher-order derivatives. 
In this way, we achieve two things: (1)~a compositional, denotational understanding of differentiable programming and AD; (2)~an explanation of the correctness of AD.

This intuitive correspondence (summarized in Fig.~\ref{fig:intro}) is in fact rather complicated.
In this paper, we focus on resolving the following problem: higher-order functions play a key role in programming, and yet they have no counterpart in traditional differential geometry. Moreover, we resolve this problem while retaining the compositionality of denotational semantics.
  
\subsubsection{Higher-order functions and differentiation.}
A major application of higher-order functions is to support disciplined code reuse.
The need for code reuse is particularly acute in machine learning.
For example,
a multi-layer neural network might be built from millions of near-identical neurons,
as follows. 
\newcommand{\neuron}{\mathrm{neuron}}
\newcommand{\layer}{\mathrm{layer}}
\newcommand{\compose}{\mathrm{comp}}
\[\begin{array}{ll}\begin{aligned}
  &\neuron_n:\bProd{\reals^n}{\bProd{\reals^n}{\reals}}\To\reals
  \\&\neuron_n\defeq \lambda \tTuple{x,\tTuple{w,b}}.\,\sigmoid( w\cdot x+b) 
  \\
  &\layer_n:(\bProd{\ty_1}{P}\To \ty_2)\To \bProd{\ty_1}{P^n}\To \ty_2^n
  \\
  &\layer_n\defeq\lambda f.\,\lambda\tTuple{x,\tTuple{p_1,\dots, p_n}}.\,
  \tTuple{f\tTuple{x,p_1},\dots,f\tTuple{x,p_n}}
  \\
  &\compose : \bProd{(\bProd{\ty_1}{P}\To \ty_2)}{(\bProd{\ty_2}{Q}\To \ty_3)}\To \bProd{\ty_1}{\bProd{P}{Q}}\To \ty_3
    \\
  &\compose\defeq\lambda \tTuple{f,g}.\,\lambda\tTuple{x,(p,q)}.\,g\tTuple{f\tTuple{x,p},q}
\end{aligned}
    &\hspace{-16mm}
\raisebox{-8mm}[0pt]{\begin{tikzpicture}[xscale=0.5,yscale=0.7]
      \datavisualization [scientific axes=clean,
                    y axis=grid,visualize as smooth line,
                    y axis={label={$\sigmoid(x)$},ticks={step=0.5}},
                    x axis={label, ticks={
                                        step = 5}} ]
data [format=function] {
      var x : interval [-9:9] samples 100;
      func y = 1/(1 + exp(-\value x));
      };
      \end{tikzpicture}}
\end{array}\]
(Here $\sigmoid(x) \defeq\frac 1 {1+e^{-x}}$ is the sigmoid function, as illustrated.)
We can use these functions to build a network as follows (see also Fig.~\ref{fig:network}):
\begin{equation}\label{eqn:network}
  \compose\tTuple{\layer_m(\neuron_k),\compose\tTuple{\layer_n(\neuron_m),\neuron_n}}
  :\bProd{\reals^k}{P}\to\reals
\end{equation}
\begin{wrapfigure}[12]{r}{0.35\linewidth} 
\qquad 
\tikzset{every neuron/.style={
    circle,
    draw,
    minimum size=4mm
  },
  neuron missing/.style={
    draw=none, 
    scale=1,
    text height=0.01cm,
    execute at begin node=\color{black}$\cdots$
  },
}

\begin{tikzpicture}[x=1.5cm, y=1.5cm, >=stealth, rotate=90,xscale=-0.3,yscale=0.5]

\foreach \m/\l [count=\y] in {1,2,3,missing,4}
  \node [every neuron/.try, neuron \m/.try] (input-\m) at (0,2.5-\y) {};

\foreach \m [count=\y] in {1,2,missing,3}
  \node [every neuron/.try, neuron \m/.try ] (hidden-\m) at (2,2-\y*1.25) {};

\foreach \m [count=\y] in {1,2,missing,3}
  \node [every neuron/.try, neuron \m/.try ] (hiddenb-\m) at (4,2-\y*1.25) {};

\foreach \m [count=\y] in {1}
  \node [every neuron/.try, neuron \m/.try ] (output-\m) at (6,1.5-\y) {};

\foreach \l [count=\i] in {1,2,3,k}
  \draw [<-] (input-\i) -- ++(-1,0)
    node at (input-\i) {$\l$};

\foreach \l [count=\i] in {1,2,m}
  \node at (hidden-\i) {$\l$};

\foreach \l [count=\i] in {1,2,n}
  \node at (hiddenb-\i) {$\l$};

\foreach \l [count=\i] in {1}
  \draw [->] (output-\i) -- ++(1,0)
    node [above, midway] {};

\foreach \i in {1,...,4}
  \foreach \j in {1,...,3}
    \draw [->] (input-\i) -- (hidden-\j);

\foreach \i in {1,...,3}
  \foreach \j in {1,...,3}
    \draw [->] (hidden-\i) -- (hiddenb-\j);

\foreach \i in {1,...,3}
  \foreach \j in {1}
    \draw [->] (hiddenb-\i) -- (output-\j);

\end{tikzpicture}

 \caption{The network in~\eqref{eqn:network} with $k$ inputs and two hidden layers.\label{fig:network}}\end{wrapfigure}
Here $P\cong \reals^p$ with $p=(m(k{+}1){+}n(m{+}1){+}n{+}1)$. 
This program~\eqref{eqn:network} describes a smooth (infinitely differentiable) function.
The goal of automatic differentiation is to find its derivative.

If we $\beta$-reduce all the $\lambda$'s, we obtain a very long function expression built only from the sigmoid function and linear algebra. We can then find a program for calculating its derivative by applying the chain rule.
However, automatic differentiation can also be expressed without first $\beta$-reducing,
in a compositional way, by explaining how higher-order functions, such as $(\layer)$ and $(\compose)$, propagate derivatives. This paper is a semantic analysis of this compositional approach.

The general idea of denotational semantics is to interpret types as spaces and programs as functions between spaces. In this paper, we propose to use
diffeological spaces and smooth functions~\cite{souriau1980groupes,iglesias2013diffeology} to this end.
These satisfy the following three desiderata:
\begin{itemize}
\item $\RR$ is a space, and the smooth functions $\RR\to\RR$ are exactly the infinitely differentiable functions;
\item The set of smooth functions $X\to Y$ between spaces again forms a space,
  so we can interpret function types.
\item The disjoint union of a sequence of spaces again forms a space, and this enables us to interpret variant types and inductive types, 
e.g. lists of reals form the space $\biguplus_{i=0}^{\infty}\RR^i$.
\end{itemize}
We emphasise that the most standard formulation of differential geometry, using manifolds, does not support spaces of functions. Diffeological spaces seem to us to be the simplest notion of space that satisfies these conditions, but there are other candidates~\cite{baez2011convenient,smootheology}.
A diffeological space is, in particular, a set $X$ equipped with a chosen set of curves
$C_X\subseteq X^\RR$;
a smooth map $f:X\to Y$ must be such that if $\gamma\in C_X$ then
$\gamma;f\in C_Y$. 
This is reminiscent of the method of logical relations.

\subsubsection{From smoothness to automatic derivatives at higher types.}
Our denotational semantics in diffeological spaces
guarantees that all definable functions are smooth.
However, we need more than just to know that a definable function happens to have a mathematical derivative: we need to be able to find that derivative.

In this paper, we focus on forward-mode automatic differentiation methods for computing higher derivatives, which are macro translations of syntax (called~$\Dsynsymbol$ in Section~\ref{sec:simple-language}). 
We are able to show that they are correct, using our denotational semantics.

Here there is one subtle point that is central to our development.
Although differential geometry provides established derivatives for first-order functions
(such as $\neuron$ above),
there is no canonical notion of derivative for higher-order functions (such as $\layer$ and $\compose$) 
in the theory of diffeological spaces (e.g.~\cite{christensen2014tangent}). 
We propose a new way to resolve this by interpreting types as triples $(X,X',S)$ where, intuitively, $X$ is a space of inhabitants of the type, $X'$ is a space serving as a chosen bundle of tangents (or jets, in the case of higher-order derivatives) over $X$, and $S\subseteq X^\RR\times X'^\RR$ is a binary relation between curves, informally relating curves in $X$ with their tangent (resp. jet) curves in $X'$.
This new model gives a denotational semantics for higher-order automatic differentiation on a language with higher-order functions. 

In Section~\ref{sec:semantics} we boil this new approach down to a straightforward and elementary logical relations argument for the correctness of higher-order automatic differentiation. The approach is explained in detail in Section~\ref{sec:correctness}. 
We explore some subtleties of non-uniqueness of derivatives of higher-order functions in Section~\ref{derivatives-at-higher-types}.

\subsubsection{Related work and context.}
AD has a long history and has many implementations.
AD was perhaps first formulated in a functional setting in~\cite{pearlmutter2008reverse}, and there are now a number of teams working on AD in the functional setting 
(e.g.~\cite{wang2018demystifying,shaikhha2019efficient,elliott2018simple}), some providing efficient implementations. 
Although that work does not involve formal semantics, it is inspired by intuitions from differential geometry and category theory. 

This paper adds to a very recent body of work on verified automatic differentiation.
In the first-order setting,
there are recent accounts based on denotational semantics in manifolds~\cite{fong2019backprop, lee2020correctness} and on synthetic differential geometry~\cite{gallagher-sdg},
work on categorical abstractions~\cite{rev-deriv-cat2020}, and
work connecting operational semantics 
with denotational semantics~\cite{abadi-plotkin2020,plotkin-invited-talk},
as well as work focusing on how to correctly differentiate programs that operate on tensors~\cite{bernstein2020differentiating} and programs that use quantum computing~\cite{zhu2020principles}.
Recently, there has also been significant progress at higher types. Brunel et al.~\cite{brunel2019backpropagation} and 
Mazza and Pagani~\cite{mazza2020automatic} give formal correctness proofs for reverse-mode derivatives on a linear $\lambda$-calculus with a particular operational semantics.
The work of Barthe et al.~\cite{bcdg-open-logical-relations} provides a general discussion of some new syntactic logical relations arguments, including one very similar to our syntactic proof of Theorem~\ref{thm:fwd-cor-basic}. 
Sherman et al.~\cite{sherman2020lambda_s} discuss a differential programming technique that works 
at higher types, based on exact real arithmetic, and relate it to a computable semantics.
We understand that the authors of~\cite{gallagher-sdg} are working on higher types.
V\'ak\'ar~\cite{vakar2021reverse,vakar2021chad,lucatelli2021chad} formulates and proves correct a reverse-mode AD technique on a higher-order language based on a similar gluing technique. 
V\'ak\'ar~\cite{vakar2020denotational} extends a standard $\lambda$-calculus with type recursion, and proves correct a forward-mode AD on such a higher-order language, also using a gluing argument.

The differential $\lambda$-calculus~\cite{ehrhard2003differential} is related to AD, and explicit connections are made in~\cite{mak-ong2020,Manzyuk2012}. 
One difference is that the differential $\lambda$-calculus allows the addition of terms at all types, 
and hence vector space models are suitable to interpret all types.
This choice would appear unusual for the variant and inductive types that we consider here, 
as the dimension of a disjoint union of spaces is only defined locally.

This paper builds on our previous work~\cite{huot2020correctness,vakar2020denotational}
in which we gave denotational correctness proofs for forward-mode AD algorithms for computing first derivatives.
Here, we explain how these techniques extend to methods that calculate higher derivatives. 

The Faà di Bruno construction has also been investigated in the context of Cartesian differential categories~\cite{cockett2011faa}.

The idea of directly computing higher-order derivatives using automatic differentiation methods that work with Taylor approximations (also known as jets in differential geometry) is well-known~\cite{griewank2000evaluating}, and it has recently gained renewed interest~\cite{betancourt2018geometric,bettencourt2019taylor}.
So far, such ``Taylor-mode AD'' methods have only been applied to first-order functional languages.
This paper shows how to extend these higher-order AD methods to languages with support for higher-order functions 
and algebraic data types.

The two main methods for implementing AD are operator overloading and source-code transformation; we use the latter in this paper~\cite{van2018automatic}. Taylor-mode AD has been shown to be significantly faster than iterated AD in the context of operator overloading in JAX~\cite{bettencourt2019taylor,frostig2018compiling}. There are other notable implementations of forward Taylor-mode AD~\cite{bendtsen1996fadbad,bendtsen1997tadiff,karczmarczuk2001functional,pearlmutter2007lazy,wang2016capitalizing}. Some of them are implemented in functional languages~\cite{karczmarczuk2001functional,pearlmutter2007lazy}. Taylor-mode implementations use the rich algebraic structure of derivatives to avoid many redundant computations in iterated first-order methods by sharing intermediate results. Perhaps the simplest example is the sine function, whose iterated derivatives only involve sin, cos, and negation. Importantly, most AD tools have the right complexity up to a constant factor, but this constant is quite important in practice and Taylor-mode helps achieve better performance. Another striking result of a version of Taylor-mode was achieved in~\cite{laue2018computing}, where a performance gain of up to two orders of magnitude was achieved for computing certain Hessian-vector products using Ricci calculus. In essence, the algorithm used is a mixed-mode algorithm derived via jets in~\cite{betancourt2018geometric}. This is further improved in~\cite{laue2020simple}.
Taylor-mode can also be useful for ODE solvers and hence will be important for neural differential equations~\cite{chen2018neural}. 

Finally, we emphasise that we have chosen the neural network~(\ref{eqn:network})
as our running example mainly for its simplicity. Indeed, one would typically use reverse-mode AD to train neural networks in practice.
There are many other examples of AD outside the neural networks literature:
AD is useful whenever derivatives need to be calculated in high-dimensional spaces. This includes optimization problems more generally, where the derivative is passed to a
gradient descent method (e.g.~\cite{robbins1951stochastic,kiefer1952stochastic,qian1999momentum,kingma2014adam,duchi2011adaptive,liu1989limited}).
Optimization problems involving higher-order functions naturally show up in the calculus of variations and its applications in physics, where one typically looks for a function minimizing a certain integral~\cite{gelfand2000calculus}.
Other applications of AD are in advanced \emph{integration} methods, since derivatives play a role in 
Hamiltonian Monte Carlo~\cite{neal2011mcmc,hoffman2014no} and variational inference~\cite{kucukelbir2017automatic}. 
Second-order methods for gradient descent have also been extensively studied. As the basic second-order Newton method requires inverting a high-dimensional Hessian matrix, several alternatives and approximations have been studied. 
Some of them still require Taylor-like modes of differentiation and require a matrix-vector product where the matrix resembles the Hessian or inverse Hessian~\cite{knoll2004jacobian, martens2010deep, amari2012differential}. 

\subsubsection{Summary of contributions.}
We have provided a semantic analysis of higher-order automatic differentiation. 
Our syntactic starting point is higher-order forward-mode AD macros on a typed higher-order language 
that extend their well-known first-order counterparts (e.g.~\cite{shaikhha2019efficient,wang2018demystifying,huot2020correctness}). We present these in Section~\ref{sec:simple-language} 
for function types, and in Section~\ref{sec:extended-language} we extend them to inductive types and variants. 
The main contributions of this paper are as follows. 
\begin{itemize}
\item We give a denotational semantics for the language in diffeological spaces, showing that every definable expression is smooth (Section~\ref{sec:semantics}).
\item We show correctness of the higher-order AD macros by a logical relations argument (Theorem~\ref{thm:fwd-cor-basic}).
\item We give a categorical analysis of this correctness argument with two parts: a universal property satisfied by the macro in terms of syntactic categories, and a new notion of glued space that abstracts the logical relation (Section~\ref{sec:correctness}).
\item We then use this analysis to state and prove a correctness argument at all first-order types (Theorem~\ref{thm:fwd-cor-full}). 
\end{itemize}

\subsubsection*{Relation to previous work}
This paper extends and develops the paper~\cite{huot2020correctness} presented at 
the 23rd International Conference on Foundations of Software Science and Computation Structures (FoSSaCS 2020).
This version includes numerous elaborations, 
notably extensions of the definitions, semantics, and correctness proofs for automatic differentiation methods for computing higher-order derivatives (introduced in Sections~\ref{sub:hod}--\ref{sub:hod-ex}) 
and a novel discussion of derivatives of higher-order functions (Section~\ref{derivatives-at-higher-types}).

\section{Rudiments of differentiation: how to calculate with dual numbers and Taylor approximations}\label{sec:dual-numbers-taylor}

\subsection{First-order differentiation: the chain rule and dual numbers.}
We now recall the definition of the gradient of a differentiable function, the goal of AD, and what it means for AD to be correct.
Recall that the derivative of a function $f:\RR\to \RR$, if it exists, is a function
$\nabla f:\RR\to \RR$ such that for all $a$, $\nabla f(a)$ is the gradient of $f$ at $a$ in the sense that the function
$x\mapsto f(a) +\nabla f(a)\cdot (x-a)$ gives the best linear approximation of $f$ at $a$.
(The gradient $\nabla f(a)$ is often written $\frac {\dif f(x)}{\dif x}(a)$.)

The chain rule for differentiation tells us that we can calculate $\nabla (f;g)(a) = \nabla f(a)\cdot \nabla g(f(a))$.
In that sense, the chain rule tells us how linear approximations to a function transform under post-composition with another function. 

To find $\nabla f$ in a compositional way, using the chain rule, two generalizations are useful:
\begin{itemize}
\item We need both $f$ and $\nabla f$ when calculating $\nabla (f;g)$
of a composition $f;g$, using the chain rule, so we are really interested in the pair $(f,\nabla f):\RR\to \RR\times \RR$;
\item In building $f$ we will need to consider functions of multiple arguments, such as $+:\RR^2\to \RR$, and these functions should propagate derivatives.
\end{itemize}
Thus we are more generally interested in transforming a function $g:\RR^n\to \RR$ into a function
$h:(\RR\times \RR)^n\to \RR\times \RR$ in such a way that for any
$f_1\dots f_n:\RR\to\RR$, 
\begin{equation}
  \label{eqn:dualnumber}
  (f_1,\nabla f_1,\dots, f_n,\nabla f_n);h
  =
  ((f_1,\dots, f_n);g,\nabla ((f_1, \dots, f_n);g))\text.
\end{equation}

Automatically computing a program representing $h$, given a program representing $g$, is the goal of automatic differentiation.
An intuition for $h$ is often given in terms of dual numbers.
The transformed function operates on pairs of numbers, $(x,x')$, and it is common
to think of such a pair as $x+x'\epsilon$ for an `infinitesimal' $\epsilon$.
Although this is a helpful intuition, the formalization of infinitesimals can be intricate, 
and the development in this paper is focused on the elementary formulation in~\eqref{eqn:dualnumber}.

A function $h$ satisfying (\ref{eqn:dualnumber}) encodes all the partial derivatives of
$g$. For example, 
if $g \colon \RR^2\to \RR$, then with $f_1(x)\defeq x$ and $f_2(x)\defeq x_2$, by applying \eqref{eqn:dualnumber} to $x_1$ we obtain
$h(x_1,1,x_2,0)\!=\!(g(x_1,x_2), \frac {\partial g(x,x_2)}{\partial x}(x_1))$
and similarly 
$h(x_1,0,x_2,1)\!=\!(g(x_1,x_2), \frac {\partial g(x_1,x)}{\partial x}(x_2))$.
Conversely, if $g$ is differentiable as a function $\RR^2\to\RR$, then
a unique $h$ satisfying \eqref{eqn:dualnumber} can be found by taking linear
combinations of partial derivatives, for example:
\[\textstyle h(x_1,x_1',x_2,x_2')=(g(x_1,x_2),x_1' \cdot\frac {\partial g(x,x_2)}{\partial x}(x_1)+x_2'\cdot \frac {\partial g(x_1,x)}{\partial x}(x_2))\text.\]
(Here, recall that the partial derivative  $\frac{\partial g(x,x_2)}{\partial x}(x_1)$
is a particular notation for the gradient $\nabla(g(-,x_2))(x_1)$, i.e.~with $x_2$ fixed. 
)

In summary, the idea of differentiation with dual numbers is 
to transform a differentiable function
$g:\RR^n\to \RR$ to a function $h:\RR^{2n}\to \RR^2$ that captures~$g$ and all its partial derivatives. We express this in~\eqref{eqn:dualnumber} as an invariant that is useful for building derivatives of compound functions $\RR\to\RR$ in a compositional way.
The idea of (first-order) forward-mode automatic differentiation is to perform this transformation at the source-code level. 

We say that a macro for AD is correct if, given a semantic model $\sem{-}$, 
the program $P$ representing $g=\sem{P}$ is transformed by the macro to a program $P'$ representing $h=\sem{P'}$.
This means in particular that $P'$ computes correct partial derivatives of the function represented by $P$.

\subsubsection*{Smooth functions.}
In what follows, we will often speak of \emph{smooth} functions $\RR^k\to\RR$, namely functions that are continuous and differentiable, with derivatives that are also continuous and differentiable, and so on. 

\subsection{Higher-order differentiation: the Fa\`a di Bruno formula and Taylor approximations.}
\label{sub:hod}
We now generalize the above in two directions:
\begin{itemize}
\item We look for the best local approximations to $f$ with polynomials of some order~$R$, generalizing the above use of linear functions ($R=1$).
\item We can work directly with multivariate functions $\RR^k\to \RR$ instead of functions of one variable $\RR\to\RR$ ($k=1$).
\end{itemize}
To make this precise, we recall that, given a smooth function 
$f:\RR^k\to \RR$ and a natural number $R\geq 0$, the \emph{$R$-th order Taylor approximation of $f$  at ${a\in \RR^k}$} is defined in terms of the partial derivatives 
of $f$:
\begin{align*}
\RR^k\;\;\quad&\to\qquad\RR\\
{x}\qquad&\mapsto\hspace{-30pt} \sum_{\set{(\alpha_1,\ldots,\alpha_k)\in \NN^k\mid \alpha_1+\ldots+\alpha_k\leq R}}{\frac{1}{\alpha_1!\cdot \ldots\cdot \alpha_k!}
\frac{\partial^{\alpha_1+\ldots+\alpha_k}f(x)}{\partial x_1^{\alpha_1}\cdots\partial x_k^{\alpha_k}}(a)}\cdot
(x_1-a_1)^{\alpha_1}\cdot\ldots\cdot (x_k-a_k)^{\alpha_k}.
\end{align*}
This is an $R$-th order polynomial. 
Similarly to the case of first-order derivatives, the partial derivatives of $f$ up to order~$R$ can be read off from the coefficients of its Taylor approximation, up to the displayed factorial factors. See Section~\ref{sub:two-dim-taylor} below for an example.

Recall that the ordering of partial derivatives does not matter for smooth functions (Schwarz/Clairaut's theorem). 
So there will be $\binom {R+k-1} {k-1}$ $R$-th order partial derivatives, and altogether there are $\binom {R+k} k$ summands in the $R$-th order Taylor approximation.
(This can be seen by a `stars-and-bars' argument.)

Since there are ${\cchoose{R+k}k}$ partial derivatives of $f$ of order $\leq R$, we can store them 
in the Euclidean space $\RR^{\cchoose{R+k}k}$, which can also be regarded as the space of $k$-variate polynomials of degree~$\leq R$.

We use a convention of coordinates $\seq[(\alpha_1,\ldots,\alpha_k)\in\set{(\alpha_1,\ldots,\alpha_k)\in \NN^k\mid 0\leq \alpha_1+\ldots+\alpha_k\leq R}]{y_{\alpha_1...\alpha_k}\in \RR}$ where
$y_{\alpha_1\ldots \alpha_k}$ is intended to represent a partial derivative $\frac{\partial^{\alpha_1+...+\alpha_k}f}{\partial x_1^{\alpha_1}\cdots \partial x_k^{\alpha_k}}(a)$ for some function $f:\RR^k\to \RR$.
We will choose these coordinates in lexicographic order of the multi-indices $(\alpha_1,\ldots,\alpha_k)$, that is, the indices in the Euclidean space $\RR^{\cchoose {R+k}k}$ will typically range from $(0,\ldots,0)$ to $(R,0,\ldots,0)$.

The \emph{$(k,R)$-Taylor representation} of a function $g:\RR^n\to \RR$ is a function
$h: \left(\RR^{\cchoose{R+k}k}\right)^n\to \RR^{\cchoose{R+k}k}$ that transforms the 
partial derivatives of $f:\RR^k \to \RR^n$ of order $\leq R$ under postcomposition with $g$:
\begin{equation}
  \label{eqn:taylorrepresentation}
  \resizebox{\linewidth}{!}{\parbox{1.1\linewidth}{\[
  {\left({\left(
    \frac{\partial^{\alpha_1+\ldots+\alpha_k}f_j(x)}{\partial x_1^{\alpha_1}\cdots \partial x_k^{\alpha_k}}\right)}_{(\alpha_1,...,\alpha_k)=(0,...,0)}^{(R,0,...,0)}\right)}_{j=1}^n;h
  =
  {\left(
    \frac{\partial^{\alpha_1+\ldots+\alpha_k}((f_1,\ldots,f_n);g)(x)}{\partial x_1^{\alpha_1}\cdots \partial x_k^{\alpha_k}}\right)}_{(\alpha_1,...,\alpha_k)=(0,...,0)}^{(R,0,...,0)}\hspace{-12pt}
  \text.\]}}
\end{equation}
Thus the Taylor representation generalizes the dual numbers representation ($R=k=1$).

To calculate the Taylor representation for a smooth function explicitly, we recall a generalization of the chain rule to higher derivatives. The chain rule tells us how the coefficients of linear approximations transform under composition of functions. 
The \emph{Fa\`a di Bruno formula}~\cite{savits2006some,encinas2003short,constantine1996multivariate}
tells us how the coefficients of Taylor approximations
-- that is, higher derivatives -- transform 
under composition.
We recall the multivariate form from~\cite[Theorem~2.1]{savits2006some}. 
Given functions $f=(f_1,\ldots,f_l):\RR^k\to \RR^l$ and $g:\RR^l\to\RR$, for 
$\alpha_1+\ldots+\alpha_k>0$, write $\alpha=(\alpha_1,\ldots,\alpha_k)$ and write $|\gamma|$ for the sum of the components of a multi-index $\gamma$.

\begin{align*}
    \frac{\partial^{|\alpha|}(f;g)(x)}
         {\partial x_1^{\alpha_1}\cdots\partial x_k^{\alpha_k}}(a)
    &= \Bigl(\prod_{i=1}^k \alpha_i!\Bigr)
       \sum_{\substack{\beta\in \NN^l\\ 1\leq |\beta|\leq |\alpha|}}
       \frac{\partial^{|\beta|}g(y)}
            {\partial y_1^{\beta_1}\cdots \partial y_l^{\beta_l}}(f(a)) \\
    &\quad{}\cdot
       \sum_{\substack{e^1,\ldots,e^q\in\NN^l\\
          e^1+\cdots+e^q=\beta\\
          \sum_{r=1}^q |e^r|\alpha_i^r=\alpha_i\ (1\leq i\leq k)}}
       \prod_{r=1}^q
       \prod_{j=1}^l \frac{1}{e^r_j!}
       \Biggl(
          \frac{1}{\alpha^r_1!\cdots \alpha^r_k!} \cdot
          \frac{\partial^{|\alpha^r|} f_j(x)}
               {\partial x_1^{\alpha^r_1}\cdots\partial x_k^{\alpha^r_k}}(a)
       \Biggr)^{e^r_j},
\end{align*}
where $(\alpha^1_1,\ldots,\alpha^1_k),\ldots, (\alpha^q_1,\ldots,\alpha^q_k)\in \NN^k$ are an enumeration of all the vectors $(\alpha^r_1,\ldots,\alpha^r_k)$ of $k$ natural numbers such that $\alpha^r_i\leq \alpha_i$ for all $i$ and $\alpha^r_1+\ldots+\alpha^r_k>0$ and we write $q$ for the number of such vectors.
The details of this formula reflect the complicated combinatorics arising from the repeated applications of the chain and product rules used to prove it.
Conceptually, however, it is rather straightforward: it tells us that the coefficients of the $R$-th order Taylor approximation of $f;g$ can be expressed exclusively in terms of those of $f$ and $g$.

Thus the Fa\`a di Bruno formula uniquely determines the Taylor approximation $h: \left(\RR^{\cchoose{R+k}k}\right)^n\to \RR^{\cchoose{R+k}k}$ in terms of the derivatives of $g:\RR^n\to \RR$ of order $\leq R$, and we can also recover all such derivatives from $h$.

\subsection{Example: a two-dimensional second-order Taylor series}
\label{sub:two-dim-taylor}
As an example, we can specialize the Fa\`a di Bruno formula above to the second-order Taylor series of a function $f:\RR^2\to \RR^l$ and its behaviour under postcomposition with a smooth function $g:\RR^l\to \RR$:
\begin{align*}
  \frac{\partial^{2}(f;g)(x)}{\partial x_{i}\partial x_{i'}}(a)
  &= \sum_{j=1}^l \frac{\partial g(y)}{\partial y_j}(f(a)) \frac{\partial^2 f_j(x)}{\partial x_{i}\partial x_{i'}}(a)+
  \sum_{j,j'=1}^l \frac{\partial^2 g(y)}{\partial y_j\partial y_{j'}}(f(a))\frac{\partial f_{j'}(x)}{\partial x_{i}}(a) 
  \frac{\partial f_j(x)}{\partial x_{i'}}(a),
\end{align*}
where $i,i'\in \set{1,2}$ may coincide or be distinct.

Rather than working with the full $(2,2)$-Taylor representation of $g$, we ignore the non-mixed second-order derivatives  
$y_{02}^j=\frac{\partial^2 f_j(x)}{\partial x_2^2}$ and  $y_{20}^j=\frac{\partial^2 f_j(x)}{\partial x_1^2}$ for the moment, and we represent the derivatives of order $\leq 2$ of $f_j: \RR^2\to \RR$ (at some point $a$) as the numbers 
$$(y_{00}^j, y_{01}^j, y_{10}^j,y_{11}^j)=\left(f_j(a),\frac{\partial f_j(x)}{\partial x_2}(a),\frac{\partial f_j(x)}{\partial x_1}(a),\frac{\partial^2 f_j(x)}{\partial x_1 \partial x_2}(a)   \right)\in \RR^4$$
and we can choose a similar representation for the derivatives of $(f;g)$.
We observe that the Fa\`a di Bruno formula induces the function $h:(\RR^4)^l\to \RR^4$
\begin{align*}
 & h((y_{00}^1,y_{01}^1,y_{10}^1,y_{11}^1),\ldots, (y_{00}^l,y_{01}^l,y_{10}^l,y_{11}^l))=\\
&\left(
\begin{array}{l}
g(y_{00}^1,\ldots,y_{00}^l)\\
\sum_{j=1}^l \frac{\partial g(y^1,\ldots,y^l)}{\partial y_j}(y_{00}^1,\ldots,y_{00}^l)\cdot y_{01}^j\\
\sum_{j=1}^l \frac{\partial g(y^1,\ldots,y^l)}{\partial y_j}(y_{00}^1,\ldots,y_{00}^l)\cdot y_{10}^j\\
\sum_{j=1}^l \frac{\partial g(y^1,\ldots,y^l)}{\partial y_j}(y_{00}^1,\ldots,y_{00}^l)\cdot y_{11}^j + 
\sum_{j,j'=1}^l \frac{\partial^2 g(y^1,\ldots,y^l)}{\partial y_j \partial y_{j'}}(y_{00}^1,\ldots,y_{00}^l) \cdot y_{10}^j\cdot y_{01}^{j'}
\end{array}\right).
\end{align*}
In particular, we note that
\begin{align*}
  & h((y_{00}^1,y_{01}^1,y_{10}^1,0),\ldots, (y_{00}^l,y_{01}^l,y_{10}^l,0))=\left(
 \begin{array}{l}
 g(y_{00}^1,\ldots,y_{00}^l)\\
 \sum_{j=1}^l \frac{\partial g(y^1,\ldots,y^l)}{\partial y_j}(y_{00}^1,\ldots,y_{00}^l)\cdot y_{01}^j\\
 \sum_{j=1}^l \frac{\partial g(y^1,\ldots,y^l)}{\partial y_j}(y_{00}^1,\ldots,y_{00}^l)\cdot y_{10}^j\\
 \sum_{j,j'=1}^l \frac{\partial^2 g(y^1,\ldots,y^l)}{\partial y_j \partial y_{j'}}(y_{00}^1,\ldots,y_{00}^l) \cdot y_{10}^j\cdot y_{01}^{j'}
 \end{array}\right).
 \end{align*}
We see that we can use this method to 
calculate any directional first- and second-order derivatives of $g$ in one pass.
For example, if $l=3$, so $g:\RR^3\to \RR$, then the last component of 
$h((x,x',x'',0),(y,y',y'',0),(z,z',z'',0))$ is the result of
taking the first derivative in direction
$(x',y',z')$ and the second derivative in direction $(x'',y'',z'')$, and evaluating at $(x,y,z)$.

In the proper Taylor representation we explicitly include the non-mixed second-order derivatives as inputs and outputs, leading to a function $h':(\RR^6)^l\to \RR^6$. Above, we have followed a common trick to avoid some unnecessary storage and computation, since these extra inputs and outputs are not required for computing the second-order derivatives of~$g$. For instance, if $l=2$ then the last component of $h((x,1,1,0),(y,0,0,0))$ computes $
\frac{\partial^2g}{\partial x^2}(x,y)$.

\subsection{Example: a one-dimensional second-order Taylor series}
\label{sub:hod-ex}
As opposed to (2,2)-AD, (1,2)-AD computes the first- and second-order derivatives in the same direction. For example, if $g:\RR^2\to\RR$ is a smooth function, then $h:(\RR^3)^2\to \RR^3$. An intuition for $h$ can be given in terms of triple numbers.
The transformed function operates on triples of numbers, $(x,x',x'')$, and it is common
to think of such a triple as $x+x'\epsilon+\frac{1}{2}x''\epsilon^2$ for an `infinitesimal' $\epsilon$ with the property that $\epsilon^3=0$. For instance, we have
\begin{align*}
&h((x_1,1,0),(x_2,0,0))=(g(x_1,x_2),\frac {\partial g(x,x_2)}{\partial x}(x_1),
\frac {\partial^2 g(x,x_2)}{\partial x^2}(x_1))\\ 
&h((x_1,0,0),(x_2,1,0))=(g(x_1,x_2),\frac {\partial g(x_1,x)}{\partial x}(x_2),
\frac {\partial^2 g(x_1,x)}{\partial x^2}(x_2))\\
&h((x_1,1,0),(x_2,1,0))=(g(x_1,x_2),\frac {\partial g(x,x_2)}{\partial x}(x_1)+\frac {\partial g(x_1,x)}{\partial x}(x_2),\\
&\qquad\qquad\qquad\qquad\qquad\qquad\qquad\quad
\frac {\partial^2 g(x,x_2)}{\partial x^2}(x_1)+
\frac {\partial^2 g(x_1,x)}{\partial x^2}(x_2)+
2\frac {\partial^2 g(x,y)}{\partial x\partial y}(x_1,x_2))
\end{align*}
We directly get non-mixed second-order partial derivatives but not the mixed ones. We can recover $\frac {\partial^2 g(x,y)}{\partial x \partial y}(x_1,x_2)$ as the third component of $\frac{1}{2}(h((x_1,1,0),(x_2,1,0))-h((x_1,1,0),(x_2,0,0))-h((x_1,0,0),(x_2,1,0)))$.

More generally, if $g:\RR^l\to\RR$, then $h:(\RR^3)^l\to \RR^3$ satisfies:
\begin{align*}
& h((x_1,x'_1,0),\ldots, (x_l,x'_l, 0))=\left(
\begin{array}{l}
g(x_1,\ldots,x_l)\\
\sum_{i=1}^l\frac{\partial g}{\partial x_i}(x_1,\ldots,x_l)\cdot x'_i\\
\sum_{i,j=1}^l \frac{\partial^2 g}{\partial x_i \partial x_j}(x_1,\ldots,x_l)\cdot x'_i\cdot x'_j
\end{array}
\right).
\end{align*}
We can always recover the mixed second-order partial derivatives from this, but doing so requires several computations involving $h$. This therefore differs from the (2,2) method, which is more direct.

\subsection{Remark}
In the rest of this article, we study forward-mode $(k,R)$-automatic differentiation for a language with higher-order functions. The reader may like to fix $k=R=1$ for standard automatic differentiation with first-order derivatives, based on dual numbers. This is the approach taken in the conference version of this paper~\cite{hsv-fossacs2020}. The generalization to higher-order derivatives with arbitrary~$k$ and~$R$ flows straightforwardly through the whole narrative.

\section{A Higher-Order Forward-Mode AD Translation}\label{sec:simple-language}

\subsection{A simple language of smooth functions.}

We consider a standard higher-order typed language with a first-order type $\reals$ of real numbers. The types $(\ty,\ty[2])$ and terms $(\trm,\trm[2])$ are as follows.

\begin{figure}[H]
  {\scalebox{1}{\begin{minipage}{\linewidth}\noindent\begin{syntax}
    \ty, \ty[2], \ty[3] & \gdefinedby & & \syncat{types}                          \\
    &\gor& \reals                      & \synname{real numbers}\\
    &\gor&\tProd{\ty_1}{\dots}{\ty_n} & \synname{finite product} \\
  &\gor& \ty \To \ty[2]              & \synname{function}      \\[6pt]
    \trm, \trm[2], \trm[3] & \gdefinedby & & \syncat{terms}                          \\
    &    & \var                          & \synname{variable}                        \\
    &\gor& \op(\trm_1,\ldots,\trm_n)
    & \synname{operations (including constants)}                      \\
    &\gor& \tTriple{\trm_1}{\dots}{\trm_n}\ \gor  \tMatch{\trm}{\var_1,\dots, \var_n}{\trm[2]}\hspace{-10pt} \;& \synname{tuples/pattern matching}\\
    &\gor& \fun \var    \trm
    \ \gor  \trm\, \trm[2]               & \synname{function abstraction/application}\\

\end{syntax} \end{minipage}}}
  \end{figure}

\noindent The typing rules are in Figure~\ref{fig:types1}.
We have included some abstract basic $n$-ary operations $\op\in\Op_n$ for every $n\in\NN$.
These are intended to include the usual (smooth) mathematical operations that are used in programs to which automatic differentiation is applied. For example,
\begin{itemize}
  \item for any real constant $c\in\RR$, we typically include a constant $\cnst\in\Op_0$; we slightly abuse notation and will simply write $\cnst$ for $\cnst()$ in our examples;
  \item we include some unary operations such as  $\sigmoid\in\Op_1$ which we intend to stand for the usual sigmoid function,
  $\sigmoid(x)\defeq\frac 1 {1+e^{-x}}$;
  \item we include some binary operations such as addition and multiplication $(+),(*)\in\Op_2$;
\end{itemize}
We add some simple syntactic sugar $t-u\defeq
t+\underline{(-1)}* u$ and, for some natural number $n$, \[n\cdot \trm \defeq \overbrace{\trm+...+\trm}^{\text{$n$ times}}
  \qquad
  \text{and}
  \qquad \trm^n\defeq \overbrace{\trm * ... * \trm}^{\text{$n$ times}}
  \]
Similarly, we will frequently denote repeated sums and products using $\sum$- and $\prod$-signs, respectively: for example, we write $\trm_1+...+\trm_n$ as $\sum_{i\in \{1,...,n\}}\trm_i$ and $\trm_1*...*\trm_n$ as $\prod_{i\in \set{1,...,n}}\trm_i$.
This is in addition to programming sugar such as $\letin{\var}{\trm}{\trm[2]}$ for $(\fun{\var}{\trm[2]})\,\trm$
and $\fun{\tTuple{\var_1,\ldots,\var_n}}{\trm}$ for $\fun{\var}{\tMatch{\var}{\var_1,\ldots,\var_n}{\trm}}$.

\begin{figure}[b]
  \oframebox{\scalebox{1}{\begin{minipage}{\linewidth}\noindent\[
  \begin{array}{c}
  \inferrule{
    \Ginf {\trm_1}{\reals}\;\;\dots
    \;\; \Ginf {\trm_n}{\reals}
  }{
    \Ginf {\op(\trm_1,\ldots,\trm_n)}\reals
  }(\op\in\Op_n)
 \\[12pt]
 \\
  \inferrule{
    \Ginf {\trm_1}{\ty_1}\;\;\dots
    \;\;
    \Ginf {\trm_n}{\ty_n}
  }{
    \Ginf{\tTriple{\trm_1}{\dots}{\trm_n}}{\tProd{\ty_1}{\dots}{\ty_n} }
  }
  \qquad
    \inferrule{
    \Ginf{\trm}{\tProd{\ty[2]_1}{\dots}{\ty[2]_n} }
    \;\;
  \Ginf[,{\var[1]_1 \colon \ty[2]_1, {.}{.}{.}, \var[1]_n\colon\ty[2]_n}]{\trm[2]}\ty
  }{
    \Ginf{\tMatch
           \trm
           {\var[1]_1,\dots,\var[1]_n}
           {\trm[2]}}{\ty}
  }
 \\[12pt]
 \\
  \inferrule{
    ~
  }{
    \Ginf \var\ty
  }((\var : \ty) \in \ctx)
  \qquad
  \inferrule{
    \Ginf[, \var : \ty]{\trm}{\ty[2]}
  }{
    \Ginf{\fun{\var:\ty}\trm}{\ty\To\ty[2]}
  }\qquad
  \inferrule{
    \Ginf{\trm}{\ty[2]\To\ty}
    \\
    \Ginf{\trm[2]}{\ty[2]}
  }{
    \Ginf{\trm\, \trm[2]}{\ty}
  }
\end{array}
\]
 \end{minipage}}}
  \caption{Typing rules for the simple language.\label{fig:types1}}
  \end{figure}

\subsection{Syntactic automatic differentiation: a functorial macro.}
\label{sec:admacro}
The aim of higher-order forward-mode AD is to find the $(k,R)$-Taylor representation of a function by syntactic manipulations, for some choice of $(k,R)$ that we fix.
For our simple language, we implement this as the following inductively defined macro $\Dsynsymbol[(k,R)]$ on both types and terms
(see also~\cite{wang2018demystifying,shaikhha2019efficient}).
For the sake of legibility, we simply write $\Dsynsymbol[(k,R)]$ as $\Dsynsymbol$ here and leave the 
dimension $k$ and order $R$ of the Taylor representation implicit.
The following definition is for general $k$ and $R$, but we treat specific cases afterwards in Example~\ref{ex:oneonetwotwo}.
\begin{figure}[H]
\noindent\begin{align*}
&\Dsyn{\ty\To\ty[2]} \defeq \Dsyn{\ty}\To\Dsyn{\ty[2]}\qquad 
\Dsyn{\ty_1\t*...\t*\ty_n} \defeq {\Dsyn{\ty_1}}\t*...\t*{\Dsyn{\ty_n}}\\ 
&\Dsyn{\reals} \defeq {\reals}^{\cchoose{R+k}k} \quad
\text{(i.e.,~the type of tuples of reals of length $\textstyle\cchoose{R+k}{k}$)}  
\end{align*} \end{figure}
\begin{figure}[H]
\noindent\begin{align*}
&\Dsyn{\var} \defeq \var\hspace{80pt}\Dsyn{\cnst} \defeq \tTuple{\cnst,\underline{0},\ldots,\underline{0}}\\
&\Dsyn{\fun \var    \trm} \defeq \fun\var{\Dsyn{\trm}}\hspace{12pt}
\Dsyn{\trm\, \trm[2] } \defeq 
\Dsyn{\trm}\,\Dsyn{\trm[2]}\hspace{12pt}
\Dsyn{\tTriple{\trm_1}{\dots}{\trm_n}} \defeq \tTriple{\Dsyn{\trm_1}}{\dots}{\Dsyn{\trm_n}} \\
&\Dsyn{{\tMatch{\trm}{\var_1,\dots,\var_n}{\trm[2]}}} \defeq
\tMatch{\Dsyn\trm}{\var_1,\dots,\var_n }{\Dsyn{\trm[2]}} \\[6pt]
&\hspace{-4pt}\begin{array}{ll}
    \Dsyn{\op(\trm_1,\ldots,\trm_n)}\defeq~ &\tMatch{\Dsyn{\trm_1}}{\var_{0...0}^1,...,\var_{R,0...0}^1}{}\\
    &\vdots\\
                       &{ \tMatch{\Dsyn{\trm_n}}{\var_{0...0}^n,...,\var_{R,0...0}^n}{}}
                       \\ &
\begin{array}{lll}
    \langle&\hspace{-8pt}D^{0...0}\op(\var_{0...0}^1,...,\var_{R,0...0}^1,...,\var_{0...0}^n,...,\var_{R,0...0}^n),\\
&\cdots ,\\ 
&\hspace{-8pt}D^{R...0}\op(\var_{0...0}^1,...,\var_{R,0...0}^1,...,\var_{0...0}^n,...,\var_{R,0...0}^n)&\hspace{-8pt}\rangle\end{array}
\end{array}\end{align*}
\begin{align*}
&\text{where } \\
&D^{0...0}\op(x_{0...0}^1,...,x_{R,0...0}^1,...,x_{0...0}^n,...,x_{R,0...0}^n)\defeq \op(x_{0...0}^1,...,x_{0...0}^n)\\
& \begin{array}{@{}l}
    D^{\alpha_1...\alpha_k}\op(x_{0...0}^1,...,x_{R,0...0}^1,...,x_{0...0}^n,...,x_{R,0...0}^n)\defeq~ \qquad\text{\scriptsize(for $\alpha_1+...+\alpha_k>0$)}\\
                       \begin{array}{c}
                       {\alpha_1! \cdot \ldots \cdot \alpha_k!}\cdot
                       \sum_{\set{(\beta_1,\ldots,\beta_n)\in \NN^n\mid 1\leq \beta_1+\ldots+\beta_n\leq \alpha_1+\ldots+\alpha_k}} 
                       \partial_{\beta_1\cdots\beta_n}\op(x_{0...0}^1,\ldots,x_{0...0}^n)* \\
                       \sum_{\set{((e^1_1,\ldots,e^1_n),\ldots,(e^q_1,\ldots,e^q_n))\in (\NN^n)^q\mid 
                       e_j^1+\ldots+e_j^q = \beta_j\text{ for all }1\leq j\leq n,
                       (e^1_1+\ldots+e^1_n)\cdot \alpha^1_i+ \ldots + (e^q_1+\ldots+e^q_n)\cdot \alpha^q_i=\alpha_i\text{ for all }1\leq i\leq k}}\\
                       \prod_{r=1}^q   
                       \prod_{j=1}^n{\frac{1}{e^r_j!}}\cdot\left({\frac{1}{\alpha^r_1!\cdot\ldots\cdot \alpha^r_k!}}\cdot
                       x_{\alpha^r_1\cdots\alpha^r_k}^j\right)^{e^r_j}\text.                   
                       \end{array}
\end{array}
\end{align*}
 \end{figure}
Here, $(\partial_{\beta_1\cdots \beta_n}\op)(x_1,\ldots,x_n)$ are some chosen terms of type $\reals$ in the language
with free variables among $x_1,\ldots,x_n$.
We think of these terms as implementing the partial derivative $\frac{\partial^{\beta_1+...+\beta_n}\sem{\op}(x_1,...,x_n)}{\partial x_1^{\beta_1}\cdots \partial x_n^{\beta_n}}$ of the smooth function $\sem{\op}:\RR^n\to\RR$ that $\op$ implements.
In the correctness statements below, we assume that the chosen terms have these denotations.
For example, we could choose the following representations of derivatives of order $\leq 2$ of our example operations
\[
    \begin{array}{ll}
\partial_{01}(+)(x_1,x_2) = \underline{1} &
\partial_{02}(+)(x_1,x_2) = \underline{0}\\ 
\partial_{10}(+)(x_1,x_2) = \underline{1}& 
\partial_{11}(+)(x_1,x_2) = \underline{0}\\
\partial_{20}(+)(x_1,x_2) = \underline{0}\\[6pt]
\partial_{01}(*)(x_1,x_2) = x_1 &
\partial_{02}(*)(x_1,x_2) = \underline{0}\\
\partial_{10}(*)(x_1,x_2) = x_2 & 
\partial_{11}(*)(x_1,x_2) = \underline{1}\\
\partial_{20}(*)(x_1,x_2) = \underline{0}\\[6pt]
\partial_{1}(\sigmoid)(\var) = \letin{\var[2]}{\sigmoid(\var)}{
   \var[2]*(\underline{1}-\var[2])}{}& 
\partial_{2}(\sigmoid)(\var) =\letin{\var[2]}{\sigmoid(\var)}{\\
&\qquad \qquad\quad\letin{\var[3]}{\var[2]*(\underline{1}-\var[2])}{\var[3]*(\underline{1}-\underline{2}*\var[2])}}
    \end{array}
\] Note that our rules, in particular, imply that $\Dsyn{\cnst}=\tTuple{\cnst,\underline{0},\ldots,\underline{0}}$.

\begin{example}[$(1,1)$- and $(2,2)$-AD]\label{ex:oneonetwotwo}
  Our choices of partial derivatives of the example operations are sufficient to implement $(k,R)$-Taylor forward AD with $R\leq 2$.
  To be explicit, the distinctive formulas for $(1,1)$- and $(2,2)$-AD methods (specializing our abstract definition of $\Dsynsymbol[(k,R)]$ above) are 
  \begin{align*}
  &\Dsyn[(1,1)]{\reals}=\bProd{\reals}{\reals}\\
  &\Dsyn[(1,1)]{\op(\trm_1,\ldots,\trm_n)}=
    \\&\qquad\begin{array}{l}\pMatch{\Dsyn[(1,1)]{\trm_1}}{\var^1_{0}}{\var^1_1}
                         { \ldots \to\pMatch{\Dsyn[(1,1)]{\trm_n}}{\var_0^n}{\var_1^n}
                         {\\ \tPair{\op(\var_0^1,\ldots,\var_0^n)}{\sum_{i=1}^n\var^i_1 *\partial_i\op(\var_0^1,\ldots,\var_0^n)}}}
  \end{array}
  \\[6pt]
  &\Dsyn[(2,2)]{\reals}=\reals^6\\
  &\Dsyn[(2,2)]{\op(\trm_1,...,\trm_n)}=\\&\qquad\begin{array}{l}
\tMatch{\Dsyn[(2,2)]{\trm_1}}{\var^1_{00},\var^1_{01},\var^1_{02},\var^1_{10},\var^1_{11},\var^1_{20}}{}\\
\vdots\\
\tMatch{\Dsyn[(2,2)]{\trm_n}}{\var^n_{00},\var^n_{01},\var^n_{02},\var^n_{10},\var^n_{11},\var^n_{20}}{}\\\qquad
\begin{array}{l}\langle\op(\var^1_{00},\ldots,\var^n_{00}),\\
\sum_{i=1}^n\var^i_{01} *\partial_{\hat{i}}\op(\var_{00}^1,\ldots,\var_{00}^n),\\
\sum_{i=1}^n\var^i_{02} *\partial_{{\hat{i}}}\op(\var_{00}^1,\ldots,\var_{00}^n)+
\sum_{i,j=1}^n\var^i_{01}*\var^{j}_{01}*\partial_{\widehat{{i,j}}}\op(\var_{00}^1,\ldots,\var_{00}^n),\\
\sum_{i=1}^n\var^i_{10} *\partial_{\hat{i}}\op(\var_{00}^1,\ldots,\var_{00}^n),\\
\sum_{i=1}^n\var^i_{11} *\partial_{{\hat{i}}}\op(\var_{00}^1,\ldots,\var_{00}^n)+
\sum_{i,j=1}^n\var^i_{10}*\var^j_{01}*\partial_{\widehat{{i,j}}}\op(\var_{00}^1,\ldots,\var_{00}^n),\\
\sum_{i=1}^n\var^i_{20} *\partial_{{\hat{i}}}\op(\var_{00}^1,\ldots,\var_{00}^n)+
\sum_{i,j=1}^n\var^i_{10}*\var^j_{10}*\partial_{\widehat{{i,j}}}\op(\var_{00}^1,\ldots,\var_{00}^n)\rangle
\end{array}
\end{array}
\end{align*}
where we informally write $\hat{i}$ for the one-hot encoding of $i$ (the sequence of length $n$ consisting exclusively of zeros except at position $i$, where it has a $1$)
and $\widehat{i,j}$ for the two-hot encoding of $i$ and $j$ (the sequence of length $n$ consisting exclusively of zeros except at positions $i$ and $j$, where it has a $1$ if $i\neq j$ and a $2$ if $i=j$).

As noted in Section~\ref{sec:dual-numbers-taylor}, it is often unnecessary to include all components of the 
$(2,2)$-algorithm, for example when computing a second-order directional derivative.
In that case, we may define a restricted $(2,2)$-AD algorithm that drops the non-mixed second-order derivatives from the definitions above and defines $\Dsyn[(2,2)']{\reals}=\reals^4$ 
and\\ 
\resizebox{\linewidth}{!}{\parbox{1.025\linewidth}{$$\hspace{-5pt}\begin{array}{ll}
\Dsyn[(2,2)']{\op(\trm_1,...,\trm_n)}=&
\tMatch{\Dsyn[(2,2)']{\trm_1}}{\var^1_{00},\var^1_{01},\var^1_{10},\var^1_{11}}{}\\
&\vdots\\
&\tMatch{\Dsyn[(2,2)']{\trm_n}}{\var^n_{00},\var^n_{01},\var^n_{10},\var^n_{11}}{}\\
&\hspace{-4pt}\begin{array}{l}
\langle\op(\var^1_{00},\ldots,\var^n_{00}),\\
\sum_{i=1}^n\var^i_{01} *\partial_{\hat{i}}\op(\var_{00}^1,\ldots,\var_{00}^n),\\
\sum_{i=1}^n\var^i_{10} *\partial_{\hat{i}}\op(\var_{00}^1,\ldots,\var_{00}^n),\\
\sum_{i=1}^n\var^i_{11} *\partial_{{\hat{i}}}\op(\var_{00}^1,\ldots,\var_{00}^n)+
\sum_{i,j=1}^n\var^i_{10}*\var^j_{01}*\partial_{\widehat{{i,j}}}\op(\var_{00}^1,\ldots,\var_{00}^n)\rangle.\end{array}
\end{array}
$$}}
\end{example}

We extend $\Dsynsymbol$ to contexts: $\Dsyn{\{\var_1{:}\ty_1,{.}{.}{.},\var_n{:}\ty_n\}}\defeq
\{\var_1{:}\Dsyn{\ty_1},{.}{.}{.},\var_n{:}\Dsyn{\ty_n}\}$.
This turns $\Dsynsymbol$ into a well-typed, functorial macro in the following sense.
\begin{lem}[Functorial macro]\label{lem:functorialmacro}
	If $\ctx\tinf \trm:\ty$ then $\Dsyn{\ctx}\tinf \Dsyn{\trm}:\Dsyn{\ty}$.\\
	If $\ctx,\var:\ty[2]\tinf \trm:\ty$ and
	$\ctx\tinf\trm[2]:\ty[2]$ then
	$\Dsyn{\ctx}\tinf \Dsyn{\subst{\trm}{\sfor{\var}{\trm[2]}}}=\subst{\Dsyn{\trm}}{\sfor{\var}{\Dsyn{\trm[2]}}
	}$.
\end{lem}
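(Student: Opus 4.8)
The plan is to establish both statements by routine inductions: the first by induction on the typing derivation $\Ginf\trm\ty$, and the second by structural induction on $\trm$ (the typing hypotheses in the second part are needed only to guarantee well-formedness, since the claimed equality is a purely syntactic identity). Both arguments use that $\Dsynsymbol$ commutes with all the type formers --- $\Dsyn{\ty\To\ty[2]}=\Dsyn\ty\To\Dsyn{\ty[2]}$ and $\Dsyn{\ty_1\tstar\cdots\tstar\ty_n}=\Dsyn{\ty_1}\tstar\cdots\tstar\Dsyn{\ty_n}$ --- that $\Dsyn\ctx$ is formed pointwise, and that $\Dsyn\var=\var$. I would also adopt the usual convention that all bound variables, including the fresh pattern variables $\var^i_{0\cdots0},\ldots,\var^i_{R,0\cdots0}$ (for $i=1,\ldots,n$) introduced by the operation clause of the macro, are chosen distinct from all other variables in scope, in particular from $\var$ and from the free variables of $\trm[2]$.

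For the first statement, every case except that of operations is immediate from the induction hypothesis together with the observations above: the rules for variables, tupling, pattern matching, abstraction (annotating the bound variable by $\Dsyn\ty$), and application each translate to an instance of the same typing rule, and the conclusion type is read off directly. The one case with content is the operation rule: there $\Dsyn{\op(\trm_1,\ldots,\trm_n)}$ is the displayed $n$-fold pattern match, and by the induction hypothesis each scrutinee $\Dsyn{\trm_i}$ has type $\Dsyn\reals=\reals^{\cchoose{R+k}k}$, so it may be destructed into the $\cchoose{R+k}k$ variables $\var^i_{0\cdots0},\ldots,\var^i_{R,0\cdots0}$, each of type $\reals$. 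It then remains only to check that, in the extended context, every component $D^{\ldots}\op(\ldots)$ of the returned tuple has type $\reals$: the component indexed by $(0,\ldots,0)$ is just $\op$ applied to $n$ of these variables, while each remaining component is a sum of products of integer literals, the variables $\var^j_{\ldots}$, and the chosen derivative terms $\partial_{\ldots}\op(\var^1_{0\cdots0},\ldots,\var^n_{0\cdots0})$ --- which, by stipulation, are terms of type $\reals$ with free variables among $x_1,\ldots,x_n$ --- all combined using the real operations $(+)$ and $(*)$. Hence every component has type $\reals$ and the tuple has type $\reals^{\cchoose{R+k}k}=\Dsyn\reals$, as required.

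For the second statement I would induct on the structure of $\trm$. If $\trm=\var$, both sides equal $\Dsyn{\trm[2]}$; if $\trm$ is a variable distinct from $\var$, both sides equal that variable. When $\trm$ is a tuple, a pattern match, an abstraction, or an application, the definition of $\Dsynsymbol$ is compositional, so one pushes the substitution inward and applies the induction hypothesis to each immediate subterm; the freshness convention guarantees that the binders of $\trm$ neither coincide with $\var$ nor capture a free variable of $\trm[2]$ (equivalently of $\Dsyn{\trm[2]}$, which has the same free variables), so the substitution commutes past them --- for example $\Dsyn{\subst{(\fun{\var[2]}{\trm_1})}{\sfor{\var}{\trm[2]}}}=\fun{\var[2]}{\Dsyn{\subst{\trm_1}{\sfor{\var}{\trm[2]}}}}=\fun{\var[2]}{\subst{\Dsyn{\trm_1}}{\sfor{\var}{\Dsyn{\trm[2]}}}}=\subst{\Dsyn{\fun{\var[2]}{\trm_1}}}{\sfor{\var}{\Dsyn{\trm[2]}}}$. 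The operation case is the one to handle with care: on the left, $\Dsyn{\subst{\op(\trm_1,\ldots,\trm_n)}{\sfor\var{\trm[2]}}}$ is the big pattern match in which each scrutinee becomes $\Dsyn{\subst{\trm_i}{\sfor\var{\trm[2]}}}=\subst{\Dsyn{\trm_i}}{\sfor\var{\Dsyn{\trm[2]}}}$ (by the induction hypothesis) while the body is unchanged; on the right, $\subst{\Dsyn{\op(\trm_1,\ldots,\trm_n)}}{\sfor\var{\Dsyn{\trm[2]}}}$ applies the substitution to that same pattern match, and since the pattern variables are fresh (avoiding $\var$ and the free variables of $\Dsyn{\trm[2]}$) and the body is a tuple of terms $D^{\ldots}\op(\ldots)$ whose free variables all lie among those pattern variables --- so that none of them is $\var$ --- the substitution reaches precisely the scrutinees. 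Hence the two sides coincide, which closes the induction.

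The only genuine subtlety, as the operation case makes plain, is the variable bookkeeping: one must ensure that the pattern variables introduced by the macro are fresh (so no capture occurs under substitution) and that the terms $D^{\ldots}\op(\ldots)$ contain no free program variable beyond those pattern variables --- which is precisely why the chosen derivative terms $\partial_{\ldots}\op$ were required to have free variables among $x_1,\ldots,x_n$ only. Everything else is a mechanical structural traversal; if one prefers, the two statements can instead be proved together by a single simultaneous induction on the typing derivation $\Ginf\trm\ty$.
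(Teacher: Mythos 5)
Your proposal is correct and follows the same route as the paper, whose proof is simply the routine induction on typing derivations (respectively on term structure) that you spell out. The extra care you take with the operation case and the freshness of the pattern variables $\var^i_{0\cdots0},\ldots,\var^i_{R,0\cdots0}$ is exactly the bookkeeping the paper leaves implicit.
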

\begin{proof}By induction on the structure of typing derivations.\end{proof}

\begin{example}[Inner products]\label{ex:innerprod}
Let us write $\ty^n$ for the $n$-fold product $\tProd{\ty}{\dots}{\ty}$.
Then, given $\Ginf{\trm,\trm[2]}{\reals^n}$, we can define their inner product as follows:\[
\begin{array}{ll}
\Gamma\vdash\trm\innerprod{n}\trm[2]\defeq\;&\tMatch{\trm}{\var[3]_1,\ldots,\var[3]_n}{}\\
&\tMatch{\trm[2]}{\var[2]_1,\ldots,\var[2]_n}{\var[3]_1 * \var[2]_1 + \dots + \var[3]_n * \var[2]_n}
:\reals	
\end{array}
\]
To illustrate the calculation of $\Dsynsymbol[(1,1)]$, let us expand (and $\beta$-reduce) $\Dsyn[(1,1)]{\trm\innerprod{2}\trm[2]}$:\begin{align*}
&\pMatch{\Dsyn[(1,1)]{\trm}}{\var[3]_1}{\var[3]_2}{}
\pMatch{\Dsyn[(1,1)]{\trm[2]}}{\var[2]_1}{\var[2]_2}{}\\
&\pMatch{\var[3]_1}{\var[3]_{1,1}}{\var[3]_{1,2}}{}
\pMatch{\var[2]_1}{\var[2]_{1,1}}{\var[2]_{1,2}}{}\\ 
&\pMatch{\var[3]_2}{\var[3]_{2,1}}{\var[3]_{2,2}}{}
\pMatch{\var[2]_2}{\var[2]_{2,1}}{\var[2]_{2,2}}{}\\
&\qquad \tPair{\var[3]_{1,1}*\var[2]_{1,1}+\var[3]_{2,1}*\var[2]_{2,1}\ }{\ \var[3]_{1,1}*\var[2]_{1,2}+\var[3]_{1,2}*\var[2]_{1,1}+\var[3]_{2,1}*\var[2]_{2,2}+\var[3]_{2,2}*\var[2]_{2,1}}
\intertext{Let us also expand the calculation of $\Dsyn[(2,2)']{\trm\innerprod{2}\trm[2]}$:}&\pMatch{\Dsyn[(2,2)']{\trm}}{\var[3]_1}{\var[3]_2}{}
\pMatch{\Dsyn[(2,2)']{\trm[2]}}{\var[2]_1}{\var[2]_2}{}\\
&\tMatch{\var[3]_1}{\var[3]_{1},\var[3]_{1,1}',\var[3]_{1,2}',\var[3]_{1}''}{}
\tMatch{\var[2]_1}{\var[2]_{1},\var[2]_{1,1}',\var[2]_{1,2}',\var[2]_{1}''}{}\\ 
&\tMatch{\var[3]_2}{\var[3]_{2},\var[3]_{2,1}',\var[3]_{2,2}',\var[3]_{2}''}{}
\tMatch{\var[2]_2}{\var[2]_{2},\var[2]_{2,1}',\var[2]_{2,2}',\var[2]_{2}''}{}\\
&\tTuple{\var[3]_{1}*\var[2]_{1}+\var[3]_{2}*\var[2]_{2},\\
&\qquad\ \var[3]_{1}*\var[2]_{1,1}'+\var[3]_{1,1}'*\var[2]_{1}+\var[3]_{2}*\var[2]_{2,1}'+\var[3]_{2,1}'*\var[2]_{2},\\
&\qquad\ \var[3]_{1}*\var[2]_{1,2}'+\var[3]_{1,2}'*\var[2]_{1}+\var[3]_{2}*\var[2]_{2,2}'+\var[3]_{2,2}'*\var[2]_{2},\\
&\qquad\ \var[3]_1''*\var[2]_1+\var[3]_2''*\var[2]_2+\var[2]_1''*\var[3]_1+\var[2]_2''*\var[3]_2 +\\
&\qquad\var[3]_{1,1}'*\var[2]_{1,2}'+\var[3]_{1,2}'*\var[2]_{1,1}'+\var[3]_{2,1}'*\var[2]_{2,2}'+\var[3]_{2,2}'*\var[2]_{2,1}'}
\end{align*}
\end{example}

\begin{example}[Neural networks]
In our introduction, we provided a program~\eqref{eqn:network} in our language 
to build a neural network from the expressions $\neuron,\layer,\compose$;
this program makes use of the inner product of Ex.~\ref{ex:innerprod}.
We can similarly calculate the derivatives of deep neural networks by applying the macro~$\Dsynsymbol$ mechanically.
\end{example}

\section{Semantics of differentiation}\label{sec:semantics}
Consider for a moment the first-order fragment of the language in Section~\ref{sec:simple-language}, with only one type, $\reals$, 
and no $\lambda$-abstractions or pairs. 
This has a simple semantics in the category of cartesian spaces and smooth maps.
Indeed, a term $\var_1\dots\var_n:\reals \vdash \trm:\reals$ has a natural reading
as a function $\sem{\trm}:\RR^n\to\RR$
by interpreting our operation symbols by the 
well-known operations on $\RR^n\to\RR$ with the corresponding name.
In fact, the functions that are definable in this first-order fragment are smooth.
Let us write $\CartSp$ for this category of cartesian spaces ($\RR^n$ for some $n$)
and smooth functions.

The category $\CartSp$ has cartesian products, and so we can also interpret product types, tupling and pattern matching,
giving us a useful syntax
for constructing functions into and out of products of $\RR$.
For example, the interpretation of $(\neuron_n)$ in (\ref{eqn:network})
becomes
\[
\RR^n\times \RR^n\times \RR \xto{\sem{\innerprod{n}}\times \id[\RR]}\RR\times \RR\xto{\sem{+}}\RR\xto{\sem{\sigmoid}}\RR.
\]
Here $\sem{\innerprod{n}}$, $\sem{+}$ and $\sem{\sigmoid}$ are the usual inner product, addition
and the sigmoid function on $\RR$, respectively.

Inside this category, we can straightforwardly study the first-order language without $\lambda$'s, and automatic differentiation.
In fact, we can prove the following by plain induction on the syntax:\\
\emph{The interpretation of the (syntactic) forward AD $\Dsyn{\trm}$ of a first-order term
$\trm$ equals the usual (semantic) derivative of the interpretation of $\trm$ as a smooth function.}

However, as is well-known, the category $\CartSp$ does not support function spaces. To see this, 
notice that we have polynomial terms 
\[\var_1,\ldots,\var_d:\reals\vdash \lambda \var[2].\,\textstyle\sum_{n=1}^d \var_n\var[2]^n:\reals\to\reals\]
for each $d$, and so if we could interpret $(\reals\to \reals)$ as a Euclidean space 
$\RR^p$ then, by interpreting these polynomial expressions, we would 
be able to find continuous injections $\RR^d\to \RR^p$ for every $d$, which is topologically impossible for any~$p$, for example as a consequence of the 
Borsuk-Ulam theorem (see \ifx\fossacsversion\undefined Appx.~\ref{sec:man_not_ccc}\else\cite{hsv-fossacs2020}, Appendix~A\fi).

This lack of function spaces means that we cannot interpret the functions $(\layer)$ and $(\compose)$ from~(\ref{eqn:network}) in $\CartSp$, as they are higher-order functions,
even though they are very useful building blocks for differential programming!
Clearly, we could define neural networks such as~(\ref{eqn:network}) directly as smooth functions 
without any higher-order subcomponents, though that would quickly become cumbersome for deep networks.
A problematic consequence of the lack of a semantics for higher-order differential programs is that we have no obvious way of establishing compositional semantic correctness of $\Dsynsymbol$ for the given implementation of~(\ref{eqn:network}).

We now show that every definable function is smooth, and then in Section~\ref{sec:simple-correctness} we show that the $\Dsynsymbol$ macro witnesses its derivatives. 

\subsection{Smoothness at higher types and diffeologies}
\newcommand{\setsem}[1]{\llfloor #1\rrceil}
The aim of this section is to introduce diffeological spaces as a semantic model for the simple language in Section~\ref{sec:simple-language}. By way of motivation, we begin with a standard set-theoretic semantics, where types are interpreted as follows
\[\textstyle \setsem \reals\defeq\RR \qquad\setsem{\tProd{\ty_1}{\dots}{\ty_n}}\defeq 
\prod_{i=1}^n\setsem{\ty_i}\qquad\setsem{\tau\to\sigma}\defeq(\setsem\tau\to\setsem\sigma)
\]
and a term ${x_1:\ty_1,\dots,x_n:\ty_n}\vdash t:\ty[2]$ is interpreted as a function $\prod_{i=1}^n\setsem {\ty_i}\to \setsem {\ty[2]}$, mapping a valuation of the context to a result.

We can show that the interpretation of a term $x_1:\reals,\dots,x_n:\reals\vdash t:\reals$ is always a smooth function $\RR^n\to \RR$, even if it has higher-order subterms. 
We begin with a fairly standard logical relations proof of this, and then move from it to the semantic model of diffeological spaces. 

\begin{prop}
  If $x_1:\reals,\dots,x_n:\reals\vdash t:\reals$ then the function $\setsem t: \RR^n\to\RR$ is smooth.\label{prop:smooth}
\end{prop}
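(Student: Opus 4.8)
The plan is to prove this by a logical relations argument over the syntax, where the relation at each type picks out the "smoothly parametrized" elements. Concretely, for each type $\ty$ I would define, for every $n \in \NN$, a set $\mathcal{R}^n_\ty \subseteq (\RR^n \to \setsem{\ty})$ of functions $\RR^n \to \setsem{\ty}$ that we think of as the smooth ones. The definition goes by induction on $\ty$: at $\reals$, $\mathcal{R}^n_\reals$ is exactly the set of smooth functions $\RR^n \to \RR$; at a product $\tProd{\ty_1}{\dots}{\ty_m}$, a function into the product is in $\mathcal{R}^n$ iff each of its components is in $\mathcal{R}^n_{\ty_i}$; and at a function type $\ty \To \ty[2]$, a function $f : \RR^n \to (\setsem{\ty} \to \setsem{\ty[2]})$ is in $\mathcal{R}^n_{\ty \To \ty[2]}$ iff for every $m$ and every $g \in \mathcal{R}^m_\ty$ and every smooth reindexing $r : \RR^m \to \RR^n$, the map $x \mapsto f(r(x))(g(x))$ lies in $\mathcal{R}^m_{\ty[2]}$. (Equivalently one quantifies over all $g \in \mathcal{R}^{n+m}_\ty$ and looks at the uncurried map on $\RR^{n+m}$; the reindexing formulation is cleaner for weakening.) The point of closing under precomposition with smooth $r$ is to get a Kripke/presheaf-style logical relation that is stable under substitution into contexts.

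Next I would prove the fundamental lemma: if $x_1:\ty_1,\dots,x_m:\ty_m \vdash t : \ty[2]$ and $\gamma_i \in \mathcal{R}^n_{\ty_i}$ for each $i$, then $x \mapsto \setsem{t}(\gamma_1(x),\dots,\gamma_m(x))$ lies in $\mathcal{R}^n_{\ty[2]}$. This is an induction on the typing derivation. The variable case is immediate. For tupling and pattern matching, one uses the product clause of the relation directly; pattern matching also uses that the relation is closed under the obvious smooth operations on $\RR^n$. For $\lambda$-abstraction one uses the definition of $\mathcal{R}^n_{\ty \To \ty[2]}$ together with the reindexing-closure to handle the extra variable; for application one simply feeds the argument (which is in the relation by the IH) into the function (which is in the relation by the IH). For an operation $\op(\trm_1,\dots,\trm_k)$ of type $\reals$: by the IH each $\setsem{\trm_j}(\gamma(-))$ is a smooth function $\RR^n \to \RR$, and $\setsem{\op}$ is by assumption the corresponding smooth function $\RR^k \to \RR$, so the composite is smooth, i.e. in $\mathcal{R}^n_\reals$.

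Finally I would close the argument: given $x_1:\reals,\dots,x_n:\reals \vdash t:\reals$, apply the fundamental lemma with $\gamma_i = \proj{i} : \RR^n \to \RR$ (the $i$-th coordinate projection), which is smooth and hence in $\mathcal{R}^n_\reals$. The fundamental lemma then gives that $x \mapsto \setsem{t}(\proj{1}(x),\dots,\proj{n}(x)) = \setsem{t}$ is in $\mathcal{R}^n_\reals$, i.e. is smooth, which is exactly the claim.

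I expect the main obstacle to be getting the definition of the logical relation at function types exactly right so that the $\lambda$ case of the fundamental lemma goes through: one needs the relation to be a Kripke relation indexed by the "world" $\RR^n$ with morphisms the smooth maps, so that introducing a bound variable corresponds to moving to the world $\RR^{n+1}$ (or to an arbitrary extension), and one must check the relation really is stable under precomposition with smooth maps — this is the analogue of the "monotonicity" lemma in a Kripke logical relations argument, and it is what makes the abstraction/application cases and the substitution behave correctly. Everything else is routine structural induction. This proof is deliberately the concrete shadow of the diffeological-spaces semantics introduced right afterwards: $\mathcal{R}^n_\ty$ is essentially the set of plots of shape $\RR^n$ of the diffeological space $\sem{\ty}$, and the fundamental lemma is the statement that $\setsem{t}$ is smooth as a map of diffeological spaces.
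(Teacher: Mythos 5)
Your proposal is correct and is essentially the paper's own proof: a type-indexed logical relation of ``smoothly parametrized'' elements (smooth maps $\RR^k\to\RR$ at $\reals$, componentwise at products, application to related arguments at function types), a fundamental lemma proved by induction on typing derivations whose only substantive case is smoothness of the primitive operations, and the conclusion obtained by instantiating the environment with the coordinate projections. The one divergence is that you close the function-type clause under all smooth reindexings $\RR^m\to\RR^n$ (Kripke-style), which then obliges you to prove a monotonicity lemma; the paper's relation instead fixes a single dimension $k$ throughout and needs no reindexing, because in this denotational setting the $\lambda$-abstraction case only ever requires arguments at the same dimension — so your extra machinery is sound (your relation is essentially the plots of the functional diffeology) but not needed.
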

\begin{proof}
For each type $\ty$ define a set $Q_{\ty}\subseteq [\RR^k\to \setsem {\ty}]$ by induction on the structure of types:
  \begin{align*}Q_\reals&=\{f:\RR^k\to \RR~|~f\text{ is smooth}\}\\
    Q_{\tProd{\ty_1}{\dots}{\ty_n}}&=\{\textstyle f:\RR^k\to\prod_{i=1}^n\setsem{\ty_i}~|~\forall i.\ (\lambda \vec r.\, f_i(\vec r))\in Q_{\ty_i}\}\\
    Q_{{\ty}\to{\ty[2]}}&=\{f:\RR^k\to\setsem\ty \to \setsem{\ty[2]}~|~\forall g\in Q_
    {\ty}.\, \lambda (\vec r).\,f(\vec r)(g(\vec r))\in Q_{\ty[2]}\}
  \end{align*}
  Now we show the fundamental lemma: if ${x_1:\ty_1,\dots,x_n:\ty_n}\vdash u:\ty[2]$
  and $g_1\in Q_{\ty_1},\dots,g_n\in Q_{\ty_n}$ then $((g_1\dots g_n);\setsem u)\in Q_{\ty[2]}$. This is shown by induction on the structure of typing derivations. The only interesting step here is that the basic operations ($+$, $*$, $\sigmoid$, etc.) are smooth. 
  We deduce the statement of the theorem by putting $u=t$, $k=n$, and letting $g_i:\RR^n\to\RR$ be the projections. 
\end{proof}
At higher types, the logical relations $Q$ show that we can only define functions that send smooth functions to smooth functions, meaning that we can never use them
to build first-order functions that are not smooth. For example, $(\compose)$ in~(\ref{eqn:network}) has this property.

This logical relations proof suggests building a semantic model by interpreting types as sets with structure: for each type we have a set $X$ together with a set $Q^{\RR^k}_X\subseteq [\RR^k\to X]$ of plots. 
\begin{defi}\label{def:diffeo}
	A \emph{diffeological space} $(X,\plots{X})$ consists of a set $X$ together with, for each $n$ and each open subset $U$ of $\RR^n$,  a set $\plots{X}^U\subseteq [U\to X]$ of functions, called \emph{plots}, such that
	\begin{itemize}
	 	\item all constant functions are plots;
	 	\item if $f:V\to U$ is a smooth function and $p\in\plots{X}^U$, then $f;p\in\plots{X}^V$;
     \item if $\seq[i\in I]{p_i\in\plots{X}^{U_i}}$ is a compatible family of plots $(x\in U_i\cap U_j\Rightarrow p_i(x)=p_j(x))$
     and $\seq[i\in I]{U_i}$ covers $U$,
     then the gluing $p:U\to X:x\in U_i\mapsto p_i(x)$ is a plot.
	 \end{itemize} 
\end{defi}
We call a function $f:X\to Y$ between diffeological spaces \emph{smooth} if, for all plots
$p\in\plots{X}^U$, we have $p;f\in \plots{Y}^U$. We write $\Diff(X,Y)$ for the set of smooth maps from $X$ to $Y$. 
Smooth functions compose, and so we have a category $\Diff$ of diffeological spaces and smooth functions.

A diffeological space is thus a set equipped with structure.
Many constructions of sets carry over straightforwardly to diffeological spaces.

\begin{example}[Cartesian diffeologies]\label{ex:cartesian-diffeologies}
Each open subset $U$ of $\RR^n$ can be given the structure of a diffeological space by taking all the
smooth functions $V\to U$ as $\plots{U}^V$.
Smooth functions from $V\to U$ in the traditional sense coincide with
smooth functions in the sense of diffeological spaces~\cite{iglesias2013diffeology}.
Thus diffeological spaces have a profound relationship with ordinary calculus.

In categorical terms, this gives a full embedding of $\CartSp$ in $\Diff$. 
\end{example}
\begin{example}[Product diffeologies]
Given a family $\seq[i\in I]{X_i}$ of diffeological spaces,
we can equip the product $\prod_{i\in I}X_i$ of sets with the
\emph{product diffeology} in which $U$-plots are precisely the functions
of the form $\seq[i\in I]{p_i}$ for $p_i\in\plots{X_i}^U$.  

This gives us the categorical product in $\Diff$.
\end{example}
\begin{example}[Functional diffeology]
We can equip the set $\Diff(X,Y)$ of smooth functions between diffeological spaces with the \emph{functional diffeology}
in which $U$-plots consist of functions $f:U\to \Diff(X,Y)$ such that 
$(u,x)\mapsto f(u)(x)$ is an element of $\Diff(U\times X, Y)$.

This specifies the categorical function object in $\Diff$.
\end{example}

We can now give a denotational semantics for our language from Section~\ref{sec:simple-language} in the category of diffeological spaces. 
We interpret each type $\ty$ as a set $\sem \ty$ equipped with the relevant diffeology,
by induction on the structure of types:

\begin{align*}
    & \sem \reals \defeq \RR\qquad\text{with the standard diffeology} \\
    & \sem{\tProd{\ty_1}{\dots}{\ty_n}}\ \defeq\ \textstyle\prod_{i=1}^n\sem{\ty_i}\quad\text{with the product diffeology} \\
    & \sem{\ty\To\ty[2]} \defeq \Diff(\sem \ty,\sem{\ty[2]})\quad\text{with the functional diffeology}
\end{align*}
A context $\Gamma=(\var_1\colon\ty_1\dots \var_n\colon \ty_n)$ is interpreted as a diffeological space
$\sem \Gamma\defeq \prod_{i=1}^n\sem{\ty_i}$. 
Now, well-typed terms $\Gamma\vdash \trm:\ty$ are interpreted as smooth functions
$\sem \trm:\sem\Gamma\to \sem \ty$, giving a meaning to $\trm$ for every valuation of the context. 
This is routinely defined by induction on the structure of typing derivations once we choose 
a smooth function $\sem{\op}:\RR^n\to \RR$ to interpret each $n$-ary operation $\op\in\Op_n$. 
For example, constants $\cnst:\reals$ are interpreted as constant functions;
and the first-order operations ($+,*,\sigmoid$) are interpreted by composing with the corresponding functions, which are smooth: e.g., $\sem{\sigmoid(t)}(\rho)\defeq\sigmoid(\sem t(\rho))$, where $\rho\in\sem\Gamma$. 
Variables are interpreted as $\sem{\var_i}(\rho) \defeq \rho_i$. 
The remaining constructs are interpreted as follows, and it is straightforward to show that smoothness is preserved. 
\begin{align*}
&
\sem{\tTriple {\trm_1}{\dots}{ \trm_n}}(\rho)\defeq
(\sem{\trm_1}(\rho),\dots,\sem{\trm_n}(\rho))
&&
\sem{\fun{\var{:}\ty}{\trm}}(\rho)(a)\defeq
\sem {\trm}(\rho,a)\ \text{($a\in \sem {\ty}$)}
\\
&
\sem{\tMatch{\trm}{{.}{.}{.} }{\trm[2]}}(\rho)\defeq
\sem{\trm[2]}(\rho,\sem{\trm}(\rho))
&&
\sem{\trm\,\trm[2]}(\rho)\defeq
\sem{\trm}(\rho)(\sem{\trm[2]}(\rho))
\end{align*}

The logical relations proof of Proposition~\ref{prop:smooth} is reminiscent of diffeological spaces. We now briefly remark on the suitability of the axioms of diffeological spaces (Definition~\ref{def:diffeo}) for a semantic model of smooth programs. The first axiom says that we only consider reflexive logical relations. From the perspective of the interpretation, it recognizes in particular that the semantics of an expression of type $(\reals\to\reals)\to\reals$ is defined by its value on smooth functions rather than arbitrary arguments. That is to say, the set-theoretic semantics at the beginning of this section, $\setsem{(\reals\to\reals)\to\reals}$, is different from the diffeological semantics, $\sem{(\reals\to\reals)\to\reals}$. The second axiom for diffeological spaces ensures that the smooth maps in $\Diff(U,X)$ are exactly the plots in $\plots X^U$. The third axiom ensures that categories of manifolds fully embed into $\Diff$; it will not play a visible role in this paper -- in fact, Barthe et al.~\cite{bcdg-open-logical-relations} prove similar 
results for a simple language like ours by using plain logical relations (over $\Set$) and without demanding the diffeology axioms. However, we expect the third axiom to be crucial for programming with other smooth structures or partiality.

\subsection{Correctness of AD}\label{sec:simple-correctness}
We have shown that a term
$\var_1\colon\reals,\dots,\var_n\colon\reals\vdash \trm : \reals$
is interpreted as a smooth function
$\sem \trm:\RR^n\to \RR$, even if $\trm$ involves higher-order functions (like~(\ref{eqn:network})).
Moreover, the macro translation $\Dsyn[(k,R)]\trm$ is a function 
$\sem {\Dsyn[(k,R)] \trm}:(\RR^{\cchoose{R+k}{k}})^n\to \RR^{\cchoose{R+k}{k}}$ (Lemma~\ref{lem:functorialmacro}). 
This enables us to state a limited version of our main correctness theorem:
\begin{thm}[Semantic correctness of $\Dsynsymbol$ (limited)]
  \label{thm:fwd-cor-basic}
  For any term\\ $\var_1\colon\reals,\dots,\var_n\colon\reals\vdash \trm : \reals$, the function
  $\sem {\Dsyn[(k,R)] \trm}$ is the $(k,R)$-Taylor representation \eqref{eqn:taylorrepresentation} of
  $\sem \trm$.
  In detail: for any smooth functions
  $f_1\dots f_n:\RR^k\to\RR$, 
  \[
    \resizebox{\linewidth}{!}{\parbox{1.2\linewidth}{\[
    {\left({\left(
      \frac{\partial^{\alpha_1+\ldots+\alpha_k}f_j(x)}{\partial x_1^{\alpha_1}\cdots \partial x_k^{\alpha_k}}\right)}_{(\alpha_1,...,\alpha_k)=(0,...,0)}^{(R,0,...,0)}\right)}_{j=1}^n\hspace{-10pt};\sem{\Dsyn[(k,R)]\trm}
    =
    {\left(
      \frac{\partial^{\alpha_1+\ldots+\alpha_k}((f_1,\ldots,f_n);\sem{\trm})(x)}{\partial x_1^{\alpha_1}\cdots \partial x_k^{\alpha_k}}\right)}_{(\alpha_1,...,\alpha_k)=(0,...,0)}^{(R,0,...,0)}\hspace{-12pt}
    \text.\]}}
    \]
\end{thm}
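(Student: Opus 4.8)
The plan is to prove this by induction on the structure of the typing derivation of $\trm$, strengthening the statement appropriately so that the induction goes through at higher types. The bare statement of the theorem only speaks about first-order terms $\var_1:\reals,\dots,\var_n:\reals \vdash \trm:\reals$, but $\trm$ may have higher-order subterms (as in the neural network example), so a plain induction on $\trm$ cannot be formulated over first-order terms alone. First I would define, for each type $\ty$, a logical relation $R_\ty$ between elements of $\sem{\ty}$ and elements of $\sem{\Dsyn{\ty}}$ — or more precisely, following the pattern of Proposition~\ref{prop:smooth}, a relation indexed by the parameter dimension, relating a plot $\gamma\in\plots{\sem{\ty}}^{\RR^k}$ (thought of as a ``curve'' of inhabitants, with $\RR^k$ the Taylor-direction space) to a plot $\delta\in\plots{\sem{\Dsyn{\ty}}}^{\RR^k}$, with the requirement at base type $R_\reals$ that $\delta$ consist exactly of the tuple of partial derivatives $\bigl(\frac{\partial^{\alpha_1+\ldots+\alpha_k}\gamma(x)}{\partial x_1^{\alpha_1}\cdots\partial x_k^{\alpha_k}}\bigr)_{(\alpha_1,\ldots,\alpha_k)=(0,\ldots,0)}^{(R,0,\ldots,0)}$ along $\gamma$, and at product and function types the relation defined pointwise / by the usual Kripke-style ``preserves related arguments to related results'' clause.

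The key technical steps are then: (i) check that the relation is well-behaved (closed under the diffeological structure, in particular under reparametrisation, so that it is genuinely a ``logical relations over $\Diff$'' argument and matches the gluing construction foreshadowed for Section~\ref{sec:correctness}); (ii) prove the fundamental lemma: if $\ctx\tinf\trm:\ty$ and the inputs are related, then $\sem{\trm}$ and $\sem{\Dsyn{\trm}}$ send them to related outputs. The cases for variables, tupling, pattern matching, $\lambda$-abstraction and application are routine because $\Dsynsymbol$ is defined to commute with all these constructs and because $\Diff$ is cartesian closed (Lemma~\ref{lem:functorialmacro} already gives the syntactic side of this). The only substantive case is the operation rule $\op(\trm_1,\dots,\trm_n)$. (iii) Finally, instantiate the fundamental lemma with $\ty=\reals$, $k$ equal to the Taylor-direction dimension, and $\gamma_i$ the given smooth functions $f_i:\RR^k\to\RR$ (which lie in $R_\reals$ with $\delta_i$ their jet tuples); the conclusion is exactly the displayed equation, since $\sem{\Dsyn{\trm}}$ post-composed with $(f_1,\dots,f_n)$'s jets must equal the jet tuple of $(f_1,\dots,f_n);\sem{\trm}$.

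The main obstacle is the operation case, and it is precisely here that the Fa\`a di Bruno formula does the work: one must show that the syntactic definition of $D^{\alpha_1\ldots\alpha_k}\op$ — which literally transcribes the combinatorial multivariate Fa\`a di Bruno formula recalled in Section~\ref{sub:hod}, with the chosen terms $\partial_{\beta_1\cdots\beta_n}\op$ interpreted as the true partial derivatives of $\sem{\op}$ — computes, when applied to the jet tuples of $\trm_1,\dots,\trm_n$, exactly the jet tuple of $\op(\trm_1,\dots,\trm_n) = (\sem{\trm_1},\dots,\sem{\trm_n});\sem{\op}$. This reduces, after using the induction hypothesis to replace each $\sem{\Dsyn{\trm_i}}$ along the plot by the corresponding jet of $\sem{\trm_i}$, to the purely analytic identity that the Fa\`a di Bruno formula holds for $\sem{\op}$ composed with $(\gamma_1,\dots,\gamma_n):\RR^k\to\RR^n$ — i.e., to Theorem~2.1 of~\cite{savits2006some}, which we may assume. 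So the heart of the proof is organisational: set up the logical relation so the induction is clean, and verify that the syntactic $D^{\alpha}\op$ is the faithful term-level image of the analytic formula; everything else is bookkeeping about cartesian closed structure that $\Dsynsymbol$ visibly respects.
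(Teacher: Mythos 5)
Your proposal matches the paper's proof essentially step for step: the paper defines exactly this logical relation $S_\ty$ between $k$-dimensional plots in $\sem{\ty}$ and in $\sem{\Dsyn[(k,R)]{\ty}}$ (jet tuples at $\reals$, componentwise at products, the usual preservation clause at function types), proves the fundamental lemma by induction on typing derivations with the operation case resting on the definition of $\Dsyn[(k,R)]{\op(\trm_1,\ldots,\trm_n)}$ via the Fa\`a di Bruno formula, and concludes by instantiating at $\ty_i=\ty[2]=\reals$. Your extra step (i) about closure under reparametrisation is not needed for this limited theorem (the paper defers that structure to the gluing category of Section~\ref{sec:gluing}), but otherwise the argument is the same.
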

For instance, if $n=2$, then 
$\sem{\Dsyn[(1,1)]\trm}(\var_1,1,\var_2,0)=
\big(\sem\trm(\var_1,\var_2),\frac{\partial\sem\trm(\var,\var_2)}{\partial \var}(\var_1)\big)$.
\begin{proof}
  We prove this by logical relations. 
  A categorical version of this proof is in Section~\ref{sec:gluing}. 

For each type $\ty$, we define a binary relation
 $S_{\ty}$ between (open) $k$-dimensional plots in $\sem{\ty}$ and  (open) $k$-dimensional plots in $\sem{\Dsyn[(k,R)]{\ty}}$,
 i.e.~$S_{\ty}\subseteq  \plots{\sem{\ty}}^{\RR^k}\times
 \plots{\sem{\Dsyn[(k,R)]{\ty}}}^{\RR^k}$,
 by induction on $\ty$:
\begin{align*}
&S_{\reals}&\defeq&\left\{\left(f,{\left(
  \frac{\partial^{\alpha_1+\ldots+\alpha_k}f(x)}{\partial x_1^{\alpha_1}\cdots \partial x_k^{\alpha_k}}\right)}_{(\alpha_1,...,\alpha_k)=(0,...,0)}^{(R,0,...,0)}\right)~\Big|~f:\RR^k\to\RR\text{ smooth}\right\} \\
& S_{\tProd{\ty_1}{...}{\ty_n}} &\defeq&\{\sPair{\sTuple{f_1,...,f_n}}{\sTuple{g_1,...,g_n}}\mid
(f_1,g_1)\in S_{\ty_1},..., (f_n,g_n)\in S_{\ty_n}\} \\
& S_{\ty\To \ty[2]} &\defeq &\{(f_1,f_2) \mid \forall (g_1,g_2) \in S_{\ty}. (x{\mapsto} f_1(x)(g_1(x)),x{\mapsto} f_2(x)(g_2(x))) \in S_{\ty[2]} \} \\
\end{align*}

\noindent We then establish the following `fundamental lemma':

\begin{quotation}
\noindent If $\var_1{:} \ty_1,{.}{.}{.}, \var_n {:} \ty_n \vdash \trm : \ty[2]$
and, for all $1{\leq} i{\leq} n$,
$f_i: \RR^k\to \sem{\ty_i}$ and\\ $g_i : \RR^k\to~\sem{\Dsyn[(k,R)]{\ty_i}}$ are such that $(f_i,g_i)$ is in $S_{\ty_i}$,
then
\[\Big((f_1,\ldots,f_n);\sem{\trm},(g_1,\ldots,g_n);\sem{\Dsyn[(k,R)]{\trm}}\Big)\]
is in $S_{\ty[2]}$.
\end{quotation}

This is proved by a routine induction on the typing derivation of $\trm$.
The case for $\op(\trm_1,\ldots,\trm_n)$ relies on the precise definition of $\Dsyn[(k,R)]{\op(\trm_1,\ldots,\trm_n)}$. 

We conclude the theorem from the fundamental lemma by considering the case where $\ty_i=\ty[2]=\reals$
and taking each $g_i$ to be the tuple of derivatives of $f_i$ appearing on the left-hand side.
\end{proof}

\section{Extending the language: variant and inductive types}\label{sec:extended-language}
In this section, we show that the definition of forward AD and the semantics generalize
if we extend the language of Section~\ref{sec:simple-language} with variants
and inductive types.
As an example of inductive types, we consider lists.
This specific choice is only for expository purposes, and the whole
development works at the level of generality of arbitrary algebraic
data types generated as initial algebras of (polynomial) type constructors formed by
finite products and variants.
These types are easily interpreted in the category of diffeological spaces in much the same way. 
The categorically minded reader may regard this as a consequence of $\Diff$ being a concrete Grothendieck quasitopos, 
e.g.~\cite{baez2011convenient}, and hence complete and cocomplete. 

\subsection{Language.}\label{sec:extended-language-language}
We additionally consider the following types and terms:

\begin{figure}[H]
  \noindent\begin{syntax}
    \ty, \ty[2], \ty[3] & \gdefinedby & \ \dots & \syncat{types}                          \\
    &\gor& \Variant{
        \Inj{\Cns_1}{\ty_1}
        \vor \ldots \vor
        \Inj{\Cns_n}{\ty_n}
      }        &\synname{variant}          \\
&\gor\quad\,& \List{\ty}                 & \synname{list}\\[6pt]
    \trm, \trm[2], \trm[3] & \gdefinedby & \ \dots & \syncat{terms}                          \\
    &\gor&\tInj\ty\Cns\trm               & \synname{variant constructor}             \\
    &\gor& \tNil
    \ \gor\ \tCons{\trm}{\trm[2]}          & \synname{empty list and cons}\\
    &\gor& \vMatch {\trm  }  {
                        \Inj{\Cns_1}{\var_1}\To{\trm[2]_1}
                \vor \cdots
                \vor  \Inj{\Cns_n}{\var_n}\To{\trm[2]_n}
                }           & \synname{pattern matching: variants}\\
    &\gor& \lFold{\var_1}{\var_2}{\trm}{\trm[2]}{\trm[3]} & \synname{list fold}\\
\end{syntax} \end{figure}
We extend the type system according to the rules of Fig.~\ref{fig:types2}.
\begin{figure}[b]
  \oframebox{\scalebox{1}{\begin{minipage}{\linewidth}\[
  \begin{array}{@{}c@{}}
  \inferrule{
    \Ginf\trm{\ty_i}
  }{
    \Ginf{\tInj\ty{\Cns_i}\trm}{\ty}
  }((\Inj{\Cns_i}\ty_i) \in \ty)
\quad
  \inferrule{
    ~
  }{
    \Ginf \tNil {\List{\ty}} 
  }
  \quad
  \inferrule{
  \Ginf \trm \ty
  \\
  \Ginf {\trm[2]} {\List{\ty}}
  }{
  \Ginf {\tCons{\trm}{\trm[2]}} {\List{\ty}}
  }
\\
\\
  \inferrule{
    \Ginf\trm{\Variant{
                \Inj{\Cns_1}{\ty_1}
                \vor \ldots \vor
                \Inj{\Cns_n}{\ty_n}}}
    \\
    \text{for each $1 \leq i \leq n$: }
    \Ginf[, \var_i : \ty_i]{\trm[2]_i}{\ty}
  }{
    \Ginf{\vMatch \trm
                {\begin{array}[t]{@{}l@{\,}l@{}l@{}}
                    \Inj{\Cns_1}{\var_1}\To{\trm[2]_1}
                    \vor\cdots
                    \vor\Inj{\Cns_n}{\var_n}&\To{\trm[2]_n}
    }}
    \ty
                  \end{array}
  }
  \\
  \\
  \inferrule{
    \Ginf {\trm[2]} {\List{\ty}}
    \\
    \Ginf {\trm[3]} {\ty[2]}
    \\
    \Ginf[{,\var_1:\ty,\var_2:\ty[2]}] {\trm} {\ty[2]}
    }{
    \Ginf {\lFold{\var_1}{\var_2}{\trm}{\trm[2]}{\trm[3]}} {\ty[2]}
    }
\end{array}
\]
 \end{minipage}}}
\caption{Additional typing rules for the extended language.\label{fig:types2}}
\end{figure}
We can then extend $\Dsynsymbol[(k,R)]$ (again, writing it as $\Dsynsymbol$, for legibility) to our new types and terms by
\begin{align*}
        &\Dsyn{\Variant{\Inj{\Cns_1}{\ty_1}\vor \ldots \vor\Inj{\Cns_n}{\ty_n}}} \defeq 
        \Variant{\Inj{\Cns_1}{\Dsyn{\ty_1}}\vor \ldots \vor\Inj{\Cns_n}{\Dsyn{\ty_n}}}\\
        &\Dsyn{\List{\ty}} \defeq \List{\Dsyn{\ty}}\\
 &\Dsyn{\tInj\ty\Cns\trm} \defeq \tInj{\Dsyn{\ty}}\Cns{\Dsyn\trm}\\
& \Dsyn{\tNil} \defeq \tNil \\
& \Dsyn{\tCons{\trm}{\trm[2]}} \defeq \tCons{\Dsyn{\trm}}{\Dsyn{\trm[2]}}\\
&\Dsyn{\vMatch {\trm  }  {
    \Inj{\Cns_1}{\var_1}\To{\trm[2]_1}
\vor \cdots
\vor  \Inj{\Cns_n}{\var_n}\To{\trm[2]_n}
}} \defeq\\
&\,\quad\vMatch {\Dsyn\trm  }  {
    \Inj{\Cns_1}{\var_1}\To{\Dsyn{\trm[2]_1}}
\vor \cdots
\vor  \Inj{\Cns_n}{\var_n}\To{\Dsyn{\trm[2]_n}}
} \\
&\Dsyn{\lFold{\var_1}{\var_2}{\trm}{\trm[2]}{\trm[3]}} \defeq
\lFold{\var_1}{\var_2}{\Dsyn\trm}{\Dsyn{\trm[2]}}{\Dsyn{\trm[3]}}
\end{align*} 
To demonstrate the practical use of expressive type systems for
differential programming, we consider the following two examples.
\begin{example}[Lists of inputs for neural networks]
Typically, we run a neural network on a large dataset, the size of 
which might be determined at runtime.
In practice, one often evaluates a neural network on multiple inputs by summing the outcomes.
This can be written in our extended language as follows.
Suppose that we have a network $f:\bProd{\reals^n}{P}\To\reals$ that operates on single input vectors. 
We can construct one that operates on lists of inputs as follows:
\[
g\defeq \fun{\tTuple{l,w}}{\lFold{\var_1}{\var_2}{f\tTuple{\var_1,w} + \var_2}{l}{\underline{0}}} :
\bProd{\List{\reals^n}}{P}\To\reals
\]
\end{example}

\begin{example}[Missing data]
In practically every application of statistics and machine learning,
we face the problem of \emph{missing data}:
for some observations, only partial information is available.

In an expressive typed programming language such as the one we consider,
we can model missing data conveniently using the data type
$\Maybe{\ty}=\Variant{
  \Inj{\tNothingSym}{\Unit}\vor
  \Inj{\tJustSym}{\ty}
} $.
In the context of a neural network, one might use it as follows.
First, define some helper functions

\[
\begin{aligned}
&\tFromMaybe{\ty} \defeq\fun{\var}{\fun{m}{
  \vMatch {m  }  {
    \Inj{\tNothingSym}{\_}\To{\var}
\vor  \Inj{\tJustSym}{\var'}\To{\var'}
}  
}}
\\&
\tFromMayben{\ty}{n}\defeq
\fun{\tTriple{\var_1}{{.}{.}{.}}{\var_n}}{\fun{\tTriple{m_1}{{.}{.}{.}}{m_n}}{\tTriple{\tFromMaybe{\ty}\,\var_1\,m_1}{{.}{.}{.}}{\tFromMaybe{\ty}\,\var_n\,m_n}}  }\\&\qquad\qquad:\ty^n\To (\Maybe{\ty})^n\To\ty^n\\&
\tMap{\ty}{\ty[2]}\defeq\fun{f}{\fun{l}{\lFold{\var_1}{\var_2}{\tCons{f\, \var_1}{\var_2}}{l}{\tNil}}}
: (\ty\To\ty[2])\To\List{\ty}\To\List{\ty[2]}
\end{aligned}
\]
Given a neural network $f:\bProd{\List{\reals^k}}{P}\To\reals$,
we can build a new one that operates on 
a dataset for which some covariates (features) are missing, by passing 
default values to replace the missing covariates:
\begin{multline*}
\fun{\tTuple{l,\tTuple{m,w}}}
f\tTuple{\tMap{\reals}{\reals}\, (\tFromMayben{\reals}{k}\,m)\, l
,w}
:\bProd{\List{(\Maybe{\reals})^k}}{\bProd{\reals^{k}}{P}}\To\reals\end{multline*}
Then, given a dataset $l$ with missing covariates, we can perform automatic differentiation on this network to optimize the ordinary network parameters~$w$ \emph{and} the default values for missing covariates $m$ simultaneously.
\end{example}

\subsection{Semantics.}\label{sec:extended-language-semantics}
In Section~\ref{sec:semantics} we gave a denotational semantics for the simple language in diffeological spaces. This extends to the language in this section, as follows.
As before, each type $\ty$ is interpreted as a diffeological space, which is a set equipped with a family of plots:
\begin{itemize}
\item A variant type $\Variant{
        \Inj{\Cns_1}{\ty_1}
        \vor \ldots \vor
        \Inj{\Cns_n}{\ty_n}
      } $ is inductively interpreted as the disjoint union of the semantic spaces,
      $\textstyle\sem{\Variant{
        \Inj{\Cns_1}{\ty_1}
        \vor \dots \vor
        \Inj{\Cns_n}{\ty_n}
      }} \ \ \defeq \ \ \biguplus_{i=1}^n\sem {\ty_i}$, with $U$-plots
  \[  \textstyle\plots{
    \sem{\Variant{
        \Inj{\Cns_1}{\ty_1}
        \vor \ldots \vor
        \Inj{\Cns_n}{\ty_n}
      }}
    }^U\hspace{-4pt}\defeq
    \left\{\left.\coseq[j= 1]{U_j\xto {f_j}\sem{\ty_j}\to\biguplus_{i=1}^n\sem{\ty_i}}^n\hspace{-4pt}~\right|~U=\biguplus_{j=1}^n U_j,\;f_j\in\plots{\sem{\ty_j}}^{U_j}\right\}.\]
  \item A list type $\List\ty$ is interpreted as the union of the sets of length $i$ tuples for all natural numbers $i$,
    $ \sem{\List\ty} \ \ \defeq\ \  \biguplus_{i=0}^\infty \sem\ty^i$
    with $U$-plots
    \[\textstyle\plots{
      \sem{\List\ty}
      }^U\defeq\left\{\left.\coseq[j=0]{U_j\xto {f_j}\sem{\ty}^j\to\biguplus_{i=0}^\infty\sem{\ty}^i}^\infty~\right|~U=\biguplus_{j=0}^\infty U_j,\;f_j\in\plots{\sem{\ty}^j}^{U_j}\right\}\]
  \end{itemize}
  The constructors and destructors for variants and lists are interpreted as
  in the usual set-theoretic semantics.

  It is routine to show inductively that these interpretations are smooth. Thus every term
  $\Gamma\vdash \trm:\ty$ in the extended language is interpreted as a smooth function
  $\sem \trm:\sem\Gamma\to\sem \ty$ between diffeological spaces. 
  List objects as initial algebras are computed as usual in a cocomplete category (e.g.~\cite{jacobsrutten2011}). 
  More generally, the interpretation for algebraic data types follows exactly the usual categorical semantics of variant types and inductive types (e.g.~\cite{pitts1995categorical}).

\section{Categorical analysis of (higher-order) forward AD and its correctness}\label{sec:correctness}
This section has three parts. First, we give a categorical account of the functoriality of AD (Ex.~\ref{ex:canonical-fwd}). Then we introduce our gluing construction, and relate it to the correctness of AD (diagram~\ref{dgm:gluing}).
Finally, we state and prove a correctness theorem for all first-order types by considering a category of manifolds~(Theorem~\ref{thm:fwd-cor-full}). 

\subsection{Syntactic categories.}
The key contribution of this subsection is that the AD macro translation (Section~\ref{sec:admacro})
has a canonical status as a unique functor between categories with structure.
To this end, we build a syntactic category $\Syn$ from our language, which is a \emph{free} category with certain structure.
This means that for any category~$\catC$ with this structure, there is a unique structure-preserving functor $\Syn\to \catC$,
which is an interpretation of our language in that category. Generally speaking, this is the categorical view of denotational semantics (e.g.~\cite{pitts1995categorical}).
In this particular setting, the category $\Syn$ itself admits alternative forms of this structure, given by the dual numbers interpretation, the triple numbers interpretation, etc. of Section~\ref{sec:dual-numbers-taylor}.
This gives canonical functors $\Syn\to \Syn$ translating the language into itself, which are the AD macro translations (Section~\ref{sec:admacro}).
A key point is that $\Syn$ is almost entirely determined by universal properties
(for example, cartesian closure for the function space); the only freedom is in the choice of interpretation of
\begin{enumerate}
  \item the real numbers $\reals$, which can be taken as the plain type $\reals$, or as the dual numbers interpretation $\reals \ast\reals$ etc.;
  \item the primitive operations $\op$, which can be taken as the operation $\op$ itself, or as the derivative of the operation, etc.
\end{enumerate}

\begin{figure}[b]
  \oframebox{\scalebox{1.0}{\begin{minipage}{\linewidth}\begin{align*}
    &\tMatch{\tTriple{\trm_1}{\ldots}{\trm_n}}{\var_1,\ldots,\var_n}{\trm[2]}=
  \subst{\trm[2]}{\sfor{\var_1}{\trm_1},\ldots,\sfor{\var_n}{\trm_n}}
  \\
  &\subst{\trm[2]}{\sfor{\var[2]}{\trm}}\freeeq{\var_1,\ldots,\var_n}
\tMatch{\trm}{\var_1,\ldots,\var_n}{\subst{\trm[2]}{
    \sfor{\var[2]}{\tTriple{\var_1}{\ldots}{\var_n}}}}
\\
&\vMatch {\Inj{\Cns_i}{\trm}  }  {
        \Inj{\Cns_1}{\var_1}\To{\trm[2]_1}
\vor \cdots
\vor  \Inj{\Cns_n}{\var_n}\To{\trm[2]_n}
} = \subst{\trm[2]_i}{\sfor{\var_i}{\trm}}\\
&\subst{\trm[2]}{\sfor{\var[2]}{\trm}}\freeeq{\var_1,\ldots,\var_n}
\vMatch {\trm}  {
        \Inj{\Cns_1}{\var_1}\To{\subst{\trm[2]}{\sfor{\var[2]}{\Inj{\Cns_1}{\var_1}}}}
\vor \cdots
\vor  \Inj{\Cns_n}{\var_n}\To{\subst{\trm[2]}{\sfor{\var[2]}{\Inj{\Cns_n}{\var_n}}}}
}\\
&\lFold{\var_1}{\var_2}{\trm}{\tNil}{\trm[3]} = \trm[3]
\\
&\lFold{\var_1}{\var_2}{\trm}{\tCons{\trm[2]_1}{\trm[2]_2}}{\trm[3]} =
\subst{\trm}{\sfor{\var_1}{\trm[2]_1},\sfor{\var_2}
{\lFold{\var_1}{\var_2}{\trm}{\trm[2]_2}{\trm[3]}}}
 \\
&u= \subst{\trm[2]}{\sfor{\var[2]}{\tNil}} ,
\subst{\trm[3]}{\sfor{\var_2}{\trm[2]}}
=\subst{\trm[2]}{\sfor{\var[2]}{\tCons{\var_1}{\var[2]}}}
\Rightarrow
 \subst{\trm[2]}{\sfor{\var[2]}{\trm}} \freeeq{\var_1,\var_2}
\lFold{\var_1}{\var_2}{\trm[3]}{\trm}{u} \\
&(\fun{\var}{\trm})\,\trm[2] = \subst{\trm}{\sfor{\var}{\trm[2]}}\\
&\trm \freeeq{\var} \fun{\var}{\trm\,\var} \\
&\quad \text{\it We write $\freeeq{\var_1,\ldots,\var_n}$ to indicate that the variables are not free in the left-hand side}
\end{align*} \end{minipage}}}
\caption{Standard $\beta\eta$-laws (e.g.~\cite{pitts1995categorical}) for products, functions, variants and lists. \label{fig:beta-eta}}
\end{figure}
In more detail, our language induces a syntactic category as follows.
\begin{defi}
	Let $\Syn$ be the category whose objects are types,
        and where a morphism $\ty[1]\to\ty[2]$ is a term in context 
        $\var:\ty[1]\vdash \trm:\ty[2]$ modulo the $\beta\eta$-laws
        (Fig.~\ref{fig:beta-eta}).
        Composition is by substitution. 
\end{defi}
For simplicity, we do not impose identities involving the primitive operations, such as the arithmetic identity $x+y=y+x$ in $\Syn$.
As is standard, this category has the following universal property.

\begin{lem}[e.g.~\cite{pitts1995categorical}]\label{lem:syn-initial}
  For every bicartesian closed category $\catC$ with list objects,
  and every choice of an object $\freeF(\reals)\in\catC$ and
  morphisms $\freeF(\op)\in\catC(\freeF(\reals)^n, \freeF(\reals))$ for all $\op\in \Op_n$ and $n\in\NN$,
  in $\catC$, there is a unique functor $\freeF:{\Syn\to\catC}$ respecting these interpretations and preserving the bicartesian closed structure as well as
 list objects.
\end{lem}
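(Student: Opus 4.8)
The plan is the standard \emph{term-model} argument: one shows that $\Syn$ is the free bicartesian closed category with list objects on the signature consisting of the base sort $\reals$ together with the operation symbols $\op\in\Op_n$, and the functor $\freeF$ is the induced interpretation. I would organise the proof in four stages: define $\freeF$ on objects, define $\freeF$ on morphisms, verify well-definedness together with functoriality and preservation of structure, and finally verify uniqueness. On objects, define $\freeF$ by structural recursion on types: $\freeF(\reals)$ is the given object; a product type $\bProd{\ty}{\ty[2]}$ (and its $n$-ary analogue) goes to the categorical product of the $\freeF(\ty_i)$; $\freeF(\ty\To\ty[2])\defeq \freeF(\ty[2])^{\freeF(\ty)}$, the exponential; a variant type $\Variant{\ldots}$ goes to the coproduct $\coprod_i\freeF(\ty_i)$; and $\freeF(\List{\ty})$ is the chosen list object on $\freeF(\ty)$. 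A context $\Gamma=(\var_1{:}\ty_1,\ldots,\var_n{:}\ty_n)$ is sent to the product $\freeF(\Gamma)\defeq\prod_i\freeF(\ty_i)$. Every clause is legitimate precisely because $\catC$ is bicartesian closed with list objects; we also record that coproducts in $\catC$ are distributive, since $(-)\times A$ is a left adjoint, as this is needed to interpret variant elimination.

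On morphisms, recall that a map $\ty\to\ty[2]$ of $\Syn$ is an equivalence class of terms $\var{:}\ty\vdash\trm:\ty[2]$. I would define, by induction on typing derivations, a morphism $\freeF_\Gamma(\trm)\colon\freeF(\Gamma)\to\freeF(\ty)$ for every derivable $\Gamma\vdash\trm:\ty$, interpreting each term former by the evident categorical combinator: a variable by a product projection; $\op(\trm_1,\ldots,\trm_n)$ by $\freeF(\op)$ after the pairing of the $\freeF_\Gamma(\trm_i)$; tupling by a mediating map into a product and product pattern matching by precomposition with projections; $\lambda$-abstraction by transposition and application by the evaluation map after a pairing; variant constructors by coproduct injections and variant pattern matching by copairing, using distributivity to identify $\freeF(\Gamma)\times\coprod_i\freeF(\ty_i)$ with $\coprod_i(\freeF(\Gamma)\times\freeF(\ty_i))$; $\tNil$ and $\tCons{-}{-}$ by the structure maps of the list object; and $\lFold{\var_1}{\var_2}{\trm}{\trm[2]}{\trm[3]}$ by the unique list-algebra homomorphism out of the initial algebra, in the parametrised form that threads the context object $\freeF(\Gamma)$ through. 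Each choice is forced by the relevant universal property, which is exactly what will make the uniqueness clause go through.

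Next come well-definedness, functoriality and structure preservation. I would first prove a substitution lemma --- $\freeF_\Gamma(\subst{\trm}{\sfor{\var}{\trm[2]}}) = \big(\langle\id[{\freeF(\Gamma)}],\freeF_\Gamma(\trm[2])\rangle\big)\,;\,\freeF_{\Gamma,\var{:}\ty[2]}(\trm)$, together with its simultaneous-substitution generalisation --- by induction on $\trm$. Granting this, well-definedness of $\freeF$ on $\Syn$-morphisms is the statement that $\freeF_\Gamma$ identifies the two sides of every equation of Fig.~\ref{fig:beta-eta}: each $\beta$-law becomes an instance of the universal property of the relevant structure (finite products for the pair-$\beta$, the adjunction $(-\times A)\dashv(-)^A$ for the function-$\beta$, coproducts for the variant-$\beta$, and the two defining algebra equations of the list object for the two fold equations), while each $\eta$-law becomes an instance of the matching uniqueness clause --- in particular, the fold $\eta$-law of Fig.~\ref{fig:beta-eta}, with its side condition on free variables, is exactly the uniqueness of the mediating map out of the initial list algebra. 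Functoriality follows: the identity on $\ty$ is represented by $\var{:}\ty\vdash\var:\ty$, interpreted by a projection that is $\id[{\freeF(\ty)}]$, and composition in $\Syn$ is substitution, which $\freeF$ respects by the substitution lemma. Preservation of the bicartesian closed structure and of list objects is then read off from the definition of $\freeF$ on objects together with the interpretation of the structural term formers by the structural categorical maps.

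Finally, for uniqueness, let $\freeF'\colon\Syn\to\catC$ be any functor preserving this structure with $\freeF'(\reals)=\freeF(\reals)$ and $\freeF'(\op)=\freeF(\op)$ for all $\op$. An induction on types shows $\freeF'(\ty)\cong\freeF(\ty)$ coherently: both functors preserve products, exponentials, coproducts and list objects, and these are determined up to canonical isomorphism, so the comparison isomorphisms assemble coherently (for a fully rigid statement one passes to a strict model, as is standard). An induction on typing derivations then shows $\freeF'_{\Gamma}(\trm)=\freeF_\Gamma(\trm)$, because at every term former $\freeF'$ is forced to act by the same structural map that defines $\freeF$. The step I expect to be the main obstacle --- more bookkeeping than mathematics --- is the well-definedness verification: setting up the substitution lemma cleanly and then discharging all the $\beta\eta$-equations of Fig.~\ref{fig:beta-eta}, especially the two list-fold equations and the list $\eta$-law, against the universal property of the list object regarded as a parametrised initial algebra. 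This is precisely why the lemma is attributed to the literature rather than proved in full.
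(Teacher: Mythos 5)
Your proposal is correct and is exactly the standard term-model construction that the paper's brief ``proof notes'' allude to (and that the cited reference, e.g.\ Pitts, carries out in detail): interpret types by structural recursion using the bicartesian closed structure and list objects, interpret terms by induction on typing derivations, check the substitution lemma and the $\beta\eta$-laws of Fig.~\ref{fig:beta-eta} against the universal properties, and get uniqueness by induction on types and terms. No substantive difference from the paper's intended argument.
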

\begin{proof}[Proof notes]
  The functor $\freeF:\Syn\to \catC$ is a canonical denotational semantics
  for the language, interpreting types as objects of $\catC$ and terms as morphisms. 
  For instance,
  $\freeF({\ty\To\ty[2]})\defeq (\freeF\ty\To\freeF{\ty[2]})$,
  the function space in the category $\catC$,
  and $\freeF{(\trm\,\trm[2])}$ 
  is the composite $(\freeF{\trm},\freeF{\trm[2]});\mathit{eval}$. 
\end{proof}
When $\catC=\Diff$,
the denotational semantics of the language in diffeological spaces (Sections~\ref{sec:semantics} and~\ref{sec:extended-language-semantics})
can be understood as the unique structure-preserving functor 
$\sem-:\Syn\to \Diff$ satisfying 
$\sem \reals=\RR$, $\sem \sigmoid=\sigmoid$ and so on. 
  
\begin{example}[Canonical definition of forward AD]\label{ex:canonical-fwd}
The forward AD macro $\Dsynsymbol[(k,R)]$ (Sections~\ref{sec:simple-language} and~\ref{sec:extended-language-language}) 
arises as a canonical bicartesian closed functor on $\Syn$ that preserves list objects. 
Consider the unique
bicartesian closed functor $\freeF:\Syn\to\Syn$ that preserves list objects
such that $\freeF(\reals)=\reals^{\cchoose{R+k}k}$ and
$$\freeF(\op)=\var[3]\!:\!\tProd{\freeF(\reals)}{\ldots}{\freeF(\reals)}\vdash 
\tMatch{\var[3]}{\var_1,...,\var_n}{\Dsyn[\!(k,R)]{\op(\var_1,\ldots,\var_n)}}:\freeF(\reals).$$
  Then, for any type $\ty$, $\freeF(\ty)=\Dsyn[(k,R)] \ty$, and 
  for any term $x:\ty[1]\vdash \trm:\ty[2]$, 
  $\freeF(\trm)=\Dsyn[(k,R)] \trm$ as morphisms $\freeF(\ty[1])\to \freeF(\ty[2])$ in the syntactic category.

  This observation is a categorical counterpart to Lemma~\ref{lem:functorialmacro}.
\end{example}

\subsection{Categorical gluing and logical relations.}\label{sec:gluing}
Gluing is a method for building new categorical models and has been used for many purposes, including logical relations and realizability~\cite{mitchell1992notes}. 
Our logical relations argument in the proof of Theorem~\ref{thm:fwd-cor-basic} can be understood in this setting.
(In fact we originally found the proof of Theorem~\ref{thm:fwd-cor-basic} in this way.) 
In this subsection, for the categorically minded, we explain this, and in doing so we quickly recover a correctness result for the more general language in Section~\ref{sec:extended-language} and 
for arbitrary first-order types.

The general, established idea of categorical logical relations starts from the observation that logical relations are defined by induction on the structure of types.
Types have universal properties in a categorical semantics (e.g.~cartesian closure for the function space), and so we can organize the logical relations argument by defining some category~$\catC$ of relations and observing that it has the requisite categorical structure. The interpretation of types as relations can then be understood as coming from a unique structure-preserving map $\Syn\to\catC$. 
In this paper, our logical relations are not quite as simple as a binary relation on sets; rather, they are relations between plots. Nonetheless, these relations still form a category with the appropriate structure, which follows because they can still be regarded as arising from a gluing construction, as we now explain. 

We define a category $\Gl[k]$ whose objects are triples $(X,X',S)$ where $X$ and~$X'$ are diffeological spaces and $S\subseteq\plots X^{\RR^k}\times \plots {X'}^{\RR^k}$ is a relation between their $k$-dimensional plots. A morphism $(X,X',S)\to (Y,Y',T)$ is a pair of smooth functions $f\colon X\to Y$, $f'\colon X'\to Y'$, such that if 
$(g,g')\in S$ then $(g;f,g';f')\in T$.
The idea is that this is a semantic domain in which we can simultaneously interpret the language and its automatic derivatives.
\begin{prop}\label{prop:gluing}
  The category $\Gl[k]$ is bicartesian closed, has list objects, and the projection functor 
  $\projf:\Gl[k]\to\Diff\times \Diff$ preserves this structure. 
\end{prop}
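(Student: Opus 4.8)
The plan is to exhibit $\Gl[k]$ as a comma category (a gluing/Artin-gluing category) and then to invoke the standard result that such categories inherit bicartesian closed structure and list objects from their components in a way that makes the projection preserve it. Concretely, the category of plots $\plots{(-)}^{\RR^k}$ assembles into a functor $\Diff \to \Set$ (sending $X$ to the set $\plots{X}^{\RR^k}$ of $k$-dimensional plots and a smooth map to post-composition); call it $\Gamma$. The category of binary relations $\mathbf{Rel}$ over $\Set$ — objects are pairs of sets with a relation between them, morphisms are pairs of functions respecting the relation — is itself bicartesian closed with list objects, and the forgetful functor $\mathbf{Rel}\to\Set\times\Set$ preserves this structure (this is a routine, well-known fact; the relational function space is $(A,A',S)\to(B,B',T) = (A\to B,\, A'\to B',\, \{(f,f')\mid \forall (a,a')\in S.\,(f(a),f'(a'))\in T\})$, and relational lists are computed pointwise-elementwise). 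Then $\Gl[k]$ is precisely the pullback (iso-comma category) of $\mathbf{Rel}\to\Set\times\Set$ along $\Gamma\times\Gamma : \Diff\times\Diff \to \Set\times\Set$.

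**Key steps, in order.** First I would check that $\Gamma=\plots{(-)}^{\RR^k}:\Diff\to\Set$ is a well-defined functor and, crucially, that it preserves finite products, exponentials, coproducts, and list objects up to the canonical comparison maps needed below — this follows directly from the explicit descriptions of the product, functional, coproduct and list diffeologies given in Section~\ref{sec:semantics} and~\ref{sec:extended-language-semantics}: e.g. a $k$-plot into $\prod_i X_i$ is exactly a tuple of $k$-plots, a $k$-plot into $X\Rightarrow Y$ corresponds to an element of $\plots{X\times(\RR^k)}{}$-style data, and a $k$-plot into $\biguplus_i X_i$ (or into $\sem{\List\ty}$) is exactly a partition of $\RR^k$ into opens together with plots of the summands, so up to the relevant comparison it is "locally" a plot of a single summand. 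Second, I would recall (or spell out briefly) that $\mathbf{Rel}$ over $\Set$ is bicartesian closed with list objects and that its forgetful functor strictly preserves this. Third, I would invoke the general gluing lemma: given a functor $\Gamma:\catD\to\Set$ preserving the relevant structure, the comma/pullback category $\mathbf{Rel}\downarrow(\Gamma\times\Gamma)$ (i.e.\ $\Gl[k]$) is bicartesian closed with list objects, and the projection to $\catD\times\catD$ preserves all of it. Finally I would note that $\Diff$ itself is bicartesian closed with list objects (stated in Section~\ref{sec:extended-language-semantics} — it is a Grothendieck quasitopos, hence complete and cocomplete and locally cartesian closed), so all the ingredients are in place.

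**Explicit constructions in $\Gl[k]$ (to record, not belabour).** For the record, products are $(X,X',S)\times(Y,Y',T) = (X\times Y,\, X'\times Y',\, \{((g_1,g_2),(g_1',g_2'))\mid (g_1,g_1')\in S,\ (g_2,g_2')\in T\})$; the exponential $(X,X',S)\To(Y,Y',T)$ has underlying diffeological spaces $\Diff(X,Y)$ and $\Diff(X',Y')$ with relation $\{(h,h')\mid \forall (g,g')\in S.\ (x\mapsto h(x)(g(x)),\, x\mapsto h'(x)(g'(x)))\in T\}$ — note this is exactly the clause for $S_{\ty\To\ty[2]}$ in the proof of Theorem~\ref{thm:fwd-cor-basic}; coproducts are $(X\uplus Y,\ X'\uplus Y',\ S\uplus T)$ where two plots are related iff they arise from a common partition of $\RR^k$ on which they are componentwise related; and the list object on $(X,X',S)$ has underlying spaces $\sem{\List{}}$-style $\biguplus_i X^i$ and $\biguplus_i X'^i$ with the evident "same length, componentwise $S$-related" relation on plots (again matching the $S_{\List\ty}$ clause one would write in the extended correctness proof). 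The verifications that these satisfy the universal properties, and that $\projf$ sends them to the corresponding structure in $\Diff\times\Diff$, are mechanical once the comma-category description is in hand.

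**Main obstacle.** The only non-formal point — and the step I expect to require genuine care — is the interaction of $\Gamma$ with coproducts and list objects. Post-composition of a $k$-plot $\RR^k\to\biguplus_i X_i$ does \emph{not} land in a single summand globally, only locally on a partition of $\RR^k$ by opens; so $\Gamma$ does not strictly preserve coproducts, and one must instead work with the correct comparison map and check that the relational coproduct in $\Gl[k]$ — defined via "exists a compatible open partition on which the plots are componentwise related" — genuinely yields the coproduct and is preserved by $\projf$. This is exactly where the third diffeology axiom (gluing of compatible plots) does the work, ensuring the glued map is again a plot. Handling lists is the same issue since $\sem{\List\ty}=\biguplus_i\sem{\ty}^i$. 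Everything else — products, exponentials, the bicartesian closed verifications — is a routine instance of the standard gluing machinery.
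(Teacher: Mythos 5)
Your overall route is essentially the paper's own: the paper proves this proposition by exhibiting $\Gl[k]$ as the full subcategory (on monomorphisms) of the comma category $\id[\Set]\downarrow \Diff(\RR^k,-)\times\Diff(\RR^k,-)$ and invoking the general theory of categorical gluing (citing~\cite[Lemma~15]{johnstone-lack-sobocinski}); your pullback of $\mathbf{Rel}\to\Set\times\Set$ along $\Gamma\times\Gamma$ with $\Gamma=\Diff(\RR^k,-)$ is the same subscone construction in different clothing, and your explicit formulas for products, exponentials, coproducts and lists in $\Gl[k]$ are the right ones (the exponential clause is indeed the $S_{\ty\To\ty[2]}$ clause of Theorem~\ref{thm:fwd-cor-basic}).

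Two points in your write-up should be corrected, though neither sinks the argument. First, your ``key step'' that $\Gamma$ preserves exponentials is false: $\Gamma(X\Rightarrow Y)\cong\Diff(\RR^k\times X,Y)$, which is not the full function set $\Set(\Gamma X,\Gamma Y)$ (the canonical comparison map is injective but not surjective). Fortunately it is also unnecessary: the standard gluing/subscone lemma needs only that $\Gamma$ preserve finite products (being representable, it preserves all limits); the exponential of the glued category is not ``$\Gamma$ of anything'' but is given directly by the relation you wrote, verified using the functional diffeology (a $U$-plot of $X\Rightarrow Y$ is a smooth map $U\times X\to Y$). Second, your ``main obstacle'' is misdiagnosed: the relations in $\Gl[k]$ live on plots whose domain is all of $\RR^k$, which is connected, so a $k$-plot into $\biguplus_i X_i$ (or into $\sem{\List\ty}$) factors through a single summand --- the only partition of $\RR^k$ into disjoint opens is the trivial one --- and hence $\Gamma$ \emph{does} preserve these coproducts. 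The diffeology gluing axiom is not where the work happens here; it would matter only for plots on disconnected domains, which $\Gl[k]$ never consults. (This connectedness observation is exactly the one the paper uses for Lemma~\ref{lemma:Dsem}.) Your ``common partition'' relation therefore collapses to the plain image relation, and coproducts and list objects, with their preservation by $\projf$, go through by the usual image-relation and induction arguments.
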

\begin{proof}[Proof notes]
The category $\Gl[k]$ is a full subcategory of the comma category\\ $\id[\Set]\downarrow \Diff(\RR^k,-)\times \Diff(\RR^k,-)$. 
The result thus follows by the general theory of categorical gluing~(e.g.~\cite[Lemma~15]{johnstone-lack-sobocinski}).

\end{proof}
We give a semantics $\semgl{-}=(\semgl{-}_0,\semgl{-}_1, S_{-})$ for the language in $\Gl[k]$, 
interpreting types $\ty$ as objects $(\semgl{ \ty}_0,\semgl{\ty}_1,S_{\ty})$,
and terms as morphisms. 
We let $\semgl{\reals}_0\defeq \RR$ and 
$\semgl{\reals}_1\defeq \RR^{\cchoose{R+k}k}$, with the relation
$$S_\reals\defeq \left\{(f,{\left(
  \frac{\partial^{\alpha_1+\ldots+\alpha_k}f(x)}{\partial x_1^{\alpha_1}\cdots \partial x_k^{\alpha_k}}\right)}_{(\alpha_1,...,\alpha_k)=(0,...,0)}^{(R,0,...,0)})~|~f:\RR^k\to\RR\text{ smooth}\right\}.$$
We interpret the operations $\op$ according to $\sem{\op}$ in $\semgl{-}_0$, but according to the $(k,R)$-Taylor representation of $\sem{\op}$ in $\semgl{-}_1$.
For instance, when $k=2$ and $R=2$,
$\semgl{*}_1:\RR^6\times\RR^6\to \RR^6$ is 
\begin{align*}\semgl{*}_1&((x_{00},x_{01},x_{02},x_{10},x_{11},x_{20}),
(y_{00},y_{01},y_{02},y_{10},y_{11},y_{20}))\defeq\\
&(x_{00}y_{00},\\
&\;x_{00}y_{01}+x_{01}y_{00},\\
&\;x_{02}y_{00}+2x_{01}y_{01}+x_{00}y_{02},\\
&\;x_{00}y_{10}+x_{10}y_{00},\\
&\;x_{11}y_{00}+x_{01}y_{10}+x_{10}y_{01}+x_{00}y_{11},\\
&\;x_{20}y_{00}+2x_{10}y_{10}+x_{00}y_{20})\text.\end{align*}
At this point, one checks that these interpretations are indeed morphisms in $\Gl[k]$. 
This is equivalent to the statement that $\semgl{\op}_1$ is the $(k,R)$-Taylor representation of 
$\sem{\op}$~(\ref{eqn:taylorrepresentation}). 
The remaining constructions of the language are interpreted using the categorical structure of $\Gl[k]$, following Lemma~\ref{lem:syn-initial}.

Notice that the following diagram commutes. One can check this by hand or note that it follows from the initiality of $\Syn$ (Lemma~\ref{lem:syn-initial}):
all the functors preserve all the structure. 
\begin{equation}\label{dgm:gluing}
\xymatrix{
\Syn\ar[rr]^-{(\id,\Dsyn[(k,R)]-)}\ar[d]_{\semgl-}&&\Syn\times \Syn\ar[d]^{\sem-\times \sem-}
\\
\Gl[k]\ar[rr]_-{\projf}&&\Diff\times\Diff
}
\end{equation}
We thus arrive at a restatement of the correctness theorem (Theorem~\ref{thm:fwd-cor-basic}), which holds even for the extended language with variants and lists, because 
for any $x_1:\reals,{.}{.}{.},x_n:\reals\vdash \trm:\reals$, 
the interpretations $(\sem \trm,\sem{\Dsyn[(k,R)] \trm})$ are in the image of the projection $\Gl[k]\to \Diff\times\Diff$, 
and hence $\sem{\Dsyn[(k,R)]\trm}$ is a $(k,R)$-Taylor representation of~$\sem\trm$.

\subsection{Correctness at all first-order types, via manifolds.}
We now generalize Theorem~\ref{thm:fwd-cor-basic} to hold at all first-order types, not just the reals.

So far, we have shown that our macro translation (Section~\ref{sec:admacro}) gives correct derivatives for functions of real numbers, even if other types are involved in the definitions of the functions (Theorem~\ref{thm:fwd-cor-basic} and Section~\ref{sec:gluing}). We can state this formally because functions of real numbers have well-understood derivatives (Section~\ref{sec:dual-numbers-taylor}).
There are no established mathematical notions of derivatives at higher types, and
so we cannot even begin to argue that our syntactic derivatives of functions
$(\reals\to\reals)\to (\reals\to\reals)$ match with some existing mathematical notion (see also Section~\ref{derivatives-at-higher-types}). 

However, for functions of first-order type, like
$\List\reals\to\List\reals$, there \emph{are} established mathematical notions of derivative,
because we can understand
$\List\reals$ as the \emph{manifold} of all tuples of reals, and then appeal to the well-known theory of manifolds and jet bundles. We do this now to achieve a correctness theorem for all first-order types (Theorem~\ref{thm:fwd-cor-full}). The key high-level points are that
\begin{itemize}
\item manifolds support a notion of differentiation, and an interpretation of all first-order types, but not an interpretation of higher types;
\item diffeological spaces support all types, including higher types, but not an established notion of differentiation in general;
\item manifolds and smooth maps embed fully and faithfully in diffeological spaces, preserving the interpretation of first-order types, so we can use the two notions together.
\end{itemize}
We now explain this development in more detail.

For our purposes, a smooth manifold $M$ is a second-countable Hausdorff
topological space together with a smooth atlas. 
In more detail,
a topological space $X$ is second-countable when there exists a collection $U:=\{U_i\}_{i \in \mathbb{N}}$ of open subsets of $X$ 
such that any open subset of $X$ can be written as a union of elements of $U$. 
A topological space $X$ is Hausdorff if for every pair of distinct points $x$ and $y$, there exist disjoint open subsets $U,V$ of $X$ such that $x\in U, y\in V$.
A smooth atlas of a topological space $X$ is an open cover $\cover$ together
with homeomorphisms $\seq[U\in\cover]{\phi_U:U\to V_U\subseteq \RR^{n(U)}}$ onto open subsets (called charts or local coordinates) such that $\phi_U^{-1};\phi_V$ is smooth on its domain of definition for all $U,V\in\cover$.
A function $f:M\to N$ between manifolds is smooth if $\phi^{-1}_U;f;\psi_V$ is
smooth on its domain of definition for all charts $\phi_U$ and $\psi_V$ of $M$ and $N$, respectively.
Let us write $\Man$ for this category.
This definition of manifolds is a slight generalization of the more usual one from differential geometry because different charts in an atlas may have different finite dimensions $n(U)$. 
Thus we consider manifolds with dimensions that are potentially unbounded, albeit locally finite. 

Each open subset of $\RR^n$ can be regarded as a manifold. This lets us regard the category of manifolds $\Man$ as a full subcategory of the category of diffeological spaces. We consider a manifold $(X,\{\phi_U\}_U)$ as a diffeological space with the same carrier set $X$ whose plots $\plots X ^U$, called the \emph{manifold diffeology}, are the smooth functions in $\Man(U,X)$. A function $X\to Y$ is smooth in the sense of manifolds if and only if it is smooth in the sense of diffeological spaces~\cite{iglesias2013diffeology}. For the categorically minded reader, this means that we have a full embedding of $\Man$ into $\Diff$.
Moreover, the natural interpretation of the first-order fragment of our language in $\Man$ coincides with that in $\Diff$.
That is, the embedding of $\Man$ into $\Diff$ preserves finite products and countable coproducts (hence initial algebras of polynomial endofunctors).
\begin{prop}
  Suppose that a type $\ty$ is first-order, i.e.~it is just built from reals, products, variants, and lists (or, again, arbitrary inductive types), and not function types. Then the diffeological space $\sem{\ty}$ is a manifold. 
\end{prop}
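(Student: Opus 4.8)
The plan is a straightforward structural induction on the first order type $\ty$. The statement I actually carry through the induction is slightly stronger than the one to be proved: for each first order $\ty$, the set $\sem\ty$ carries a manifold structure whose \emph{manifold diffeology} coincides with the diffeology that the denotational semantics of Sections~\ref{sec:semantics} and~\ref{sec:extended-language-semantics} assigns to $\sem\ty$. Strengthening the statement in this way is what makes the inductive steps compose, and the coincidence of diffeologies at each step is exactly what is granted by the remark preceding the proposition, namely that the full embedding $\Man\hookrightarrow\Diff$ preserves finite products and countable coproducts, together with Example~\ref{ex:cartesian-diffeologies} (smoothness in the two senses agrees on open subsets of $\RR^n$).

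For the base case, $\sem\reals=\RR$ with the standard diffeology, which is by definition the manifold diffeology of $\RR$ viewed as a manifold with a single chart. For a product $\ty=\tProd{\ty_1}{\dots}{\ty_n}$ with each $\sem{\ty_i}$ a manifold $M_i$, the set $\prod_{i=1}^n M_i$ is again a manifold: take as atlas the products of charts of the $M_i$; finite products preserve second-countability and the Hausdorff property, and the transition maps are smooth because they are smooth in each coordinate. (A product chart at a point has dimension $\sum_i n(U_i)$, which need not be globally bounded; this is precisely why we allowed manifolds of locally finite but possibly unbounded dimension.) Since $\Man\hookrightarrow\Diff$ preserves finite products, the product diffeology on $\prod_i\sem{\ty_i}$ is the manifold diffeology of $\prod_i M_i$. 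For a variant $\ty=\Variant{\Inj{\Cns_1}{\ty_1}\vor\dots\vor\Inj{\Cns_n}{\ty_n}}$ with each $\sem{\ty_i}$ a manifold $M_i$, the disjoint union $\biguplus_{i=1}^n M_i$ is a manifold: its atlas is the union of the atlases of the summands (each chart domain is open in the disjoint union), transition maps between charts of distinct summands have empty domain of definition and are vacuously smooth, and a finite disjoint union of second-countable Hausdorff spaces is again second-countable and Hausdorff. The list type $\ty=\List{\ty[2]}$ is the same argument applied to the countable family $\seq[i\geq 0]{\sem{\ty[2]}^i}$, each member of which is a manifold by the product case; a \emph{countable} disjoint union still preserves second-countability. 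In all these cases the fact that $\Man\hookrightarrow\Diff$ preserves finite products and countable coproducts identifies the product/coproduct diffeology prescribed in Section~\ref{sec:extended-language-semantics} with the manifold diffeology of the constructed manifold, closing the induction. For an arbitrary inductive type built from finite products and variants, one argues instead that the associated polynomial endofunctor restricts to $\Man$ (by the same closure properties) and that its initial algebra, computed as a countable colimit, is again a manifold whose diffeology is computed as in $\Diff$.

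The one point that needs care — and essentially the only place the argument could go wrong — is keeping the \emph{diffeologies} synchronised rather than merely the underlying sets: it is not enough to observe abstractly that a manifold with carrier $\sem\ty$ exists; one must know that the plots the semantics assigns to $\sem\ty$ are exactly the plots of that manifold's diffeology, which is discharged by the preservation properties of $\Man\hookrightarrow\Diff$ cited above. A secondary, easy check is that every type former involved uses only a countable index set (finitely many summands for variants, a single finite product, an $\NN$-indexed family for lists), so that second-countability is never destroyed; an inductive type with uncountable branching would fall outside $\Man$.
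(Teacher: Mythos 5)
Your proof is correct and takes essentially the same route as the paper's: a structural induction on first order types, using closure of (generalized, locally finite dimensional) manifolds under finite products and countable disjoint unions together with the stated fact that the embedding $\Man\hookrightarrow\Diff$ preserves finite products and countable coproducts, so that the semantic diffeology on $\sem{\ty}$ is a manifold diffeology. The paper's proof notes only add the further observation that each such $\sem{\ty}$ is moreover isomorphic (via a non-identity isomorphism) to a normal form $\biguplus_{i}\RR^{d_i}$, which is not needed for the statement as given.
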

\begin{proof}[Proof notes] This is proved by induction on the structure of types. In fact, one may show that every such $\sem\ty$ is isomorphic to a manifold of the form $\biguplus_{i=1}^n\RR^{d_i}$ where the bound $n$ is finite or $\infty$, but this isomorphism is typically not the identity map. 
\end{proof}

We recall how the Taylor representation of any morphism $f:M\to N$ of manifolds 
is given by its action on jets~\cite[Chapter IV]{kolar1999natural}.
For each point $x$ in a manifold $M$, define the $(k,R)$-\emph{jet space} $\Dsemsymbol[(k,R)]_x M$ to be the set $\{\gamma\in\Man(\RR^k,M)\mid \gamma(0)=x\}/\sim$ of equivalence classes $[\gamma]$ of $k$-dimensional plots $\gamma$
in $M$ based at $x$, where we identify $\gamma_1\sim \gamma_2$ iff all partial derivatives of order 
$\leq R$ coincide in the sense that

\[\frac{\partial^{\alpha_1+...+\alpha_k}(\gamma_1;f)}{\partial u_1^{\alpha_1}\cdots \partial u_k^{\alpha_k}} (0)
=\frac{\partial^{\alpha_1+...+\alpha_k}(\gamma_2;f)}{\partial u_1^{\alpha_1}\cdots \partial u_k^{\alpha_k}}(0)\] 

\noindent for all smooth $f:M\to\RR$ and all multi-indices $(\alpha_1,...,\alpha_k)=(0,...,0),...,(R,0,...,0)$.
In the case of $(k,R)=(1,1)$, a $(k,R)$-jet space is better known as a \emph{tangent space}.
The \emph{$(k,R)$-jet bundle} (a.k.a. \emph{tangent bundle}, in the case $(k,R)=(1,1)$) of $M$ is the set $\Dsem[(k,R)]{M}\defeq \biguplus_{x\in M} \Dsemsymbol[(k,R)]_x (M)$. The charts of $M$ equip $\Dsem[(k,R)]{M}$ with a canonical manifold structure.
The (manifold) diffeology of these jet bundles can be concisely summarized by the plots 
$\plots{\Dsem[(k,R)]{M}}^U=\set{f:U \to |\Dsem[(k,R)]{M}|\mid \exists g\in \plots{M}^{U\times \RR^k}. \forall u\in U. (g(u,0),[v\mapsto g(u,v)])=f(u)}$.\\
Then $\Dsemsymbol[(k,R)]$ acts on smooth maps $f:M\to N$ to give $\Dsem[(k,R)]{f}:\Dsem[(k,R)]{M}\to\Dsem[(k,R)]{N}$, defined as $\Dsem[(k,R)]{f}\sPair{x}{[\gamma]}\defeq \sPair{f(x)}{[\gamma;f]}$.
In local coordinates, this action $\Dsem[(k,R)]{f}$ is seen to coincide precisely with the 
$(k,R)$-Taylor representation of $f$ given by the Fa\`a di Bruno formula~\cite{merker2004four}.
Thus the $(k,R)$-jet bundle is a functor $\Dsemsymbol[(k,R)]:\Man\to\Man$~\cite{kolar1999natural}.

We can understand the jet bundle of a composite space in terms of that of its parts.
\begin{lem}\label{lemma:Dsem}
There are canonical isomorphisms $\Dsem[(k,R)]{\biguplus_{i=1}^\infty M_i} \cong \biguplus_{i=1}^\infty \Dsem[(k,R)]{M_i}$ and 
$\Dsem[(k,R)]{M_1\times\ldots \times M_n}\cong \Dsem[(k,R)]{M_1}\times\ldots\times \Dsem[(k,R)]{M_n}$.
\end{lem}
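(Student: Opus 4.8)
The plan is to exhibit both isomorphisms as the canonical comparison maps coming from the functoriality of $\Dsemsymbol[(k,R)]$ recalled above, and then to verify that they are bijective on points with smooth inverse. For the coproduct, write $M\defeq\biguplus_{i=1}^\infty M_i$ with injections $\iota_i\colon M_i\to M$; functoriality yields $\Dsem[(k,R)]{\iota_i}\colon\Dsem[(k,R)]{M_i}\to\Dsem[(k,R)]{M}$, and the universal property of the (countable) coproduct in $\Man$ assembles these into a single smooth $\theta\colon\biguplus_i\Dsem[(k,R)]{M_i}\to\Dsem[(k,R)]{M}$. For the product, write $M\defeq M_1\times\dots\times M_n$ with projections $\pi_j\colon M\to M_j$; functoriality and the universal property of the product give a smooth $\psi\colon\Dsem[(k,R)]{M}\to\Dsem[(k,R)]{M_1}\times\dots\times\Dsem[(k,R)]{M_n}$. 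Both maps are natural by construction, so it remains only to show they are invertible.

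For $\theta$ I would argue as follows. Since $\RR^k$ is connected and each $M_i$ is clopen in $M$, any plot $\gamma\in\Man(\RR^k,M)$ with $\gamma(0)=x\in M_i$ has image contained in $M_i$ and therefore factors uniquely as $\gamma=\gamma_i;\iota_i$; moreover, because every smooth $M_i\to\RR$ is the restriction of a smooth $M\to\RR$ (extend by a constant on the other summands, which is smooth as the summands are clopen) and conversely, two such plots are $\sim$-equivalent in $M$ precisely when their factorizations are $\sim$-equivalent in $M_i$. Hence $\Dsemsymbol[(k,R)]_x(M)=\Dsemsymbol[(k,R)]_x(M_i)$ fibrewise for $x\in M_i$, so $\theta$ is a bijection; smoothness of $\theta^{-1}$ follows from the plot description of jet bundles recalled above together with the coproduct diffeology (a plot of $\Dsem[(k,R)]{M}$ breaks up, over a clopen partition of its domain obtained from connectedness of $\RR^k$, into plots of the summands $\Dsem[(k,R)]{M_i}$).

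For $\psi$ the key step is to identify the $\sim$-relation on plots of $M_1\times\dots\times M_n$ with the product of the component relations. By the product diffeology a plot $\gamma\in\Man(\RR^k,M)$ is a tuple $(\gamma_1,\dots,\gamma_n)$ of plots $\gamma_j\in\Man(\RR^k,M_j)$; I would show that for $x=(x_1,\dots,x_n)$, $\gamma\sim\gamma'$ in $M$ iff $\gamma_j\sim\gamma'_j$ in $M_j$ for all $j$. One direction is immediate: for smooth $h\colon M_j\to\RR$, applying $\sim$ to $f\defeq\pi_j;h$ and using $\gamma;f=\gamma_j;h$ shows $\gamma_j\sim\gamma'_j$. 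The converse uses the Fa\`a di Bruno formula (Section~\ref{sub:hod}): in a product chart around $x$, the partial derivatives of $\gamma;f$ of order $\leq R$ at $0$ are polynomial in the derivatives of $f$ and in the partial derivatives of order $\leq R$ of the $\gamma_j$ at $0$, so if the latter agree for $\gamma$ and $\gamma'$ then so do the former, for every smooth $f$. Consequently $\Dsemsymbol[(k,R)]_x(M)\cong\prod_j\Dsemsymbol[(k,R)]_{x_j}(M_j)$; taking disjoint unions over $x\in M$ and distributing the finite product over the coproduct exhibits $\psi$ as a bijection, and smoothness of $\psi^{-1}$ is again read off from the plot description together with the product diffeology.

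I expect the main obstacle to be the product case --- specifically the passage, via the Fa\`a di Bruno formula, from equivalence of the component plots to equivalence of the product plot. This is really just the statement that the order-$\leq R$ Taylor data of a composite is determined by the order-$\leq R$ Taylor data of its parts, so no analysis beyond that formula is required; the remainder of the argument is organizational, combining the definitions of jet space and jet bundle with connectedness of $\RR^k$ and the coproduct and product diffeologies.
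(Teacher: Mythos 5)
Your proposal is correct and follows essentially the same route as the paper's own (much terser) proof notes: the coproduct case rests on connectedness of $\RR^k$ forcing plots to factor through a single summand, and the product case on the fact that order-$\leq R$ Taylor data of a composite is determined component-wise (the Fa\`a di Bruno/higher chain rule point you isolate). Your additional verifications --- the clopen extension argument identifying the jet equivalence on a summand with that on the union, and the smoothness of the inverses via the plot description of jet bundles --- are just a more explicit working-out of the same argument.
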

\begin{proof}[Proof notes]
For disjoint unions, notice that smooth morphisms from $\RR^k$ into a 
disjoint union of manifolds always factor through a single inclusion, because $\RR^k$ is connected.
For products, it is well-known that partial derivatives of a morphism $\sTuple{f_1,...,f_n}$ are calculated 
component-wise~\cite[ex. 3-2]{lee2013smooth}.
\end{proof}

We define a canonical isomorphism $\DtoT{\ty}:\sem{\Dsyn[(k,R)]{\ty}}\to\Dsem[(k,R)]{\sem{\ty}}$ for every first-order type $\ty$, by induction on the structure of types. We let $\DtoT{\reals}:\sem{\Dsyn[(k,R)]{\reals}}\to\Dsem[(k,R)]{\sem{\reals}}$ send a tuple $(x_\alpha)_{|\alpha|\leq R}$ to
$\big(x_{0,\ldots,0},[t\mapsto \sum_{|\alpha|\leq R}\frac{x_\alpha}{\alpha_1!\cdots\alpha_k!}t_1^{\alpha_1}\cdots t_k^{\alpha_k}]\big)$.
For the other types, we use Lemma~\ref{lemma:Dsem}. 
We can now phrase correctness at all first-order types.
 \begin{thm}[Semantic correctness of ${\Dsynsymbol[(k,R)]}$ (full)]\label{thm:fwd-cor-full}
   For any first-order type $\ty$, any first-order context $\Gamma$,
   and any term $\Gamma\vdash\trm:\ty$,
   the syntactic translation $\Dsynsymbol[(k,R)]$ coincides with the $(k,R)$-jet bundle functor, modulo these canonical isomorphisms:
   \[\xymatrix{
       \sem{\Dsyn[(k,R)]\Gamma}\ar[rr]^{\sem{\Dsyn[(k,R)] \trm}}\ar[d]_{\DtoT{\Gamma}}^\cong
       &&\sem{\Dsyn[(k,R)] \ty}\ar[d]^{\DtoT{\ty}}_\cong
       \\
       \Dsem[(k,R)]{\sem\Gamma}\ar[rr]_{\Dsem[(k,R)]{\sem \trm}}
       &&\Dsem[(k,R)] {\sem \ty}
       }\]
 \end{thm}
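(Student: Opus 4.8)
The plan is to derive this from the limited correctness result (Theorem~\ref{thm:fwd-cor-basic}) together with the gluing analysis, rather than by a fresh induction on $\trm$ --- the latter is not available, since $\trm$ may contain higher-order subterms even when $\Gamma$ and $\ty$ are first order. The starting point is the commuting diagram~\eqref{dgm:gluing}: the gluing semantics $\semgl{-}\colon\Syn\to\Gl[k]$ sends $\Gamma\vdash\trm:\ty$ to a morphism $\semgl\Gamma\to\semgl\ty$ in $\Gl[k]$ whose image under $\projf$ is $(\sem\trm,\sem{\Dsyn[(k,R)]\trm})$. Spelling out what it means to be a morphism of glued objects, this says precisely: for all $k$-dimensional plots $(\gamma,\delta)\in S_\Gamma$ one has $(\gamma;\sem\trm,\;\delta;\sem{\Dsyn[(k,R)]\trm})\in S_\ty$, where $S$ is the relation occurring in the $\Gl[k]$-semantics $\semgl{-}$ (Prop.~\ref{prop:gluing}).

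The core of the proof is the following \emph{key claim}: for every first-order type $\ty$ and all plots $\gamma\in\plots{\sem\ty}^{\RR^k}$, $\delta\in\plots{\sem{\Dsyn[(k,R)]\ty}}^{\RR^k}$, one has $(\gamma,\delta)\in S_\ty$ if and only if $\delta;\DtoT{\ty}$ equals the canonical prolongation plot $u\mapsto\bigl(\gamma(u),[v\mapsto\gamma(u+v)]\bigr)$ of $\gamma$ into the jet bundle $\Dsem[(k,R)]{\sem\ty}$; the same then holds for first-order contexts, which are finite products of first-order types. I would prove this by induction on the structure of first-order types. The base case $\ty=\reals$ reduces to the standard fact that, for a fixed base point, the $(k,R)$-jet of a map $\RR^k\to\RR$ is equivalent data to the tuple of its partial derivatives of order $\leq R$, equivalently to its degree-$R$ Taylor polynomial; combined with the description of $S_\reals$ (which relates $f$ to the tuple of its partials) and of $\DtoT{\reals}$ (which sends such a tuple to the jet of the associated Taylor polynomial), this gives the equivalence. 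The product case is immediate from Lemma~\ref{lemma:Dsem}, the componentwise clause defining $S_{\tProd{\ty_1}{\ldots}{\ty_n}}$, and the componentwise definition of $\DtoT{-}$. For variants and lists one uses that $\RR^k$ is connected, so any $k$-dimensional plot of $\biguplus_i\sem{\ty_i}$ (resp.\ of $\biguplus_i\sem\ty^i$) factors through a single summand, and likewise on the $\Dsynsymbol$- and $\Dsemsymbol[(k,R)]$-sides, using Lemma~\ref{lemma:Dsem} again; since coproducts in $\Gl[k]$ are computed summandwise (Prop.~\ref{prop:gluing}) and $\DtoT{-}$ is defined summandwise, the inductive hypothesis on the components yields the claim on the whole type.

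Granting the claim, I would finish as follows. Fix any $\xi\in\sem{\Dsyn[(k,R)]\Gamma}$ and write $\DtoT{\Gamma}(\xi)=(x_0,[\gamma])\in\Dsem[(k,R)]{\sem\Gamma}$, choosing a representative curve $\gamma\colon\RR^k\to\sem\Gamma$ of this jet (possible since $\sem\Gamma$ is a manifold); then $\gamma(0)=x_0$ and the prolongation plot of $\gamma$ takes the value $\DtoT{\Gamma}(\xi)$ at $0$. Let $\delta$ be this prolongation plot composed with $(\DtoT{\Gamma})^{-1}$: it is a plot of $\sem{\Dsyn[(k,R)]\Gamma}$ with $\delta(0)=\xi$, and $(\gamma,\delta)\in S_\Gamma$ by the key claim. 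Relation preservation then gives $(\gamma;\sem\trm,\;\delta;\sem{\Dsyn[(k,R)]\trm})\in S_\ty$, so the key claim applied to $\ty$ says that $(\delta;\sem{\Dsyn[(k,R)]\trm});\DtoT{\ty}$ is the prolongation plot of $\gamma;\sem\trm$. Evaluating at $0$ reads $\DtoT{\ty}\bigl(\sem{\Dsyn[(k,R)]\trm}(\xi)\bigr)=\bigl(\sem\trm(\gamma(0)),[v\mapsto\sem\trm(\gamma(v))]\bigr)=\Dsem[(k,R)]{\sem\trm}\bigl(x_0,[\gamma]\bigr)=\Dsem[(k,R)]{\sem\trm}\bigl(\DtoT{\Gamma}(\xi)\bigr)$, i.e.\ commutativity of the square at $\xi$; since $\xi$ was arbitrary, we are done. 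I expect the main obstacle to be the key claim, and within it the base case: one must line up the chosen coordinates on $\sem{\Dsyn[(k,R)]\reals}$ and on the $(k,R)$-jet space with the Fa\`a di Bruno formula and confirm that $\DtoT{\reals}$ is the intended identification for general $(k,R)$ --- the bookkeeping for products, variants and lists is then routine given Lemma~\ref{lemma:Dsem} and Proposition~\ref{prop:gluing}.
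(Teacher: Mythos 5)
Your proposal is correct and follows essentially the same route as the paper: your key claim is exactly the paper's observation that for first-order $\ty$ one has $S_{\sem\ty}=\{(f,\tilde f)\mid \tilde f;\DtoT{\ty}=\bar f\}$ (proved by induction on types), combined with relation preservation from diagram~(\ref{dgm:gluing}). Your concluding step, choosing a representative curve and evaluating prolongations at $0$, is just an unfolded version of the paper's appeal to the fact that a map sending prolongations $\bar\gamma$ to $\overline{\gamma;g}$ must be $\Dsem[(k,R)]{g}$, so there is no substantive difference.
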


 \begin{proof}[Proof notes]
   For any $k$-dimensional plot $\gamma\in\Man(\RR^k,M)$, let $\bar \gamma\in\Man(\RR^k,\Dsem[(k,R)] M)$ be the $(k,R)$-jet curve, given by $\bar \gamma(x)=(\gamma(x),[t\mapsto \gamma(x+t)])$.
First, we note that a smooth map $h:\Dsem[(k,R)] M\to \Dsem[(k,R)] N$ is of the form $\Dsem[(k,R)]{g}$ for some $g:M\to N$ if for all smooth $\gamma:\RR^k\to M$ we have $\bar \gamma;h=\overline{(\gamma;g)}:\RR^k\to \Dsem[(k,R)] N$. This generalizes~(\ref{eqn:taylorrepresentation}).
Second, for any first-order type $\ty$, $S_{\ty}=\{(f,\tilde{f})~|~\tilde{f};\DtoT{\ty}=\bar f\}$. This is shown by induction on the structure of types. 
We conclude the theorem from diagram (\ref{dgm:gluing}), by combining these two observations.

 \end{proof}

\section{Discussion: What are derivatives of higher-order functions?}\label{derivatives-at-higher-types}
In our gluing categories $\Gl[k]$ of Section~\ref{sec:gluing}, we have avoided the question of what semantic derivatives should be associated with higher-order functions. 
Our syntactic macro $\Dsynsymbol$ provides a specific derivative for every definable function, 
but in the model $\Gl[k]$ there is only a \emph{relation} between plots and their corresponding Taylor representations, and this relation is not necessarily single-valued. 
Our approach has been deliberately neutral about what ``the'' correct derivative of a higher-order function should be.
Instead, what matters is that we are using ``a'' derivative that is correct in the sense that it can never be used to produce incorrect derivatives for first-order functions, where we do have an unambiguous notion of correct derivative.

\subsection{Automatic derivatives of higher-order functions may not be unique!}
For a concrete example showing that derivatives of higher-order functions might not be unique in our framework, let us consider the case $(k,R)=(1,1)$ and focus on first derivatives of the evaluation 
function
\begin{align*}
\ev:\ &\RR\to \sem{(\reals\To\reals)\To \reals}=\RR\To (\RR\Rightarrow \RR)\Rightarrow \RR;\\
&r\mapsto(f\mapsto f(r)).
\end{align*}
Our macro $\Dsynsymbol$ will return $\lambda a:\RR. \lambda f:\RR\times\RR\Rightarrow \RR\times\RR. f(a,1)$.
In this section we show that the $\lambda$-term $\lambda a:\RR. \lambda f:\RR\times\RR\Rightarrow \RR\times\RR. \sort f(a,1)$ 
is also a valid derivative of the evaluation map, where $\sort: (\RR\times\RR\Rightarrow \RR\times\RR)\Rightarrow (\RR\times\RR\Rightarrow \RR\times\RR)$ is defined by
\begin{align*}
    \sort:=\lambda f.\lambda (r,r').(\pi_1(f(r,0)),\pi_2(f(r,0))+r'\cdot\nabla((-,0);f;\pi_1)(r)).
\end{align*}
This map is idempotent and it converts any map $\RR \times \RR \To \RR \times \RR$ into a map that is affine in the tangent component and has the correct slope there.
For example, $(\sort(\swap))(r,r')=(0,r)$, where we write
\begin{align*}
    \swap:\ &\RR\times\RR\to \RR\times\RR\\
    &(r,r')\mapsto (r',r).
    \end{align*}

According to our gluing semantics, a function $g:\semgl{\ty}_1\to \semgl{\ty[2]}_1$ defines \emph{a} correct $(k,R)$-Taylor representation of a function $f:\semgl{\ty}_0\to \semgl{\ty[2]}_0$ iff $(f,g)$ defines a morphism 
$\semgl{\ty}\to\semgl{\ty[2]}$ in $\Gl[k]$.
In particular, there is no guarantee that every $f$ has a \emph{unique} correct $(k,R)$-Taylor representation $g$.
(Although such Taylor representations are, in fact, unique when $\ty,\ty[2]$ are first-order types.)
The gluing relation $\semgl{(\reals\To\reals)\To \reals}$ in $\Gl[1]$ relates curves $\gamma:\RR\To (\RR\Rightarrow \RR)\Rightarrow \RR$ to ``tangent curves'' $\gamma': \RR\To (\RR\times \RR\Rightarrow \RR\times \RR)\Rightarrow \RR\times\RR$.
In this relation, the function $\ev$ is related to at least two different tangent curves.

\begin{lem}
We have a smooth map 
\begin{align*}
\sort:\ &(\RR\times\RR\Rightarrow \RR\times\RR)\to (\RR\times\RR\Rightarrow \RR\times\RR)\\
&f\mapsto ((r,r')\mapsto (\pi_1(f(r,0)),\pi_2(f(r,0))+r'\cdot\nabla ((-,0);f;\pi_1)(r))).
\end{align*}
\end{lem}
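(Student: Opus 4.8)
The plan is to verify smoothness straight from the definition of the functional diffeology, after first checking that $\sort$ is even well defined, i.e.\ that $\sort(f)$ really is a smooth map $\RR\times\RR\to\RR\times\RR$ for every smooth $f$. For this last point: given smooth $f:\RR\times\RR\to\RR\times\RR$, the map $h_f\defeq(-,0);f;\pi_1:\RR\to\RR$, sending $x\mapsto\pi_1(f(x,0))$, is a composite of smooth maps, hence smooth, so its derivative $\nabla h_f:\RR\to\RR$ is smooth too; therefore $\sort(f)\colon (r,\_)\mapsto(\pi_1(f(r,0)),\nabla h_f(r))$ is a pairing of two smooth functions of $r$ and is smooth. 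Thus $\sort$ does land in $\Diff(\RR\times\RR,\RR\times\RR)$, which is the diffeological space $\RR\times\RR\Rightarrow\RR\times\RR$.

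Next I would use that, by the functional diffeology, a map $\Phi$ between two such function spaces is smooth exactly when it carries plots to plots, and that a $U$-plot of $\Diff(\RR\times\RR,\RR\times\RR)$ is precisely (the curry of) a smooth map $\hat F:U\times\RR\times\RR\to\RR\times\RR$. So it suffices to fix an open $U\subseteq\RR^m$ and smooth $\hat F:U\times\RR\times\RR\to\RR\times\RR$, set $F(u)\defeq\hat F(u,-,-)$, and show that
\[ (u,(r,s))\;\longmapsto\;\sort(F(u))(r,s)\;=\;\bigl(\pi_1(\hat F(u,r,0)),\;\nabla\bigl(x\mapsto\pi_1(\hat F(u,x,0))\bigr)(r)\bigr) \]
is a smooth map $U\times\RR\times\RR\to\RR\times\RR$; by the product diffeology on the codomain this amounts to smoothness of each component.

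The first component, $(u,r,s)\mapsto\pi_1(\hat F(u,r,0))$, is the composite of the smooth map $(u,r,s)\mapsto(u,r,0)$ with $\hat F$ and then $\pi_1$, hence smooth. For the second, put $g(u,x)\defeq\pi_1(\hat F(u,x,0))$, smooth $U\times\RR\to\RR$ for the same reason; the second component is then $(u,r,s)\mapsto\tfrac{\partial g}{\partial x}(u,r)$, i.e.\ the partial derivative of $g$ in its last variable precomposed with the smooth projection $(u,r,s)\mapsto(u,r)$, and partial derivatives of smooth maps between open subsets of Cartesian spaces are again smooth. So both components are smooth, $\sort$ sends plots to plots, and is therefore smooth. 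There is no real obstacle here: the only content beyond unwinding the functional and product diffeologies is the elementary calculus fact that differentiating a jointly smooth family of functions $\RR\to\RR$ in the function argument produces a jointly smooth family --- the same fact underlying the manifold diffeology on jet bundles used earlier --- and that is exactly what makes the second component well behaved.
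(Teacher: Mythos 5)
Your proof is correct and follows essentially the same route as the paper's: both verify smoothness by checking that postcomposition with $\sort$ sends $U$-plots of the functional diffeology to $U$-plots, the key ingredient in each case being that differentiating a jointly smooth family in the function argument again yields a jointly smooth family. If anything, your uncurried formulation via $g(u,x)=\pi_1(\hat F(u,x,0))$ and $\partial g/\partial x$ is slightly more explicit than the paper's displayed computation about which variable is being differentiated.
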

\begin{proof}
Let $f\in \plots{\RR\times\RR\Rightarrow \RR\times\RR}^U$ and let $\gamma_1,\gamma_2\in \plots{\RR}^U$.
By definition of the exponential in $\Diff$, the map $F:U\times\RR\to\RR\times\RR$ given by
$F(u,r)=f(u)(r,0)$ is smooth. Hence both $\pi_1 F$ and $\pi_2 F$ are smooth, and so is the partial derivative
$\frac{\partial(\pi_1F)}{\partial r}:U\times\RR\to\RR$.
Consequently,
\[u\mapsto (f;\sort)(u)(\gamma_1(u),\gamma_2(u))=
\left(\pi_1F(u,\gamma_1(u)),\pi_2F(u,\gamma_1(u))+\gamma_2(u)\cdot\frac{\partial(\pi_1F)}{\partial r}(u,\gamma_1(u))\right)
\]
is in $\plots{\RR\times\RR}^U$, by definition of the product in $\Diff$.
It follows that $(f;\sort) \in \plots{\RR\times\RR\Rightarrow \RR\times\RR}^U$.
\end{proof}

\begin{prop}
We have both $(\ev,\ev'_1)\in \semgl{(\reals\To\reals)\To \reals}$ and 
$(\ev,\ev'_2)\in \semgl{(\reals\To\reals)\To \reals}$ for 
\begin{align*}
\ev'_1 : & \RR \to (\RR\times\RR\Rightarrow \RR\times\RR)\Rightarrow \RR\times\RR\\
&a\mapsto (f\mapsto f(a,1))\\
\ev'_2 : & \RR \to (\RR\times\RR\Rightarrow \RR\times\RR)\Rightarrow \RR\times\RR\\
&a\mapsto (f\mapsto (\sort f)(a,1)).
\end{align*}
\end{prop}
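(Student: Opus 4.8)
The plan is to unwind the statement into the exponential relation of the gluing category $\Gl[1]$ and then test it against arbitrary related plots. Write $\semgl{\ty}=(\semgl{\ty}_0,\semgl{\ty}_1,S_\ty)$ as usual. By the construction of the cartesian closed structure of $\Gl[1]$, a pair of plots $(\gamma,\gamma')$ lies in $S_{(\reals\To\reals)\To\reals}$ precisely when, for every $(g,g')\in S_{\reals\To\reals}$, the pair $\bigl(x\mapsto\gamma(x)(g(x)),\ x\mapsto\gamma'(x)(g'(x))\bigr)$ lies in $S_\reals$; and $S_{\reals\To\reals}$ in turn consists of those $(g,g')$ such that for every smooth $p\colon\RR\to\RR$ the pair $\bigl(x\mapsto g(x)(p(x)),\ x\mapsto g'(x)(p(x),\nabla p(x))\bigr)$ lies in $S_\reals=\{(h,\,x\mapsto(h(x),\nabla h(x)))\mid h\colon\RR\to\RR\text{ smooth}\}$. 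So I fix $(g,g')\in S_{\reals\To\reals}$; since $x\mapsto\ev(x)(g(x))=g(x)(x)$ is smooth, writing $\psi\defeq(x\mapsto g(x)(x))$ it remains in each case to show that $x\mapsto\ev'_i(x)(g'(x))$ equals $x\mapsto(\psi(x),\nabla\psi(x))$.

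For $i=1$ we have $\ev'_1(x)(g'(x))=g'(x)(x,1)$, so the required membership is nothing but the instance of $(g,g')\in S_{\reals\To\reals}$ obtained by taking the test plot $p=\id[\RR]$ (for which $p(x)=x$ and $\nabla p(x)=1$). Hence $(\ev,\ev'_1)\in\semgl{(\reals\To\reals)\To\reals}$ with no further work.

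For $i=2$ we have $\ev'_2(x)(g'(x))=\sort\bigl(g'(x)\bigr)(x,1)$. Here I would first invoke the preceding lemma: as $x\mapsto g'(x)$ is a plot and $\sort$ is smooth, $x\mapsto\sort(g'(x))$ is again a plot of $\sem{\Dsyn{\reals\To\reals}}$, so $x\mapsto\sort(g'(x))(x,1)$ is a genuine smooth curve in $\RR$ --- this is also exactly what makes $\ev'_2$ a well-defined curve in $\sem{\Dsyn{(\reals\To\reals)\To\reals}}$ in the first place. To identify this curve with $x\mapsto(\psi(x),\nabla\psi(x))$ I would exploit the constraints that $(g,g')\in S_{\reals\To\reals}$ places on $g'$: feeding in the constant test plots $p\equiv c$ pins down the behaviour of $g'(x)$ on the slice $\{(t,0)\}$, namely $\pi_1\bigl(g'(x)(t,0)\bigr)=g(x)(t)$, so the one-variable map $t\mapsto\pi_1(g'(x)(t,0))$ occurring inside the definition of $\sort$ is precisely $t\mapsto g(x)(t)$. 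Substituting this into $\sort$ and unwinding, one is left with a chain-rule check that the resulting pair is the dual-numbers datum $(\psi(x),\nabla\psi(x))$ demanded by $S_\reals$, which gives $(\ev,\ev'_2)\in\semgl{(\reals\To\reals)\To\reals}$.

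The only real content --- and the step I expect to be the main obstacle --- is the $i=2$ case. One must verify that interposing the idempotent ``sorting'' map $\sort$, which genuinely changes the tangent datum $g'$ (for instance it collapses $\swap$ to the constant $(0,0)$ map), nonetheless does not destroy correctness: the resulting curve is still the dual-numbers representation that $S_\reals$ demands, because $\sort$ is designed to realise exactly the dual-numbers representation of that one-variable map. Keeping the chain-rule bookkeeping for $\sort$ straight, while the smoothness lemma guarantees we remain inside $\Gl[1]$, is where the argument has to be careful; the $i=1$ half is immediate from the very definition of the relation.
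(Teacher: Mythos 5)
Your setup and your $i=1$ case are exactly the paper's argument: unwind the exponential relation of $\Gl[1]$, fix $(g,g')\in S_{\reals\To\reals}$, and instantiate it at the test plot $p=\id[\RR]$, so that $x\mapsto g'(x)(x,1)$ is forced to be the dual-numbers curve of $\psi(x)=g(x)(x)$; this part is fine. The gap is the $i=2$ case, which you do not carry out but reduce to ``a chain-rule check'' that you assert will close. That check is where all the content lies, and as you have set it up it does not go through. After substituting $\pi_1(g'(x)(t,0))=g(x)(t)$ (correct, via constant test plots), the second component of $(\sort\, g'(x))(x,1)$ is $\nabla\bigl(t\mapsto g(x)(t)\bigr)(x)$: the differentiation inside $\sort$ is applied to the single function $g'(x)$ with $x$ fixed, so it only sees the dependence on the argument $t$, never the dependence of the curve $x\mapsto g(x)$ on $x$. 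What $S_{\reals}$ demands is $\nabla\psi(x)=\nabla\bigl(x\mapsto g(x)(x)\bigr)(x)$, which by the chain rule contains the extra term $\nabla\bigl(x\mapsto g(x)(t)\bigr)\big|_{t=x}$. These disagree for legitimate related pairs: take $g(x)=(t\mapsto x)$ and $g'(x)=((a,b)\mapsto(x,1))$; this pair lies in $\semgl{\reals\To\reals}$, since for every $(\delta,(\delta,\nabla\delta))\in S_{\reals}$ both curves in the defining condition are $x\mapsto (x,1)$, yet $t\mapsto\pi_1(g'(x)(t,0))$ is constant, so $\ev'_2(x)(g'(x))=(\sort\, g'(x))(x,1)=(x,0)$, while $(\psi(x),\nabla\psi(x))=(x,1)$. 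The map $\sort$ discards $\pi_2\,g'(x)(\cdot,0)$, which is precisely where the relation stores the missing term, so no amount of chain-rule bookkeeping completes your plan as stated.

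For comparison, the paper's own proof of the $i=2$ case reduces the goal to the first-component identity $\pi_1(\gamma'(x)(x,0))=\gamma(x)(x)$ (proved by constant $\delta$'s) and is silent on the second components; the step you defer is exactly the step the paper elides, and the computation above shows it is a genuine obstruction for $\ev'_2$ as defined via $\sort$. (The non-uniqueness phenomenon itself survives with a different witness: for instance $a\mapsto\bigl(f\mapsto f(a,1)+(0,\pi_1 f(a,1)-\pi_1 f(a,0))\bigr)$ agrees with $\ev'_1$ on every $f$ arising as $\gamma'(x)$ of a related pair, since there $\pi_1 f(a,b)$ is independent of $b$, but differs from $\ev'_1$ on $\swap$.) So: your $i=1$ half is correct and identical to the paper's; your $i=2$ half is not a proof, and the missing verification is not a routine check but the point at which the argument, in the form you (and the paper's own sketch) propose, breaks down.
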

\begin{proof}
    By definition of $\semgl{-}$, we need to show that for any $(\gamma,\gamma')\in \semgl{\reals\To\reals}$, we have
$(x\mapsto \ev(x)(\gamma(x)), x\mapsto \ev'_i(x)(\gamma'(x)))\in\semgl{\reals}$.
This means that we need to show that for $i=1,2$
\begin{align*}
x\mapsto \ev'_i(x)(\gamma'(x))=(x\mapsto \ev(x)(\gamma(x)),\nabla(x\mapsto \ev(x)(\gamma(x))))
\end{align*}
Unrolling further, this means we need to show that for any $\gamma:\RR\to\RR \Rightarrow\RR$ and $\gamma':\RR\to\RR\times\RR \Rightarrow\RR\times\RR$ such that for any $(\delta,\delta')\in\semgl{\reals}$ (which means that 
$\delta:\RR\to\RR$ and
$\delta'=(\delta,\nabla \delta)$), we have 
\begin{align*}
\Big(r\mapsto \gamma(r)(\delta(r)),r\mapsto \gamma'(r)(\delta'(r))\Big)\in
\semgl{\reals}
\end{align*}
The latter condition means that we need to show that 
\begin{align*}
r\mapsto \gamma'(r)(\delta(r),\nabla \delta(r))=(r\mapsto \gamma(r)(\delta(r)), \nabla(r\mapsto \gamma(r)(\delta(r))))
\end{align*}

For $\ev'_1$, we need to show that
\begin{align*}
x\mapsto \ev'_1(x)(\gamma'(x))=(x\mapsto \ev(x)(\gamma(x)),\\\nabla(x\mapsto \ev(x)(\gamma(x))))
\end{align*}
After inlining the definition of $\ev'_1$, we need to show that
\begin{align*}
x\mapsto \gamma'(x)(x,1)=(x\mapsto \gamma(x)(x),\nabla(x\mapsto \gamma(x)(x)))
\end{align*}
This follows from the assumption by choosing $\delta(r)=r$, and hence $\delta'(r)=(r,1)$.

For $\ev'_2$, after inlining the definition of $\sort$, we need to show that
\begin{align*}
x\mapsto (&\pi_1(\gamma'(x)(x,0)),\pi_2(\gamma'(x)(x,0))+\nabla((-,0);\gamma'(x);\pi_1)(x))\\
&=(x\mapsto \gamma(x)(x),\nabla(x\mapsto \gamma(x)(x))).
\end{align*}
The first components agree because the assumption $(\gamma,\gamma')\in \semgl{\reals\To\reals}$, applied to the constant curve $\delta_a(x)=a$, gives
$\pi_1(\gamma'(x)(a,0))=\gamma(x)(a)$ for all $a$ and $x$.
For the second components, the same constant-curve argument gives
$\pi_2(\gamma'(x)(a,0))=\frac{\partial}{\partial x}\gamma(x)(a)$, while differentiating
$\pi_1(\gamma'(x)(a,0))=\gamma(x)(a)$ with respect to $a$ gives
$\nabla((-,0);\gamma'(x);\pi_1)(x)=\frac{\partial}{\partial a}\gamma(x)(a)|_{a=x}$.
The chain rule gives the desired derivative of $x\mapsto \gamma(x)(x)$.
\end{proof}
However, $\ev'_1\neq \ev'_2$ since $\ev'_1(a)(\swap)=(1,a)$ and $\ev'_2(a)(\swap)=(0,a)$.
This shows that $\ev'_1$ and $\ev'_2$ are both ``valid'' semantic derivatives of the evaluation function $(\ev)$ 
in our framework.
In particular, it shows that semantic derivatives of higher-order functions might not be unique. 
Our macro $\Dsynsymbol$ will return $\ev'_1$, but everything would still work just as well if it instead returned $\ev'_2$. 

\subsection{Canonical derivatives of higher-order functions?}
Differential geometers and analysts have long pursued notions of a canonical derivative of various 
higher-order functions arising, for example, in the calculus of variations and in the study of 
infinite-dimensional Lie groups~\cite{kriegl1997convenient}.
Uncontroversial notions of derivative exist on various infinite-dimensional spaces of functions
that form suitable (so-called convenient) vector spaces, or manifolds locally modelled on such vector spaces.
At the level of generality of diffeological spaces, however, various natural notions of derivative that 
coincide in convenient vector spaces start to diverge, and it is no longer clear what the best definition 
of a derivative is~\cite{christensen2014tangent}.
Another, fundamentally different setting that defines canonical derivatives of many
higher-order functions is given by 
synthetic differential geometry~\cite{kock2006synthetic}.

While derivatives of higher-order functions are of deep interest and have rightly been studied 
in their own right in differential geometry, 
we believe the situation is subtly different in computer science:
\begin{enumerate}
\item In programming applications, we use higher-order programs only to construct the first-order 
functions that we ultimately run and differentiate.
Automatic differentiation methods can exploit this freedom: derivatives of higher-order functions only matter insofar as they can be used to construct the correct derivatives of first-order functions,
so we can choose a simple and inexpensive notion of derivative among the valid options.
Thus, the fact that our semantics does not commit to a single notion of derivative of higher-order 
functions can be seen as a \emph{feature rather than a bug} that models the pragmatics of programming practice.
\item While function spaces in differential geometry are typically infinite-dimensional objects that are 
unsuitable for representation in the finite memory of a computer,
higher-order functions as used in programming are much more restricted: all they can do is call 
a function on finitely many arguments and analyse the function outputs.
Thus, function types in programming can be thought of as (locally) finite-dimensional.
If a canonical notion of automatic derivative of a higher-order function is desired,
it may be worth pursuing a more intensional notion of semantics such as one based on game semantics.
Such intensional techniques could capture the computational notion of a higher-order function
better than our current (and other) extensional semantics using existing techniques from differential geometry.
We hope that an exploration of such techniques might lead to an appropriate notion of computable derivative, even for higher-order functions.
\end{enumerate}

\section{Discussion and future work}\label{sec:conclusion}

\subsection{Summary}
We have shown that diffeological spaces provide a denotational semantics for a higher-order language with variants and inductive types (Sections~\ref{sec:semantics} and~\ref{sec:extended-language}). We have used this to show the correctness of simple forward-mode AD translations for calculating higher derivatives (Theorem~\ref{thm:fwd-cor-basic}, Theorem~\ref{thm:fwd-cor-full}).

The structure of our elementary correctness argument for Theorem~\ref{thm:fwd-cor-basic} is a typical logical relations proof over a denotational semantics.
As explained in Section~\ref{sec:correctness}, this can equivalently be understood as a denotational semantics in a new kind of space obtained by categorical gluing.

Overall, then, there are two logical relations at play. One is in diffeological spaces, which ensures that all definable functions are smooth. The other is in the correctness proof (equivalently in the categorical gluing), which explicitly tracks both the derivative of each function and the syntactic AD even at higher types. 

\subsection{Connection to the state of the art in AD implementation}
As is common in denotational semantics research, we have focused here on an idealized language and simple translations to illustrate the main aspects of the method. There are a number of points where our approach is simplistic compared to the current state of the art, as we now explain.

\subsubsection{Representation of vectors}
\label{sub:discussion}
In our examples we have treated $n$-vectors as tuples of length~$n$. This style of programming does not scale to large~$n$. A better solution would be to use array types, following~\cite{shaikhha2019efficient}.
As demonstrated by~\cite{chinjensem2020formalized}, our categorical semantics and correctness proofs extend straightforwardly to cover them, in a way analogous to our treatment of lists.
In fact, that work formalizes our correctness arguments in Coq and extends them to apply 
to the system of~\cite{shaikhha2019efficient}.

\subsubsection{Efficient forward-mode AD}
For AD to be useful, it must be fast.
The $(1,1)$-AD macro $\Dsynsymbol[(1,1)]$ that we use is the basis of an efficient AD library~\cite{shaikhha2019efficient}. Numerous optimizations are needed, ranging from algebraic manipulations to partial evaluation and the use of an optimizing C compiler, but the resulting implementation performs well in experiments~\cite{shaikhha2019efficient}. The Coq formalization~\cite{chinjensem2020formalized} validates some of these manipulations using a semantics similar to ours. 
We believe the implementation in~\cite{shaikhha2019efficient} can be extended to apply to the more general 
$(k,R)$-AD methods we described in this paper with minor changes.

\subsubsection{Reverse-mode and mixed-mode AD}
While forward-mode AD methods are useful, many applications require reverse-mode AD,
or even mixed-mode AD for efficiency.
In~\cite{huot2020correctness}, we described how our correctness proof applies to a continuation-based 
AD technique that closely resembles reverse-mode AD, but only has the correct complexity under a 
non-standard operational semantics~\cite{brunel2019backpropagation} (in particular, the linear factoring rule is crucial).
It remains to be seen whether this technique and its correctness proof can be adapted to yield 
genuine reverse-mode AD under a standard operational semantics.

Alternatively, by relying on a variation of our techniques, V\'ak\'ar~\cite{vakar2021reverse} gives a correctness proof of a rather different $(1,1)$ reverse-mode AD algorithm that stores the (primal, adjoint)-vector pair 
as a struct-of-arrays rather than as an array-of-structs.
Future work could explore extending its analysis to $(k,R)$ reverse-mode AD and mixed-mode AD for efficiently computing higher-order derivatives.

\subsubsection{Other language features}
\label{sub:summary_and_future_work}
The idealized languages considered here do not cover several useful language constructs, including partial functions (such as division), partially smooth functions (such as ReLU), phenomena such as iteration and recursion, and probabilities.  
Recent work by V\'ak\'ar~\cite{vakar2020denotational} shows how our analysis of $(1,1)$-AD extends to apply to partiality, iteration, and recursion. 
This development is orthogonal to the one in this paper: its methods combine directly with those in the 
present paper to analyze $(k,R)$-forward-mode AD of recursive programs.
We leave the analysis of AD for probabilistic programs for future work.

\section*{Acknowledgments}
  \noindent We have benefited from discussing this work with many people, including M.~Betancourt, B. Carpenter, O.~Kammar, C.~Mak, L.~Ong, B.~Pearlmutter, G.~Plotkin, A.~Shaikhha, J.~Sigal, and others. 
  In the course of this work, MV was also employed at Oxford (EPSRC Project EP/M023974/1) and at Columbia on the Stan development team.
  This project has also received funding from the European Union’s Horizon 2020 research and innovation 
  programme under the Marie Skłodowska-Curie grant agreement No. 895827;
  NWO Veni grant number VI.Veni.202.124;
  a Royal Society University Research Fellowship; the ERC BLAST grant;
  the Air Force Office of Scientific Research under award number FA9550-21-1-0038; and a Facebook Research Award.

\clearpage
\bibliographystyle{alphaurl}
\bibliography{bibliography}

@article{baez2011convenient,
  title={Convenient categories of smooth spaces},
  author={Baez, John and Hoffnung, Alexander},
  journal={Transactions of the American Mathematical Society},
  volume={363},
  number={11},
  pages={5789--5825},
  year={2011}
}

@book{iglesias2013diffeology,
  title={Diffeology},
  author={Iglesias-Zemmour, Patrick},
  OPTvolume={185},
  year={2013},
  publisher={American Mathematical Soc.}
}

@article{christensen2014tangent,
  title={Tangent spaces and tangent bundles for diffeological spaces},
  author={Christensen, J Daniel and Wu, Enxin},
  journal={arXiv preprint arXiv:1411.5425},
  year={2014}
}

@article{wang2018demystifying,
  title={Demystifying differentiable programming: Shift/reset the penultimate backpropagator},
  author={Wang, Fei and Wu, Xilun and Essertel, Gregory and Decker, James and Rompf, Tiark},
  journal={Proceedings of the ACM on Programming Languages},
  volume={3},
  number={ICFP},
  OPTpages={97},
  year={2019},
  publisher={ACM}
}

@article{betancourt2018geometric,
  title={A geometric theory of higher-order automatic differentiation},
  author={Betancourt, Michael},
  journal={arXiv preprint arXiv:1812.11592},
  year={2018}
}

@article{elliott2018simple,
  title={The simple essence of automatic differentiation},
  author={Elliott, Conal},
  journal={Proceedings of the ACM on Programming Languages},
  volume={2},
  number={ICFP},
  pages={70},
  year={2018},
  publisher={ACM}
}

@article{pearlmutter2008reverse,
  title={Reverse-mode {AD} in a functional framework: Lambda the ultimate backpropagator},
  author={Pearlmutter, Barak A and Siskind, Jeffrey Mark},
  journal={ACM Transactions on Programming Languages and Systems (TOPLAS)},
  volume={30},
  number={2},
  pages={7},
  year={2008},
  publisher={ACM}
}

@inproceedings{abadi2016tensorflow,
  title={Tensorflow: A system for large-scale machine learning},
author={
    Mart\'{i}n~Abadi and
    Ashish~Agarwal and
    Paul~Barham and
    Eugene~Brevdo and
    Zhifeng~Chen and
    Craig~Citro and
    Greg~S.~Corrado and
    Andy~Davis and
    Jeffrey~Dean and
    Matthieu~Devin and
    Sanjay~Ghemawat and
    Ian~Goodfellow and
    Andrew~Harp and
    Geoffrey~Irving and
    Michael~Isard and
    Yangqing Jia and
    Rafal~Jozefowicz and
    Lukasz~Kaiser and
    Manjunath~Kudlur and
    Josh~Levenberg and
    Dandelion~Man\'{e} and
    Rajat~Monga and
    Sherry~Moore and
    Derek~Murray and
    Chris~Olah and
    Mike~Schuster and
    Jonathon~Shlens and
    Benoit~Steiner and
    Ilya~Sutskever and
    Kunal~Talwar and
    Paul~Tucker and
    Vincent~Vanhoucke and
    Vijay~Vasudevan and
    Fernanda~Vi\'{e}gas and
    Oriol~Vinyals and
    Pete~Warden and
    Martin~Wattenberg and
    Martin~Wicke and
    Yuan~Yu and
    Xiaoqiang~Zheng},
  booktitle={12th {U}{S}{E}{N}{I}{X} Symposium on Operating Systems Design and Implementation ({O}{S}{D}{I} 16)},
  pages={265--283},
  year=2016
}

@article{carpenter2015stan,
  title={The {S}tan math library: Reverse-mode automatic differentiation in {C}++},
  author={Carpenter, Bob and Hoffman, Matthew D and Brubaker, Marcus and Lee, Daniel and Li, Peter and Betancourt, Michael},
  journal={arXiv preprint arXiv:1509.07164},
  year={2015}
}

@article{qian1999momentum,
  title={On the momentum term in gradient descent learning algorithms},
  author={Qian, Ning},
  journal={Neural networks},
  volume={12},
  number={1},
  pages={145--151},
  year={1999},
  publisher={Elsevier}
}

@article{kingma2014adam,
  title={Adam: A method for stochastic optimization},
  author={Kingma, Diederik P and Ba, Jimmy},
  journal={arXiv preprint arXiv:1412.6980},
  year={2014}
}

@article{duchi2011adaptive,
  title={Adaptive subgradient methods for online learning and stochastic optimization},
  author={Duchi, John and Hazan, Elad and Singer, Yoram},
  journal={Journal of Machine Learning Research},
  volume={12},
  number={Jul},
  pages={2121--2159},
  year={2011}
}

@article{liu1989limited,
  title={On the limited memory {B}{F}{G}{S} method for large scale optimization},
  author={Liu, Dong C and Nocedal, Jorge},
  journal={Mathematical programming},
  volume={45},
  number={1-3},
  pages={503--528},
  year={1989},
  publisher={Springer}
}

@article{robbins1951stochastic,
  title={A stochastic approximation method},
  author={Robbins, Herbert and Monro, Sutton},
  journal={The Annals of Mathematical Statistics},
  pages={400--407},
  year={1951},
  publisher={JSTOR}
}

@article{kiefer1952stochastic,
  title={Stochastic estimation of the maximum of a regression function},
  author={Kiefer, Jack and Wolfowitz, Jacob and others},
  journal={The Annals of Mathematical Statistics},
  volume={23},
  number={3},
  pages={462--466},
  year={1952},
  publisher={Institute of Mathematical Statistics}
}

@article{shaikhha2019efficient,
  title={Efficient differentiable programming in a functional array-processing language},
  author={Shaikhha, Amir and Fitzgibbon, Andrew and Vytiniotis, Dimitrios and Peyton Jones, Simon},
  journal={Proceedings of the ACM on Programming Languages},
  volume=3,
  number={ICFP},
  pages=97,
  year=2019,
  publisher={ACM}
}

@inproceedings{fong2019backprop,
  title={Backprop as functor: A compositional perspective on supervised learning},
  author={Fong, Brendan and Spivak, David and Tuy{\'e}ras, R{\'e}my},
  booktitle={2019 34th Annual ACM/IEEE Symposium on Logic in Computer Science (LICS)},
  pages={1--13},
  year={2019},
  organization={IEEE}
}

@Unpublished{plotkin-invited-talk,
  author = 	 {Gordon D Plotkin},
  title = 	 {Some Principles of Differential Programming Languages},
  note = 	 {Invited talk, POPL 2018},
  OPTkey = 	 {},
  OPTmonth = 	 {},
  year = 	 {2018},
  OPTannote = 	 {}
}

@article{ehrhard2003differential,
  title={The differential lambda-calculus},
  author={Ehrhard, Thomas and Regnier, Laurent},
  journal={Theoretical Computer Science},
  volume={309},
  number={1-3},
  pages={1--41},
  year={2003},
  publisher={Elsevier}
}

@book{kock2006synthetic,
  title={Synthetic differential geometry},
  author={Kock, Anders},
  volume={333},
  year={2006},
  publisher={Cambridge University Press}
}

@incollection{souriau1980groupes,
  title={Groupes diff{\'e}rentiels},
  author={Souriau, Jean-Marie},
  booktitle={Differential geometrical methods in mathematical physics},
  pages={91--128},
  year={1980},
  publisher={Springer}
}

@article{hoffman2014no,
  title={The {N}o-{U}-{T}urn sampler: adaptively setting path lengths in {H}amiltonian {M}onte {C}arlo.},
  author={Hoffman, Matthew D and Gelman, Andrew},
  journal={Journal of Machine Learning Research},
  volume={15},
  number={1},
  pages={1593--1623},
  year={2014}
}

@InCollection{neal2011mcmc,
  title={M{C}{M}{C} using {H}amiltonian dynamics},
  author={Neal, Radford M},
  booktitle={Handbook of {M}arkov {C}hain {M}onte {C}arlo},
  chapter={5},
  year={2011},
  publisher={Chapman \& Hall / CRC Press}
}

@article{kucukelbir2017automatic,
  title={Automatic differentiation variational inference},
  author={Kucukelbir, Alp and Tran, Dustin and Ranganath, Rajesh and Gelman, Andrew and Blei, David M},
  journal={The Journal of Machine Learning Research},
  volume={18},
  number={1},
  pages={430--474},
  year={2017},
  publisher={JMLR. org}
}

@inproceedings{brunel2019backpropagation,
  title={Backpropagation in the Simply Typed Lambda-calculus with Linear Negation},
  author={Brunel, Alois and Mazza, Damiano and Pagani, Michele},
  booktitle={Proc.~POPL 2020},
  year={2020}
}

@inproceedings{mitchell1992notes,
  title={Notes on sconing and relators},
  author={Mitchell, John C and Scedrov, Andre},
  booktitle={International Workshop on Computer Science Logic},
  pages={352--378},
  year={1992},
  organization={Springer}
}

@techreport{pitts1995categorical,
  title={Categorical logic},
  author={Pitts, Andrew M},
  year={1995},
  institution={University of Cambridge, Computer Laboratory}
}

@InProceedings{johnstone-lack-sobocinski,
  author = 	 {Peter T Johnstone and Stephen Lack and P Sobocinski},
  title = 	 {Quasitoposes, Quasiadhesive Categories and {A}rtin Glueing},
  OPTcrossref =  {},
  OPTkey = 	 {},
  booktitle = {Proc.~CALCO 2007},
  year = 	 {2007},
  OPTeditor = 	 {},
  OPTvolume = 	 {},
  OPTnumber = 	 {},
  OPTseries = 	 {},
  OPTpages = 	 {},
  OPTmonth = 	 {},
  OPTaddress = 	 {},
  OPTorganization = {},
  OPTpublisher = {},
  OPTnote = 	 {},
  OPTannote = 	 {}
}

@incollection{lee2013smooth,
  title={Smooth manifolds},
  author={Lee, John M},
  booktitle={Introduction to Smooth Manifolds},
  pages={1--31},
  year={2013},
  publisher={Springer}
}

@Article{smootheology,
  author = 	 {Andrew Stacey},
  title = 	 {Comparative smootheology},
  journal = 	 {Theory Appl. Categ.},
  year = 	 {2011},
  OPTkey = 	 {},
  volume = 	 {25},
  number = 	 {4},
  pages = 	 {64--117},
  OPTmonth = 	 {},
  OPTnote = 	 {},
  OPTannote = 	 {}
}

@InProceedings{gallagher-sdg,
  author = 	 {Geoff Cruttwell and Jonathan Gallagher and Ben MacAdam},
  title = 	 {Towards formalizing and extending differential programming using tangent categories},
  OPTcrossref =  {},
  OPTkey = 	 {},
  booktitle = {Proc.~ACT 2019},
  year = 	 {2019},
  OPTeditor = 	 {},
  OPTvolume = 	 {},
  OPTnumber = 	 {},
  OPTseries = 	 {},
  OPTpages = 	 {},
  OPTmonth = 	 {},
  OPTaddress = 	 {},
  OPTorganization = {},
  OPTpublisher = {},
  OPTnote = 	 {},
  OPTannote = 	 {}
}

@InProceedings{bcdg-open-logical-relations,
  author = 	 {Gilles Barthe and Rapha\"elle Crubill\'e and Ugo Dal~Lago and Francesco Gavazzo},
  title = 	 {On the Versatility of Open Logical Relations:
Continuity, Automatic Differentiation, and a Containment Theorem},
  OPTcrossref =  {},
  OPTkey = 	 {},
  booktitle = {Proc.~ESOP 2020},
  year = 	 {2020},
  OPTeditor = 	 {},
  OPTvolume = 	 {},
  OPTnumber = 	 {},
  OPTseries = 	 {},
  OPTpages = 	 {},
  OPTmonth = 	 {},
  OPTaddress = 	 {},
  OPTorganization = {},
  publisher = {Springer},
  note = 	 {To appear},
  OPTannote = 	 {}
}

@InProceedings{abadi-plotkin2020,
  author = 	 {Mart\'in Abadi and Gordon D Plotkin},
  title = 	 {A Simple Differentiable Programming Language},
  OPTcrossref =  {},
  OPTkey = 	 {},
  booktitle = {Proc.~POPL 2020},
  year = 	 {2020},
  OPTeditor = 	 {},
  OPTvolume = 	 {},
  OPTnumber = 	 {},
  OPTseries = 	 {},
  OPTpages = 	 {},
  OPTmonth = 	 {},
  OPTaddress = 	 {},
  OPTorganization = {},
  publisher = {ACM},
  OPTnote = 	 {},
  OPTannote = 	 {}
}

@InProceedings{rev-deriv-cat2020,
  author = 	 {J. Robin B. Cockett and Geoff S. H. Cruttwell and Jonathan Gallagher and Jean-Simon Pacaud Lemay and Benjamin MacAdam and Gordon D. Plotkin and Dorette Pronk},
  title = 	 {Reverse Derivative Categories},
  OPTcrossref =  {},
  OPTkey = 	 {},
  booktitle = {Proc.~CSL 2020},
  year = 	 {2020},
  OPTeditor = 	 {},
  OPTvolume = 	 {},
  OPTnumber = 	 {},
  OPTseries = 	 {},
  OPTpages = 	 {},
  OPTmonth = 	 {},
  OPTaddress = 	 {},
  OPTorganization = {},
  OPTpublisher = {},
  OPTnote = 	 {},
  OPTannote = 	 {}
}

@InProceedings{Manzyuk2012,
  author = 	 {Oleksandr Manzyuk},
  title = 	 {A Simply Typed $\lambda$-Calculus of Forward Automatic Differentiation},
  OPTcrossref =  {},
  OPTkey = 	 {},
  booktitle = {Proc.~MFPS 2012},
  year = 	 {2012},
  OPTeditor = 	 {},
  OPTvolume = 	 {},
  OPTnumber = 	 {},
  OPTseries = 	 {},
  OPTpages = 	 {},
  OPTmonth = 	 {},
  OPTaddress = 	 {},
  OPTorganization = {},
  OPTpublisher = {},
  OPTnote = 	 {},
  OPTannote = 	 {}
}

@Unpublished{mak-ong2020,
  author = 	 {Carol Mak and Luke Ong},
  title = 	 {A Differential-form Pullback Programming Language for Higher-order Reverse-mode Automatic Differentiation},
  note = 	 {arxiv:2002.08241},
  OPTkey = 	 {},
  OPTmonth = 	 {},
  year = 	 {2020},
  OPTannote = 	 {}
}

@Misc{hsv-fossacs2020,
  author = 	 {Mathieu Huot and Sam Staton and Matthijs Vákár},
  title = 	 {Correctness of Automatic Differentiation via Diffeologies and Categorical Gluing},
  howpublished = {Full version},
  OPTmonth = 	 {},
  year = 	 {2020},
  note = 	 {arxiv:2001.02209},
  OPTannote = 	 {}
}

@article{kolar1999natural,
  title={Natural operations in differential geometry},
  author={Kol{\'a}r, Ivan and Slov{\'a}k, Jan and Michor, Peter W},
  year={1999}
}

@book{kriegl1997convenient,
  title={The convenient setting of global analysis},
  author={Kriegl, Andreas and Michor, Peter W},
  volume={53},
  year={1997},
  publisher={American Mathematical Soc.}
}

@article{griewank2000evaluating,
  title={Evaluating higher derivative tensors by forward propagation of univariate Taylor series},
  author={Griewank, Andreas and Utke, Jean and Walther, Andrea},
  journal={Mathematics of Computation},
  volume={69},
  number={231},
  pages={1117--1130},
  year={2000}
}

@inproceedings{huot2020correctness,
  title={Correctness of Automatic Differentiation via Diffeologies and Categorical Gluing.},
  author={Huot, Mathieu and Staton, Sam and V{\'a}k{\'a}r, Matthijs},
  booktitle={FoSSaCS},
  pages={319--338},
  year={2020}
}

@article{merker2004four,
  title={Four explicit formulas for the prolongations of an infinitesimal Lie symmetry and multivariate {F}aa di {B}runo formulas},
  author={Merker, Joel},
  journal={arXiv preprint math/0411650},
  year={2004}
}

@article{constantine1996multivariate,
  title={A multivariate {F}aa di {B}runo formula with applications},
  author={Constantine, G and Savits, T},
  journal={Transactions of the American Mathematical Society},
  volume={348},
  number={2},
  pages={503--520},
  year={1996}
}

@article{encinas2003short,
  title={A short proof of the generalized {F}a{\`a} di {B}runo's formula},
  author={Encinas, L Hern{\'a}ndez and Masque, J Munoz},
  journal={Applied Mathematics Letters},
  volume={16},
  number={6},
  pages={975--979},
  year={2003},
  publisher={Elsevier}
}

@article{savits2006some,
  title={Some statistical applications of {F}aa di {B}runo},
  author={Savits, Thomas H},
  journal={Journal of Multivariate Analysis},
  volume={97},
  number={10},
  pages={2131--2140},
  year={2006},
  publisher={Elsevier}
}

@article{chinjensem2020formalized,
  title={Formalized Correctness Proofs of Automatic Differentiation in {C}oq},
  author={Chin Jen Sem, Curtis},
  journal={Master's Thesis, Utrecht University},
  note = {Thesis: https://dspace.library.uu.nl/handle/1874/400790. Coq code: https://github.com/crtschin/thesis},
  year={2020}
}

@inproceedings{van2018automatic,
  title={Automatic differentiation in {M}{L}: Where we are and where we should be going},
  author={Van Merri{\"e}nboer, Bart and Breuleux, Olivier and Bergeron, Arnaud and Lamblin, Pascal},
  booktitle={Advances in Neural Information Processing Systems},
  pages={8757--8767},
  year={2018}
}

@inproceedings{martens2010deep,
  title={Deep learning via {H}essian-free optimization.},
  author={Martens, James},
  booktitle={ICML},
  volume={27},
  pages={735--742},
  year={2010}
}

@inproceedings{lee2020correctness,
  title={On Correctness of Automatic Differentiation for Non-Differentiable Functions},
  author={Lee, Wonyeol and Yu, Hangyeol and Rival, Xavier and Yang, Hongseok},
  booktitle={Advances in Neural Information Processing Systems},
  year={2020}
}

@article{sherman2020lambda_s,
  title={{$\lambda_S $}: Computable semantics for differentiable programming with higher-order functions and datatypes},
  author={Sherman, Benjamin and Michel, Jesse and Carbin, Michael},
  journal={arXiv preprint arXiv:2007.08017},
  year={2020}
}

@article{bernstein2020differentiating,
  title={Differentiating a Tensor Language},
  author={Bernstein, Gilbert and Mara, Michael and Li, Tzu-Mao and Maclaurin, Dougal and Ragan-Kelley, Jonathan},
  journal={arXiv preprint arXiv:2008.11256},
  year={2020}
}

@article{mazza2020automatic,
  author    = {Damiano Mazza and
               Michele Pagani},
  title     = {Automatic differentiation in {PCF}},
  journal   = {Proc. {ACM} Program. Lang.},
  volume    = {5},
  number    = {{POPL}},
  pages     = {1--27},
  year      = {2021},
  url       = {https://doi.org/10.1145/3434309},
  doi       = {10.1145/3434309},
  timestamp = {Wed, 17 Feb 2021 08:54:10 +0100},
  biburl    = {https://dblp.org/rec/journals/pacmpl/MazzaP21.bib},
  bibsource = {dblp computer science bibliography, https://dblp.org}
}

@inproceedings{zhu2020principles,
  author    = {Shaopeng Zhu and
               Shih{-}Han Hung and
               Shouvanik Chakrabarti and
               Xiaodi Wu},
  OPTeditor    = {Alastair F. Donaldson and
               Emina Torlak},
  title     = {On the principles of differentiable quantum programming languages},
  booktitle = {Proceedings of the 41st {ACM} {SIGPLAN} International Conference on
               Programming Language Design and Implementation, {PLDI} 2020, London,
               UK, June 15-20, 2020},
  pages     = {272--285},
  publisher = {{ACM}},
  year      = {2020},
  url       = {https://doi.org/10.1145/3385412.3386011},
  doi       = {10.1145/3385412.3386011},
  timestamp = {Tue, 09 Jun 2020 13:52:54 +0200},
  biburl    = {https://dblp.org/rec/conf/pldi/ZhuHCW20.bib},
  bibsource = {dblp computer science bibliography, https://dblp.org}
}

@article{bettencourt2019taylor,
  title={Taylor-Mode Automatic Differentiation for Higher-Order Derivatives in {J}{A}{X}},
  author={Bettencourt, Jesse and Johnson, Matthew J and Duvenaud, David},
  year={2019}
}

@article{frostig2018compiling,
  title={Compiling machine learning programs via high-level tracing},
  author={Frostig, Roy and Johnson, Matthew James and Leary, Chris},
  journal={Systems for Machine Learning},
  year={2018}
}

@article{paszke2017automatic,
  title={Automatic differentiation in pytorch},
  author={Paszke, Adam and Gross, Sam and Chintala, Soumith and Chanan, Gregory and Yang, Edward and DeVito, Zachary and Lin, Zeming and Desmaison, Alban and Antiga, Luca and Lerer, Adam},
  year={2017}
}

@inproceedings{laue2020simple,
  title={A Simple and Efficient Tensor Calculus.},
  author={Laue, S{\"o}ren and Mitterreiter, Matthias and Giesen, Joachim},
  booktitle={AAAI},
  pages={4527--4534},
  year={2020}
}

@article{laue2018computing,
  title={Computing higher order derivatives of matrix and tensor expressions},
  author={Laue, S{\"o}ren and Mitterreiter, Matthias and Giesen, Joachim},
  journal={Advances in Neural Information Processing Systems},
  volume={31},
  pages={2750--2759},
  year={2018}
}

@inproceedings{chen2018neural,
  title={Neural ordinary differential equations},
  author={Chen, Ricky TQ and Rubanova, Yulia and Bettencourt, Jesse and Duvenaud, David K},
  booktitle={Advances in Neural Information Processing Systems},
  pages={6571--6583},
  year={2018}
}

@article{knoll2004jacobian,
  title={Jacobian-free {N}ewton--{K}rylov methods: a survey of approaches and applications},
  author={Knoll, Dana A and Keyes, David E},
  journal={Journal of Computational Physics},
  volume={193},
  number={2},
  pages={357--397},
  year={2004},
  publisher={Elsevier}
}

@book{amari2012differential,
  title={Differential-geometrical methods in statistics},
  author={Amari, Shun-ichi},
  volume={28},
  year={2012},
  publisher={Springer Science \& Business Media}
}

@incollection{jacobsrutten2011,
  title={An introduction to (co)algebras and (co)induction},
  author={Bart Jacobs and JMMM Rutten},
  booktitle={Advanced Topics in Bisimulation and Coinduction},
  pages={38--99},
  year={2011},
  publisher={CUP}
}

@techreport{bendtsen1996fadbad,
  title={FADBAD, a flexible {C}++ package for automatic differentiation},
  author={Bendtsen, Claus and Stauning, Ole},
  year={1996},
  institution={Technical Report IMM--REP--1996--17, Department of Mathematical Modelling, Technical University of Denmark, Lyngby}
}

@article{bendtsen1997tadiff,
  title={TADIFF, a flexible C++ package for automatic differentiation},
  author={Bendtsen, Claus and Stauning, Ole},
  journal={TU of Denmark, Department of Mathematical Modelling, Lungby. Technical report IMM-REP-1997-07},
  year={1997}
}

@article{karczmarczuk2001functional,
  title={Functional differentiation of computer programs},
  author={Karczmarczuk, Jerzy},
  journal={Higher-Order and Symbolic Computation},
  volume={14},
  number={1},
  pages={35--57},
  year={2001},
  publisher={Springer}
}

@article{pearlmutter2007lazy,
  title={Lazy multivariate higher-order forward-mode AD},
  author={Pearlmutter, Barak A and Siskind, Jeffrey Mark},
  journal={ACM SIGPLAN Notices},
  volume={42},
  number={1},
  pages={155--160},
  year={2007},
  publisher={ACM New York, NY, USA}
}

@article{wang2016capitalizing,
  title={Capitalizing on live variables: new algorithms for efficient Hessian computation via automatic differentiation},
  author={Wang, Mu and Gebremedhin, Assefaw and Pothen, Alex},
  journal={Mathematical Programming Computation},
  volume={8},
  number={4},
  pages={393--433},
  year={2016},
  publisher={Springer}
}

@book{gelfand2000calculus,
  title={Calculus of variations},
  author={Gelfand, Izrail Moiseevitch and Silverman, Richard A and Silverman, Richard A},
  year={2000},
  publisher={Courier Corporation}
}

@article{cockett2011faa,
  title={The {F}aa di {B}runo construction},
  author={Cockett, J Robin B and Seely, Robert AG},
  journal={Theory and Applications of Categories},
  volume={25},
  number={15},
  pages={394--425},
  year={2011}
}

@inproceedings{vakar2021reverse,
  title={Reverse {A}{D} at Higher Types: Pure, Principled and Denotationally Correct.},
  author={V{\'a}k{\'a}r, Matthijs},
  booktitle={ESOP},
  pages={607--634},
  year={2021}
}

@article{vakar2020denotational,
  title={Denotational Correctness of Foward-Mode Automatic Differentiation for Iteration and Recursion},
  author={V{\'a}k{\'a}r, Matthijs},
  journal={arXiv preprint arXiv:2007.05282},
  year={2020}
}

@article{vakar2021chad,
  title={{CHAD}: Combinatory Homomorphic Automatic Differentiation},
  author={V{\'a}k{\'a}r, Matthijs and Smeding, Tom},
  journal={arXiv preprint arXiv:2103.15776},
  year={2021}
}

@article{lucatelli2021chad,
  title={{CHAD} for Expressive Total Languages},
  author={Lucatelli Nunes, Fernando and V{\'a}k{\'a}r, Matthijs},
  journal={arXiv e-prints},
  pages={arXiv--2110},
  year={2021}
}
\clearpage
\appendix
\section{$\CartSp$ and $\Man$ are not cartesian closed categories}
\label{sec:man_not_ccc}
\begin{lem}\label{lem:borsuk}
There is no continuous injection $\RR^{d+1}\to \RR^d$.
\end{lem}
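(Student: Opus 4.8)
The plan is to derive the statement from the Borsuk--Ulam theorem, exactly as hinted in the main text. Recall that the Borsuk--Ulam theorem asserts that for every continuous map $h : S^n \to \RR^n$ from the $n$-sphere, there is a point $x \in S^n$ with $h(x) = h(-x)$.

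First I would suppose, for contradiction, that $g : \RR^{d+1} \to \RR^d$ is a continuous injection. Consider the unit sphere $S^d = \{\, x \in \RR^{d+1} \mid |x| = 1 \,\} \subseteq \RR^{d+1}$ and restrict $g$ to it, obtaining a continuous map $h \defeq g|_{S^d} : S^d \to \RR^d$. Applying the Borsuk--Ulam theorem to $h$ yields a point $x \in S^d$ with $h(x) = h(-x)$, i.e.\ $g(x) = g(-x)$. But antipodal points on a sphere are distinct, $x \neq -x$, so this contradicts the injectivity of $g$. Hence no such $g$ exists.

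I expect no serious obstacle here: the only things to check are that the restriction of a continuous map to a subspace is continuous (immediate) and that $S^d$, sitting inside $\RR^{d+1}$, has dimension matching the hypothesis of Borsuk--Ulam for maps into $\RR^d$ (it does). The one genuine dependency is the Borsuk--Ulam theorem itself, which I would cite rather than reprove. Alternatively, one could argue via invariance of domain: composing $g$ with the inclusion $\RR^d \hookrightarrow \RR^{d+1}$ gives a continuous injection $\RR^{d+1}\to\RR^{d+1}$, whose image would then have to be open in $\RR^{d+1}$, yet it lies inside the hyperplane $\RR^d \times \{0\}$, which has empty interior --- a contradiction. Either route reduces the statement to a standard theorem of algebraic topology.
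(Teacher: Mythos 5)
Your argument is correct and is essentially the paper's own proof: restrict the supposed injection to $S^d\subseteq\RR^{d+1}$ and apply Borsuk--Ulam to contradict injectivity at an antipodal pair. The alternative route via invariance of domain is also fine, but the main argument matches the paper exactly.
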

\begin{proof}
If there were, it would restrict to a continuous injection
$S^d\to \RR^d$.
The Borsuk-Ulam theorem, however, tells us that every continuous 
$f:S^d\to\RR^d$ has some $x\in S^d$ such that $f(x)=f(-x)$, which is a
contradiction.
\end{proof}
For $n\geq1$, define the terms:
\[
\var_1:\reals,\ldots, \var_n:\reals\vdash \trm_n=
\fun{\var[2]}{\var_1 * \var[2] + \dots + \var_n*\var[2]^n}:\reals\To\reals
\]
Assuming that $\CartSp$ or $\Man$ is cartesian closed, observe that these get
interpreted as injective continuous functions (because they are smooth) $\RR^n\to \sem{\reals\To\reals}$ in $\CartSp$ and $\Man$.
\begin{thm}
$\CartSp$ is not cartesian closed.
\end{thm}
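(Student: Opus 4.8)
The plan is to derive a contradiction from the assumption that $\CartSp$ is cartesian closed, using the family of polynomial terms $\trm_n$ just defined together with Lemma~\ref{lem:borsuk}. First I would observe that if $\CartSp$ were cartesian closed, then the exponential object $\sem{\reals\To\reals}$ would be some cartesian space $\RR^p$ for a fixed $p$. For each $n$, the term $\var_0:\reals,\ldots,\var_n:\reals\vdash \trm_n:\reals\To\reals$ would then be interpreted as a smooth (hence continuous) map $\sem{\trm_n}:\RR^{n+1}\to\RR^p$. The key point is that this map is \emph{injective}: distinct coefficient tuples $(c_0,\ldots,c_n)\ne (d_0,\ldots,d_n)$ give rise to distinct polynomial functions $y\mapsto \sum_i c_i y^i$ and $y\mapsto\sum_i d_i y^i$ (two polynomials of degree $\le n$ agreeing everywhere must have equal coefficients), and since the exponential object in a cartesian closed category is a faithful representation of the morphisms $\reals\to\reals$ — more precisely, distinct morphisms $\RR\to\RR$ in $\CartSp$ are distinguished by the evaluation map, so their transposes (global points of the exponential) are distinct — the interpreted points of $\RR^p$ are distinct.

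Next I would take $n=p$, obtaining a continuous injection $\RR^{p+1}\to\RR^p$. This directly contradicts Lemma~\ref{lem:borsuk}, which states there is no continuous injection $\RR^{d+1}\to\RR^d$. Hence $\CartSp$ cannot be cartesian closed.

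The main obstacle — really the only subtle step — is establishing injectivity of $\sem{\trm_n}$ rigorously. It has two ingredients: (i) the elementary algebraic fact that a polynomial function of degree $\le n$ determines its coefficients, and (ii) the categorical fact that passing from a coefficient tuple to the corresponding global point of the exponential object is injective, i.e.\ that the transpose operation $\CartSp(\RR^{n+1}\times\RR,\RR)\to\CartSp(\RR^{n+1},\sem{\reals\To\reals})$ is injective, which holds in any cartesian closed category because it is a bijection. Combining (i) and (ii): if $\sem{\trm_n}(\vec c)=\sem{\trm_n}(\vec d)$ as points of the exponential, then transposing back shows the two polynomial maps $\RR\to\RR$ agree, so $\vec c=\vec d$. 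Everything else is routine. The same argument with $\Man$ in place of $\CartSp$ shows $\Man$ is not cartesian closed either, since the interpretation of $\trm_n$ in $\Man$ is likewise a smooth, hence continuous, injection and any manifold has a well-defined (locally constant) dimension, so a manifold structure on $\sem{\reals\To\reals}$ would again force a bounded target dimension.
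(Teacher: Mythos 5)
Your proof is correct and takes essentially the same approach as the paper: interpret the polynomial terms $\trm_n$ as continuous injections $\RR^{n+1}\to\RR^p$ into the putative exponential and contradict Lemma~\ref{lem:borsuk} by choosing $n\geq p$; you merely spell out the injectivity (coefficients determine the polynomial, plus the currying bijection) that the paper leaves as an observation. (Your closing aside about $\Man$ is a bit too quick, since the paper's generalized manifolds may have unbounded dimension across components and its $\Man$ proof needs a connectedness argument, but that is not part of the statement at hand.)
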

\begin{proof}
If $\CartSp$ were cartesian closed, we would have $\sem{\reals\To\reals}=\RR^n$
for some $n$.
Then, in particular, we would get a continuous 
injection $\sem{\trm_{n+1}}:\RR^{n+1}\to \RR^n$,
which contradicts Lemma \ref{lem:borsuk}.
\end{proof}
\begin{thm}
$\Man$ is not cartesian closed.
\end{thm}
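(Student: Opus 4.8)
The plan is to run essentially the same argument as for $\CartSp$ just above, but adapted to the fact that the exponential object $\sem{\reals\To\reals}$, were it to exist in $\Man$, need not be a single Euclidean space $\RR^n$: it is only a manifold, and in our generalised sense its charts may have unbounded (though locally finite) dimension. So I would start by assuming $\Man$ is cartesian closed and setting $M\defeq\sem{\reals\To\reals}$, the exponential of $\RR$ by $\RR$ in $\Man$, which is then a manifold. As already recorded in the excerpt, each term $\trm_n$ is interpreted as a morphism $\sem{\trm_n}\colon\RR^n\to M$ of $\Man$, hence a smooth --- in particular continuous --- map; it is injective because two coefficient tuples inducing the same polynomial function $\RR\to\RR$ must agree. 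I would additionally note that every $\sem{\trm_n}$ sends the origin $0$ to one and the same point $p\in M$, namely the constant function $\fun{y}{\underline 0}$.

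The key step is then to exploit a single chart. Since $M$ is a manifold, I would fix a chart $\phi_U\colon U\to\RR^d$ of $M$ with $p\in U$, where $d=n(U)\in\NN$ is the (finite) dimension of that particular chart. Now I feed in $\sem{\trm_{d+1}}\colon\RR^{d+1}\to M$: it is continuous and $\sem{\trm_{d+1}}(0)=p\in U$ with $U$ open, so $V\defeq\sem{\trm_{d+1}}^{-1}(U)$ is an open neighbourhood of $0$ in $\RR^{d+1}$, and postcomposing the restriction $\sem{\trm_{d+1}}|_V$ with the homeomorphism $\phi_U$ yields a continuous injection $V\to\RR^d$. Since $V$ contains an open ball around $0$, which is homeomorphic to $\RR^{d+1}$, precomposing with such a homeomorphism gives a continuous injection $\RR^{d+1}\to\RR^d$, contradicting Lemma~\ref{lem:borsuk}. (Equivalently, one could pick a small sphere $S^d\subseteq V$ and apply Borsuk-Ulam directly, exactly as in the proof of that lemma.)

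The only genuine subtlety --- and the step I would flag as the crux --- is the generalised notion of manifold in play: there is no global bound on the dimensions of charts of $M$, so, unlike in the $\CartSp$ case, one cannot write $M\cong\RR^n$ for a fixed $n$ and feed $\sem{\trm_{n+1}}$ straight into Lemma~\ref{lem:borsuk}. This is precisely why the whole family $\{\sem{\trm_n}\}_{n\in\NN}$ of injections $\RR^n\to M$ is needed rather than a single one: it lets me first localise to a chart around the convenient point $p$, read off that chart's dimension $d$, and only then deploy the injection out of $\RR^{d+1}$. The remaining ingredients --- smoothness implies continuity, injectivity of the coefficient-to-polynomial assignment, openness of preimages under continuous maps, and the local Euclidean structure of manifolds --- are all routine.
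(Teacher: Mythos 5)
Your proof is correct, and its skeleton is the paper's: use the polynomial families $\sem{\trm_n}$ as continuous injections into the putative exponential $M=\sem{\reals\To\reals}$, localise to a chart of some finite dimension $d$, and contradict Lemma~\ref{lem:borsuk}. The one place you genuinely diverge is how you pin down $d$. The paper notes $\iota_n;\sem{\trm_{n+1}}=\sem{\trm_n}$, so the images $A_n$ form an increasing chain whose union $A$ is connected; hence $A$ lies in a single connected component of $M$, which has a well-defined finite dimension $d$, and the ball/preimage argument is then run at $\sem{\trm_{d+1}}(0)$. You instead observe that every $\sem{\trm_n}$ sends the origin to the same point $p$ (the constant-zero function), pick one chart around $p$, read off its dimension $d$, and only then deploy $\sem{\trm_{d+1}}$. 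Your version is slightly more economical: it needs no connectedness bookkeeping and no appeal to the fact that dimension is constant on a connected component (which itself rests on the incompatibility of overlapping charts of different dimensions); all it uses is the common basepoint, which in the paper is implicit in the compatibility $\iota_n;\sem{\trm_{n+1}}=\sem{\trm_n}$. What the paper's formulation buys is a statement about the whole image $A$ sitting in one component of fixed dimension, but for the contradiction only the local dimension at a single point matters, exactly as you exploit. The remaining ingredients --- injectivity of the coefficient-to-polynomial assignment read off from points of the exponential, continuity of the $\sem{\trm_n}$, and passing from an open ball in $\RR^{d+1}$ to all of $\RR^{d+1}$ --- are handled at the same level of rigour as in the paper, so there is no gap.
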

\begin{proof}
Observe that we have $\iota_n:\RR^n\to \RR^{n+1};\,\tTuple{a_1,\ldots,a_n}\mapsto \tTuple{a_1,\ldots,a_n,0}$
and that $\iota_n;\sem{\trm_{n+1}}=\sem{\trm_n}$.
Let us write $A_n$ for the image of $\sem{\trm_n}$ and $A=\cup_{n\geq1}A_n$.
Then $A_n$ is connected because it is the continuous image of a connected set.
Similarly, $A$ is connected because it is the non-disjoint union of connected sets.
This means that $A$ lies in a single connected component of $\sem{\reals\To\reals}$,
which is a manifold with some finite dimension, say $d$.

Choose some $x\in\RR^{d+1}$ (say, $0$), some open $d$-ball $U$ around $\sem{\trm_{d+1}}(x)$,
and some open $d+1$-ball $V$ around $x$ in $\sem{\trm_{d+1}}^{-1}(U)$.
Then $\sem{\trm_{d+1}}$ restricts to a continuous injection from $V$ to $U$, or equivalently,
from $\RR^{d+1}$ to $\RR^d$, which contradicts Lemma \ref{lem:borsuk}.
\end{proof}

\end{document}